\theoremstyle{plain}
\newtheorem*{theorem*}{Theorem}
\newtheorem*{proposition*}{Proposition}
\theoremstyle{definition}
\newtheorem{problem}[theorem]{Problem}
\newtheorem*{definition*}{Definition}
\newcommand{\bigO}{\mathcal{O}}
\newcommand{\smallo}{o} %
\newcommand{\eval}[1]{\mathsf{Eval}_{#1}}
\newcommand*{\cP}{\mathcal{P}}
\newcommand*{\cX}{\mathcal{X}}
\newcommand*{\Z}{\mathbb{Z}}
\newcommand*{\FF}{\mathbb{F}}
\newcommand*{\NN}{\mathbb{N}}
\DeclareMathOperator{\rank}{rank}
\newcommand*{\poly}{\mathrm{poly}}
\newcommand*{\supp}{\mathrm{supp}}
\newcommand{\dcc}[1]{\mathrm{D}_{#1}}
\newcommand*{\D}{\mathrm{D}}
\newcommand{\ra}{\rightarrow}
\newcommand{\cck}[2]{c_{#1,\angle}({#2})}
\newcommand{\cc}[1]{c_{\angle}({#1})}
\newcommand{\rc}[1]{r_{\!\angle}({#1})}
\newcommand{\rck}[2]{r_{#1,\angle}({#2})}
\newcommand{\abs}[1]{\left|{#1}\right|} %
\DeclarePairedDelimiterX{\inner}[2]{\langle}{\rangle}{#1, #2}
\DeclareMathOperator{\subrank}{Q}
\DeclareMathOperator{\trank}{R}
\DeclareMathOperator{\asympsubrank}{\tilde{Q}}
\DeclareMathOperator{\slicerank}{SR}
\DeclareMathOperator{\asyslicerank}{\tilde{SR}}
\newcommand{\cor}{\mathrm{cor}}
\newcommand{\capset}{\mathrm{cap}}
\newcommand{\IM}{\mathrm{IM}}
\title{Larger Corner-Free Sets from Combinatorial Degenerations}
\author{Matthias Christandl}{Department of Mathematical Sciences, University of Copenhagen}{christandl@math.ku.dk}{}{European Research Council (ERC Grant Agreement No.~818761), the VILLUM FONDEN via the QMATH Centre of Excellence (Grant No.~10059) and the QuantERA ERA-NET Cofund in Quantum Technologies implemented within the European Union’s Horizon 2020 Programme (QuantAlgo project) via the Innovation Fund Denmark}
\author{Omar Fawzi}{Univ.~Lyon, ENS Lyon, UCBL, CNRS, Inria, LIP}{omar.fawzi@ens-lyon.fr}{}{European Research Council (ERC Grant Agreement No. 851716)}
\author{Hoang Ta}{Univ.~Lyon, ENS Lyon, UCBL, CNRS, Inria, LIP}{duy-hoang.ta@ens-lyon.fr}{}{European Research Council (ERC Grant Agreement No.~851716), LABEX MILYON (ANR-10-LABX-0070) of Université de Lyon, within the program ``Investissements d’Avenir'' (ANR-11-IDEX-0007) operated by the French National Research Agency (ANR)}
\author{Jeroen Zuiddam}{Korteweg-de Vries Institute for Mathematics, University of Amsterdam}{j.zuiddam@uva.nl}{}{}
\authorrunning{M.~Christandl, O.~Fawzi, H.~Ta and J.~Zuiddam} %
\keywords{Corner-free sets, communication complexity, number on the forehead, combinatorial degeneration, hypergraphs, Shannon capacity, eval problem} %
\begin{document}

\maketitle

\begin{abstract}
    There is a large and important collection of Ramsey-type combinatorial problems, closely related to central problems in complexity theory, that can be formulated in terms of the asymptotic growth of the size of the maximum independent sets in powers of a fixed small (directed or undirected) hypergraph, also called the Shannon capacity. An important instance of this is the corner problem studied in the context of multiparty communication complexity in the Number On the Forehead (NOF) model (and other important instances are the cap set problem in additive combinatorics and the USP capacity problem in the complexity theory of matrix multiplication). Versions of this problem and the NOF connection have seen much interest (and progress) in recent works of Linial, Pitassi and Shraibman (ITCS~2019) and Linial and Shraibman (CCC~2021).

    We introduce and study a general algebraic method for lower bounding the Shannon capacity of directed hypergraphs via \emph{combinatorial degenerations}, a combinatorial kind of ``approximation'' of subgraphs that originates from the study of matrix multiplication in algebraic complexity theory (and which play an important role there) but which we use in a novel way. 
    
    Using the combinatorial degeneration method, we make progress on the corner problem by explicitly constructing a corner-free subset in $\FF_2^n \times \FF_2^n$ of size $\Omega(3.39^n/\poly(n))$, which improves the previous lower bound~$\Omega(2.82^n)$ of Linial, Pitassi and Shraibman (ITCS~2019) and which gets us closer to the best upper bound $4^{n - o(n)}$. Our new construction of corner-free sets implies an improved NOF protocol for the \emph{Eval problem}. In the Eval problem over a group $G$, three players need to determine whether their inputs $x_1, x_2, x_3 \in G$ sum to zero. We find that the NOF communication complexity of the Eval problem over $\FF_2^n$ is at most $0.24n + \bigO(\log n)$, which improves the previous upper bound~$0.5n + \bigO(\log n)$.
    
    Finally, we investigate the existing tensor methods for upper bounding the Shannon capacity (including slice rank, subrank, analytic rank, geometric rank, and G-stable rank). We find that these methods have strong limitations caused by the existence of large induced matchings. In particular, this implies a strong barrier for these methods to prove nontrivial upper bounds for the corner problem over any group $G$ (and in particular for $G = \FF_2^n$ to get an upper bound below $4^{n - o(n)}$).
\end{abstract}

\section{Introduction}
This paper is about constructing special combinatorial objects, namely ``corner-free sets'' in~$\FF_p^n \times \FF_p^n$, motivated (besides their inherent interest) by central problems in communication complexity, specifically in the study of the number on the forehead (NOF) model of communication introduced by Chandra, Furst and Lipton~\cite{CFL83}. 
There has been much interest in (and progress on) the corner problem, variations of the problem, and connections to NOF communication, in particular in the recent works of
Shraibman~\cite{SHRAIBMAN201844},
Linial, Pitassi and Shraibman~\cite{LPS18}, Viola~\cite{DBLP:journals/sigact/Viola19}, %
Alon and Shraibman~\cite{alon2020algorithmic}, and
Linial and Shraibman~\cite{DBLP:conf/coco/LinialS21,linial2021larger}. In the recent work of Linial and Shraibman~\cite{DBLP:conf/coco/LinialS21} a construction of large corner-free sets in $[N] \times [N]$ was obtained in an elegant manner by designing efficient NOF communication protocols for a specific communication problem (much like the upcoming Eval problem). %
We take a different, algebraic approach to the corner problem, and make progress on the corner problem over $\FF_p^n \times \FF_p^n$ by introducing in this area a new algebraic method via \emph{combinatorial degeneration}.

\subsubsection*{NOF communication complexity}

The NOF model is very rich in terms of connections to Ramsey theory and additive combinatorics~\cite{beigel2006multiparty,SHRAIBMAN201844,LPS18,DBLP:conf/coco/LinialS21,linial2021larger}, as well as applications to boolean models of compution such as branching programs and boolean circuits~\cite{CFL83,Bei94}. 
The goal in the NOF model is for $k$ players to compute a fixed given function $F : \cX_1 \times \cdots \times \cX_k \to \{0,1\}$ on inputs $(x_1, \dots, x_k) \in \cX_1 \times \cdots \times \cX_k$ where player $i$ has access to input $x_j$ for all $j \neq i$ but no access to input~$x_i$. 
For $k = 2$, this model coincides with the standard two-party communication model of Yao~\cite{Yao79}, but when $k \geq 3$, the shared information between the players makes this model surprisingly powerful~\cite{Gro94,BGKL04,ACFN15,CS14}, %
and fundamental problems remain open. 
For instance, a sufficiently strong lower bound for an explicit function~$F$ for $k \geq \mathrm{polylog}(n)$ players with $n = \log |\cX_i|$ would imply a breakthrough result in complexity theory, namely a lower bound on the complexity class $\textsf{ACC}^0$.

\subsubsection*{NOF complexity of the Eval problem}
A central open problem in the theory of NOF communication is to construct an explicit function for which randomized protocols are significantly more efficient than deterministic ones~\cite{beame2007separating}. 
A well-studied candidate for this separation (for $k=3$) is the function~$\eval{\FF_2^n}$, which is defined by $\eval{\FF_2^n}(x_1, x_2, x_3) = 1$ if and only if $x_1 + x_2 + x_3 = 0$, where the additions are all in~$\FF_2^n$. Thus the Eval problem naturally generalizes the equality problem for $k=2$.
It is known that in the \emph{randomized} setting, the standard protocol for the two-party equality problem that uses $\bigO(1)$ bits of communication works in the same way for three parties for the Eval problem. 
However, in the \emph{deterministic} setting, the communication complexity $\dcc{3}(\eval{\FF_2^n})$ remains wide open: the best known lower bound $\Omega(\log \log n)$ follows from the work of Lacey and McClain~\cite{Lacey_07} and, before this work, the best upper bound was $0.5n + \bigO(\log n)$~\cite{ACFN15}. 
\subsubsection*{Corner problem in combinatorics, and connection to the Eval problem}
Chandra, Furst and Lipton~\cite{CFL83} found that the deterministic communication complexity of many problems in the NOF model can be recast as Ramsey theory problems. 
In particular, and this leads to the problem of interest in this paper, the (deterministic) communication complexity of~$\eval{\FF_2^n}$ can be characterized in terms of corner-free subsets of $\FF_2^{n} \times \FF_2^{n}$, as follows. 
We call any triple of elements $(x,y), (x + \lambda, y), (x, \lambda + y)$ for $x, y, \lambda \in \FF_2^n$ a \emph{corner}. A subset $S \subseteq \FF_2^{n} \times \FF_2^{n}$ is called \emph{corner-free} if it does not contain any nontrivial corners (where nontrivial means that~$\lambda \neq 0$).
Denoting by $\rc{\FF_2^n}$ the size of the largest corner-free set in $\FF_2^n \times \FF_2^n$, the communication complexity of $\eval{\FF_2^n}$ equals $\log(4^n/\rc{\FF_2^n})$ up to a $\bigO(\log n)$ additive term, which provides the close connection between the Eval problem in NOF communication and the corner problem in combinatorics. In particular, large corner-free sets in $\FF_2^n \times \FF_2^n$ correspond to efficient protocols for $\eval{\FF_2^n}$.

\subsubsection*{General paradigm: Shannon capacity of hypergraphs}
The point of view we will take (and the general setting in which the methods we introduce will apply) is to regard the corner problem as a  Shannon capacity problem of directed hypergraphs.
Namely, the size $\rc{\FF_2^n}$ of the largest corner-free set in $\FF_2^n \times \FF_2^n$ can be characterized as the independence number of a (naturally defined) directed $3$-uniform hypergraph with $4^n$ vertices.\footnote{As usual an \emph{independent set} of a hypergraph is a subset $S$ of vertices such that no hyperedge has all its vertices in~$S$.} 
This hypergraph has a recursive form: it is obtained by taking the $n$-th power of a fixed (directed) hypergraph $H_{\cor,\FF_{2}}$ on $4$ vertices. (We discuss this in more detail in Section~\ref{application}.) 
The asymptotic growth of $\rc{\FF_2^n}$ as $n \to \infty$ is characterized by the Shannon capacity $\Theta(H_{\cor,\FF_{2}})$ of the corner hypergraph~$H_{\cor,\FF_{2}}$.\footnote{In the setting of directed graphs, also the term \emph{Sperner capacity} (typically applied to the complement graph)~\cite{GARGANO90,Gargano1993} is used for what we call Shannon capacity.}
That is, we have $\rc{\FF_2^n} = \Theta(H_{\cor,\FF_{2}})^{n - o(1)}$.
In this way, proving the strict upper bound $\Theta(H_{\cor,\FF_{2}}) < 4$ is equivalent to proving a linear lower bound on the communication complexity of $\eval{\FF_2^n}$. 
Many other Ramsey type problems can be expressed as the Shannon capacity of some fixed hypergraph, such as the Cap Set problem that saw a recent breakthrough by Ellenberg and Gijswijt~\cite{EG17} following Croot, Lev and Pach~\cite{croot2017progression}, and the Uniquely Solvable Puzzle (USP) problems that were put forward in the ``group-theoretic approach'' to the matrix multiplication problem~\cite{1530730,Alon2013}.

\subsection{Is the complexity of the Eval problem maximal?}
Let us discuss the open problem that motivates our work, and that is central in NOF communication complexity and combinatorics (throught the aforementioned connections). This problem asks whether or not the complexity of the Eval probem is ``maximal'', or in other words, whether or not there are corner-free sets in~$\FF_2^n \times \FF_2^n$ that have ``sub-maximal'' size:
\begin{problem}\label{eq:main_question}
Are the following three statements (which we know are equivalent\footnote{The equivalence among the three formulations is standard and follows from Lemma~\ref{corner_nof_eval}, Proposition~\ref{prop:coloring_indep_set} and Lemma~\ref{lem:corner_indep_set} further on in the paper. We will mainly use the formulation in terms of Shannon capacity (see Definition~\ref{Shannoncapacity} for a precise definition).}) true?
\begin{itemize}
	\item $\dcc{3}(\eval{\FF_2^n}) = \Omega(n)$
	\item $\rc{\FF_2^n} \leq \bigO(c^n) \text{ for some } c < 4$
	\item $\Theta(H_{\cor,\FF_{2}}) < 4$.
\end{itemize}

\end{problem}
Here the best capacity lower bound before our work was $\Theta(H_{\cor, \FF_2}) \geq \sqrt{8}$ by Linial, Pitassi and Shraibman~\cite[Cor.~24 in the ITCS version]{LPS18}, obtained by explicit construction of an independent set in the second power of the relevant hypergraph, which in turn leads to the bounds $\D_3(\eval{\FF_2^n}) \leq 0.5n + \bigO(\log n)$ and $\rc{\FF_2^n} \geq \sqrt{8}^n$.

In the above we may naturally generalize $\FF_2^n$ to $\FF_p^n$ or even to $G^n$, where $G$ is an arbitrary abelian group, so that Problem~\ref{eq:main_question} is a special case of the more general problem:
\begin{problem}\label{prob:gen}
	Are the following three statements (which we know are equivalent) true?
\begin{itemize}
	\item $\dcc{3}(\eval{G^n}) = \Omega(n)$ 
	\item $\rc{G^n} \leq \bigO(c^n) \text{ for some } c < |G|^2$
	\item $\Theta(H_{\cor,G}) < |G|^2$.
\end{itemize}
\end{problem}
Our goal in this paper, motivated by the connections as remarked earlier, is to make progress on above problems via new algebraic methods.

\subsection{Lower bounds for the corner problem (and other problems) from combinatorial degeneration}
Our main result is progress on Problem~\ref{prob:gen} by proving new lower bounds for the corner problem over the groups $\FF_2$ and $\FF_3$, which we arrive at via a new method to lower bound the Shannon capacity of directed hypergraphs. Equivalently, in the language of communication complexity, we obtain improved protocols for the Eval problem.

The lower bound of Linial, Pitassi and Shraibman \cite{LPS18} for the corner problem was obtained by explicit construction of an independent set (i.e.~a set that does not contain edges) in the second power of a hypergraph, which is the natural approach for such lower bounds. We improve on this bound by observing that it is actually sufficient to construct a set which does not contain ``cycles''. 
For graphs, the notion of cycle is clear but for hypergraphs there are many possible definitions, and we initiate a careful study of this (and believe that this will be a worthwile avenue for further study independently).
Here, to get new bounds we use the notion of \emph{combinatorial degeneration} %
to model such a ``cycle''. We will say more about this in a moment.

Using the combinatorial degeneration method on the corner hypergraphs that characterize the corner problem %
we find new bounds for Problem~\ref{prob:gen} for the groups $\FF_2^n$ and $\FF_3^n$. %
These are as follows (in the three equivalent forms):
\begin{theorem}[Thm.~\ref{cor:lower_sqrt10}]\label{th:F2-intro}
For the corner and Eval problem over $\FF^n_2$ we have:
\begin{itemize}%
	\item $\dcc{3}(\eval{\FF_2^n}) \leq 0.24n + \bigO(\log n)$
	\item $\rc{\FF_2^n} \geq \frac{\sqrt[3]{39}^{n}}{\poly(n)}$ 
	\item $\Theta(H_{\cor,\FF_{2}}) \geq \sqrt[3]{39}$
\end{itemize}
\end{theorem}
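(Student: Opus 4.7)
My plan is to prove the middle of the three equivalent statements, the capacity bound $\Theta(H_{\cor,\FF_2}) \geq \sqrt[3]{39}$, and then invoke the equivalences stated in Problem~\ref{eq:main_question} (via Lemma~\ref{corner_nof_eval}, Proposition~\ref{prop:coloring_indep_set}, Lemma~\ref{lem:corner_indep_set}) to derive the bounds on $\rc{\FF_2^n}$ and $\dcc{3}(\eval{\FF_2^n})$. The cube root exponent~$3$ strongly suggests that the witness lives in the third tensor power $H_{\cor,\FF_2}^{\boxtimes 3}$, whose vertex set is $(\FF_2^2)^3$ of size $64$, and that we want to exhibit a ``combinatorial degeneration witness'' of size $39$ inside this $64$-vertex hypergraph.

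The first step is to unpack the corner hypergraph $H_{\cor,\FF_2}$ explicitly on the four vertices $\FF_2 \times \FF_2$ (its hyperedges are the nontrivial corners $\{(x,y),(x+1,y),(x,y+1)\}$) and to recall the combinatorial degeneration method developed earlier in the paper. The essential enabling fact is the following: if one can exhibit a subset $S$ of the vertex set of some power $H^{\boxtimes k}$ together with an ordering (or, more generally, a real-valued labeling) of $S$ such that every hyperedge of $H^{\boxtimes k}$ contained in $S$ has a strictly distinguished vertex with respect to the ordering, then the combinatorial degeneration method upgrades $S$ into an honest independent set in a larger power, yielding $\Theta(H) \geq |S|^{1/k - o(1)}$. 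This is the ``cycle-free rather than edge-free'' relaxation alluded to in the paper, and I would formalize it by citing the corresponding lemma from the combinatorial-degeneration section.

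The heart of the proof, and the step I expect to be the main obstacle, is the explicit construction of such a degenerate set $S \subseteq (\FF_2^2)^3$ of size $39$. My strategy is to parametrize candidate subsets by a coordinate-wise rule (for instance, a linear or bilinear condition on the three pairs $(x_i,y_i)\in \FF_2^2$, plus possibly a weight condition that mimics the slice-rank / Behrend-type constructions used for cap-set and corner problems), and then to verify that the induced hyperedges of $H_{\cor,\FF_2}^{\boxtimes 3}$ in $S$ can be consistently ``oriented away'': concretely, one seeks a weight function $w \colon \FF_2 \times \FF_2 \to \RR$ such that for every nontrivial corner $\{v, v+(\lambda,0), v+(0,\lambda)\}$ the three weights $\sum_i w(v_i)$, $\sum_i w(v_i+(\lambda_i,0))$, $\sum_i w(v_i+(0,\lambda_i))$ are not all equal on $S$, with the maximum always attained by a unique one. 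Finding a weight function and a set of size exactly $39$ that fit together is the combinatorial crux; I would likely guide the search by computer or by importing a small ``seed'' construction (analogous to how Strassen-type laser-method constructions start from a small combinatorial block) and show it has the required properties.

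Finally, once $S$ and its combinatorial degeneration are in hand, the chain of implications is routine: the degeneration lemma gives $\Theta(H_{\cor,\FF_2}^{\boxtimes 3}) \geq 39$, hence $\Theta(H_{\cor,\FF_2}) \geq \sqrt[3]{39}$ by multiplicativity of the Shannon capacity under the box product. Then Lemma~\ref{lem:corner_indep_set} translates this into $\rc{\FF_2^n} \geq \sqrt[3]{39}^{\,n}/\poly(n)$ (the $\poly(n)$ loss absorbs the usual blow-up factor in going from asymptotic capacity to explicit bounds), and Lemma~\ref{corner_nof_eval} converts the corner lower bound into the protocol for $\eval{\FF_2^n}$: since $\dcc{3}(\eval{\FF_2^n}) = \log(4^n/\rc{\FF_2^n}) + \bigO(\log n) = (2 - \tfrac{1}{3}\log_2 39)\,n + \bigO(\log n)$ and $2 - \tfrac{1}{3}\log_2 39 < 0.24$, the communication bound follows.
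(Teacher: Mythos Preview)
Your overall architecture is right: exhibit a combinatorial-degeneration witness of size $39$ in $H_{\cor,\FF_2}^{\boxtimes 3}$, invoke Theorem~\ref{thm:combinatorial_degeneration} to obtain $\Theta(H_{\cor,\FF_2}) \geq \sqrt[3]{39}$, and then pass through Lemma~\ref{lem:corner_indep_set}, Proposition~\ref{prop:coloring_indep_set} and Lemma~\ref{corner_nof_eval}. The numerical check $2 - \tfrac{1}{3}\log_2 39 < 0.24$ is also fine.

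The gap is in what you take a combinatorial degeneration to \emph{be}. By Definition~\ref{def:comb-degen}, a degeneration $\Psi \unrhd \Phi$ consists of \emph{three separate} maps $u_1,u_2,u_3 : V \to \Z$, one per \emph{slot} of an edge, with $u_1(v)+u_2(v)+u_3(v)=0$ for $v\in S$ and $u_1(a)+u_2(b)+u_3(c)>0$ for every $(a,b,c)\in\Psi\setminus\Phi$. Your proposal instead seeks a \emph{single} vertex labeling (an ``ordering'' or a coordinate-wise weight $w:\FF_2\times\FF_2\to\RR$ summed over the three coordinates) that distinguishes one vertex on each edge. That is essentially the acyclic-set method of Section~\ref{sec:acyclic}, and the paper states explicitly that the acyclic method does \emph{not} recover the bound $39$. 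Concretely, forcing $u_1=u_2=u_3$ makes the degeneration condition collapse to $S$ being an honest independent set, and by Table~\ref{tab:max_induced_matching} the largest independent set in $H_{\cor,\FF_2}^{\boxtimes 3}$ has size only $24$. A coordinate-wise additive ansatz $u_j(v_1,v_2,v_3)=\sum_i w_j(v_i)$ has only $3\times 4$ free parameters and is likewise too rigid.

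What the paper actually does is treat $u_1,u_2,u_3$ as $3\times 64$ independent integer unknowns and solve the integer program~\eqref{pro:combinatorial_degeration} on $H_{\cor,\FF_2}^{\boxtimes 3}$. The resulting maps (tabulated in the proof of Theorem~\ref{cor:lower_sqrt10}) have no visible coordinate-wise or group-theoretic structure; the authors themselves say they ``have yet to develop structural understanding'' of them. So the ``combinatorial crux'' you flag is real, but your proposed search space is provably too small to contain the witness, and there is no known structured construction to fall back on: the proof is an explicit $39$-element set $S$ together with three tables of $64$ integers each, found by computer and verified directly.
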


\begin{theorem}[Thm.~\ref{cor:lower_F3}]\label{th:F3-intro}
For the corner and Eval problem over $\FF^n_3$ we have:
\begin{itemize}%
	\item $\dcc{3}(\eval{\FF_3^n}) \leq 0.37n + \bigO(\log n)$
	\item $\rc{\FF_3^n} \geq \frac{7^{n}}{\poly(n)}$
	\item $\Theta(H_{\cor,\FF_{3}}) \geq 7$.
\end{itemize}
\end{theorem}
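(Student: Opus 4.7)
The plan is to establish the Shannon-capacity bound $\Theta(H_{\cor,\FF_3}) \geq 7$ directly; the other two formulations then follow automatically from the equivalences laid out in Problem~\ref{prob:gen}, with the NOF bound $0.37\,n + \bigO(\log n)$ coming from $\log_2(9/7) \approx 0.363$ and the corner-free-set bound $\rc{\FF_3^n} \geq 7^n/\poly(n)$ coming from unrolling the definition of $\Theta$.

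Recall that $H_{\cor,\FF_3}$ is the $3$-uniform directed hypergraph on $9$ vertices $\FF_3 \times \FF_3$ whose hyperedges are the nontrivial corners $\{(x,y),(x+\lambda,y),(x,y+\lambda)\}$ with $\lambda \in \FF_3 \setminus \{0\}$. A direct independent-set construction inside one copy of $H_{\cor,\FF_3}$ is too weak (and in fact the direct independence number is a constant less than $7$ in general), so I would instead invoke the combinatorial degeneration framework introduced earlier in the paper. The strategy has four ingredients: (i) pass to a low power $H_{\cor,\FF_3}^k$ for a carefully chosen $k$; (ii) define an additive potential function $\phi : \FF_3 \times \FF_3 \to \RR$, extended coordinate-wise to $(\FF_3\times\FF_3)^k$; (iii) exhibit a combinatorial degeneration of $H_{\cor,\FF_3}^k$ onto a reduced hypergraph $H'$ in which only the $\phi$-extremal hyperedges survive; (iv) construct a large independent set in $H'$, restricted to a ``typical'' vertex layer $U$ where $\phi$ concentrates, using Chernoff-type concentration to guarantee $|U| \geq 7^{k - o(k)}$. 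Since combinatorial degeneration is capacity-monotone, this yields $\Theta(H_{\cor,\FF_3}) \geq \Theta(H')^{1/k} \geq 7^{(k-o(k))/k}$, and hence $\Theta(H_{\cor,\FF_3}) \geq 7$ as $k$ grows.

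The explicit construction inside $H'$ is the combinatorial heart of the argument. Guided by the fact that only corners with $\lambda \neq 0$ must be killed, I would attempt a ``monotone corners'' construction: once vertices are ordered by $\phi$, any surviving corner $\{(x,y),(x+\lambda,y),(x,y+\lambda)\}$ has the two shifted vertices at strictly larger potential than the corner's apex. Restricting to a set in which the first coordinate of every vertex follows a prescribed type pattern (so that no admissible shift $\lambda$ can return another vertex into the set) should produce the desired $7^{k-o(k)}$ independent tuples. This is an analogue in $\FF_3$ of the construction that yields $\sqrt[3]{39}$ for $\FF_2$ in Theorem~\ref{th:F2-intro}, the main difference being that over $\FF_3$ each coordinate admits two nonzero shifts $\lambda = 1,2$, so the local avoidance pattern around each vertex is more constrained.

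The main obstacle is calibrating the pair $(k,\phi)$ so that the degeneration both kills enough hyperedges to make $H'$ tractable and retains $7^{k - o(k)}$ typical vertices. This is a small but genuinely combinatorial optimization: any candidate potential yields a constrained counting problem on types of $k$-tuples from $\FF_3\times\FF_3$, and the optimal exponent must hit exactly $\log_9 7$. I expect, by analogy with the $\FF_2$ construction in Theorem~\ref{th:F2-intro}, that a single small value of $k$ together with a structured weight (for instance based on membership in a coset of a subgroup of $\FF_3\times\FF_3$) will suffice; the verification that the resulting set is hyperedge-free should reduce to a short case analysis over $\lambda \in \{1,2\}$ and the chosen type classes.
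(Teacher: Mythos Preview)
Your proposal has the right high-level tool (combinatorial degeneration) but misses the actual mechanism and overcomplicates the argument. The paper's proof works entirely at the \emph{first} power $k=1$: it names a concrete seven-element subset $S = \{0,1,2,3,4,7,8\}$ of the nine vertices of $H_{\cor,\FF_3}$ and writes down explicit integer maps $u_1,u_2,u_3 : V \to \Z$ (a $9\times 3$ table) witnessing a combinatorial degeneration from $\Psi = E \cup \{(v,v,v):v\in V\}$ to $\Phi = \{(v,v,v):v\in S\}$. Then Theorem~\ref{thm:combinatorial_degeneration} is invoked directly to conclude $\Theta(H_{\cor,\FF_3}) \geq |S| = 7$. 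No passage to powers, no potential function, no Chernoff layer, no optimization over $(k,\phi)$; the entire argument is a finite verification of $27$ inequalities.

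Your four-step plan also conflates two different things. In the paper's framework (Definition~\ref{def:comb-degen} and Theorem~\ref{thm:combinatorial_degeneration}) one degenerates \emph{directly to a diagonal set} $\{(v,\ldots,v):v\in S\}$, not to an intermediate hypergraph $H'$ in which one then hunts for an independent set; the type-counting that produces the large independent set in $H^{\boxtimes n}$ is already baked into the proof of Theorem~\ref{thm:combinatorial_degeneration}. Moreover, your ``monotone corners'' idea, where the two shifted vertices sit at strictly larger potential than the apex, is precisely the acyclic-set method of Section~\ref{sec:acyclic}, and the paper explicitly notes that this weaker special case does \emph{not} recover the bound of Theorem~\ref{cor:lower_F3}. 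The full combinatorial degeneration used in the paper is genuinely more flexible: the three maps $u_i$ are allowed to be distinct and non-monotone, which is what makes the $|S|=7$ witness possible already at $k=1$.
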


Let us discuss on a high level the history and ideas behind the combinatorial degeneration method. 
Combinatorial degeneration is an existing concept from algebraic complexity theory. It was (in a slightly different form) introduced and studied by Strassen in \cite[Section 6]{Strassen1987RelativeBC}.\footnote{\emph{Degeneration} of tensors is a powerful approximation notion in the theory of tensors. \emph{Combinatorial degeneration} is the ``combinatorial'' or ``torus'' version of this kind of approximation. Combinatorial degeneration was introduced by Bürgisser, Clausen and Shokrollahi~\cite[Definition~15.29]{PMM97} based on the notion of M-degeneration for tensors defined and studied by Strassen in \cite{Strassen1987RelativeBC}.} (For the formal definition of combinatorial degeneration, see Definition~\ref{def:comb-degen}.)
Strassen's original application of combinatorial degeneration was to study matrix multiplication, namely to prove the fundamental result that surprisingly many independent scalar multiplications can be reduced (in an appropriate algebraic manner) to matrix multiplication \cite[Theorem~6.6]{Strassen1987RelativeBC}.\footnote{Strassen's result is asymptotically optimal. Strassen's proof resembles Behrend's construction of arithmetic-progression-free sets. Also note that this is precisely the opposite of the problem of reducing matrix multiplication to as few independent scalar multiplications as possible. The latter corresponds precisely to the arithmetic complexity of matrix multiplication.} 
Strassen then used this result  to prove his Laser method \cite[Section 7]{Strassen1987RelativeBC}, vastly generalizating the method that Coppersmith and Winograd had introduced in their construction of matrix multiplication algorithms \cite{coppersmith1987matrix}.\footnote{The book~\cite[Definition~15.29 and Lemma~15.31]{PMM97} gives a different proof of the Laser method which relies even more strongly on combinatorial degeneration.} %

Combinatorial degeneration was used more broadly to construct large induced matchings in the setting of important combinatorial problems, namely the Sunflower problem by Alon, Shpilka and Umans~\cite[Lemma~3.9]{Alon2013} and the Cap Set problem by Kleinberg, Sawin and Speyer~\cite{Sawin}. These results are often referred to as the ``multicolored'' versions of the problem at hand, as opposed to the ``single color'' version.
These ideas were developed further in the context of matrix multiplication barriers by Alman and Williams~\cite[Lemma~6]{DBLP:conf/innovations/AlmanW18} and in the study of tensors by Christandl, Vrana and Zuiddam~\cite[Theorem~4.11]{CVZ18}.

Crucially, all of the above applications use combinatorial degeneration to construct \emph{induced matchings} in ($k$-uniform $k$-partite) hypergraphs.
However, we use combinatorial degeneration in a novel manner to construct \emph{independent sets} in hypergraphs instead of induced matchings. In this context an independent set should be thought of as a symmetric induced matching. Constructing large independent sets is a much harder task than constructing large induced matchings, as witnessed by the fact that the ``multicolored'' cap set problem is solved~\cite{Sawin} while its ``single color'' version is not. Similarly, for the corner problem, as we will discuss in Section~\ref{sec:intro_barrier}, the asymptotic growth of the largest induced matching can be shown to be maximal, whereas the main question of study of this paper is whether the same holds for the largest independent set.
We expect our new way of using combinatorial degeneration to be useful in the study of other problems besides the corner problem as well, and thus think it is of independent interest. 

On a more technical level, combinatorial degeneration is a notion that compares sets of $k$-tuples by means of algebraic conditions. Our ``universe'' is $I_1 \times \cdots \times I_k$ where $I_1, \ldots, I_k$ are finite sets.
Then for sets $\Phi \subseteq \Psi \subseteq I_1 \times \dots \times I_k$ we say that $\Phi$ is a \emph{combinatorial degeneration} of $\Psi$, and write~$\Psi \unrhd \Phi$, if there are maps $u_i:I_i \ra \Z$ such that for every $x = (x_1, \ldots, x_k) \in I_1 \times \dots \times I_k$, if~$x \in \Psi \setminus \Phi$, then $\sum_{i=1}^{k}u_i(x_i)>0$, and if $x \in \Phi$, then $\sum_{i=1}^{k}u_i(x_i) = 0$. Thus the maps $u_i$ together are able to distinguish between the elements in the set $\Phi$ (which may be thought of as our ``goal'' set, i.e. a set with good properties) and the elements in the difference $\Psi \setminus \Phi$.
As a quick example of a combinatorial degeneration, let
\begin{align*}
\Phi &= \{(0,0,0),(1,1,0),(1,0,1) \}, \\
\Psi &= \{(0,0,0),(1,1,0),(1,0,1),(0,1,1)\}.
\end{align*}
Then we find a combinatorial degeneration $\Psi \unrhd \Phi$ by defining the maps $u_i : \{0,1\} \to \Z$ simply by setting $u_1(0) = u_2(0) = u_3(0) = 0$, and $u_1(1) = -1$, $u_2(1) = u_3(1) = 1$. 

We apply the idea of combinatorial degeneration in the following fashion to get Shannon capacity lower bounds:

\begin{theorem}[Combinatorial degeneration method, Theorem~\ref{thm:combinatorial_degeneration}]\label{thm:combinatorial_degeneration-intro}
Let $H = (V,E)$ be a directed $k$-uniform hypergraph. Let $S \subseteq V$ be a subset of vertices. Define the sets 
\[
	\Psi = E \cup\{(v,\ldots, v) : v\in V\}
\]
and
\[
	\Phi = \{(v,\ldots, v) : v \in S\}.
\]
Suppose that $\Psi \unrhd \Phi$ is a combinatorial degeneration. Then we get the Shannon capacity lower bound $\Theta(H) \geq \abs{S}$.
\end{theorem}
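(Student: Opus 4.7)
The plan is to lift the given maps $u_1,\ldots,u_k:V\to\Z$ additively to $U_i:V^n\to\Z$ defined by $U_i(\vec{x}) := \sum_{j=1}^{n} u_i(x_j)$, and then to exhibit a large independent set in the $n$-th power $H^n$ as a single ``type class'' of diagonal vertices. Concretely, for a composition $\tau:S\to\NN$ with $\sum_{s\in S}\tau(s)=n$, let $T_\tau\subseteq S^n$ be the set of vectors having exactly $\tau(s)$ coordinates equal to $s$ for each $s\in S$. My candidate independent set in $H^n$ is $T_\tau$ for a type $\tau$ that is as balanced as possible across $S$.

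The main step is to verify that $T_\tau$ is independent in $H^n$. Suppose, for contradiction, that $(\vec{x}_1,\ldots,\vec{x}_k)\in(T_\tau)^k$ is a genuine hyperedge of $H^n$. Then at every coordinate $j$ the slice $(x_{1,j},\ldots,x_{k,j})$ lies in $\Psi$, and at some coordinate this slice must lie in $\Psi\setminus\Phi$; otherwise every slice would be of the form $(s,\ldots,s)$ with $s\in S$ and the ``hyperedge'' would collapse to the trivial diagonal. The combinatorial degeneration hypothesis applied slice-by-slice then gives $\sum_{i=1}^{k} u_i(x_{i,j})\ge 0$ at every $j$, with strict inequality at any slice in $\Psi\setminus\Phi$, so summing over $j$ produces $\sum_{i=1}^{k} U_i(\vec{x}_i)>0$.

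On the other hand, since each $\vec{x}_i$ lies in $T_\tau$, the quantity $U_i(\vec{x}_i)=\sum_{s\in S}\tau(s)\,u_i(s)$ depends only on $i$ and $\tau$. Summing over $i$ gives $\sum_{i=1}^{k} U_i(\vec{x}_i)=\sum_{s\in S}\tau(s)\sum_{i=1}^{k} u_i(s)$, and because every diagonal $(s,\ldots,s)$ with $s\in S$ lies in $\Phi$, the inner sum vanishes for each $s$. Thus $\sum_{i=1}^{k} U_i(\vec{x}_i)=0$, contradicting the strict inequality above, so $T_\tau$ must be independent.

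To finish, I would invoke Stirling's formula: for $\tau(s)\in\{\lfloor n/|S|\rfloor,\lceil n/|S|\rceil\}$ one has $|T_\tau|=\tfrac{n!}{\prod_{s\in S}\tau(s)!}\ge|S|^n/\poly(n)$, which yields $\Theta(H)\ge\liminf_{n\to\infty}|T_\tau|^{1/n}=|S|$. There is no real technical obstacle here; the only conceptual point is noticing that including the entire diagonal $\{(v,\ldots,v):v\in V\}$ inside $\Psi$ (not just the edges $E$) is exactly what allows the coordinate-wise degeneration inequality to absorb diagonal slices harmlessly while penalizing every genuine edge slice. Once that is observed, the rest is a standard type-class computation in the spirit of Salem--Spencer and Behrend.
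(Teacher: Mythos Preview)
Your proposal is correct and follows essentially the same route as the paper's proof: both restrict to a single type class in $S^n$ (the paper takes $n$ a multiple of $|S|$ and uses the exactly uniform type, you allow a nearly-balanced $\tau$), show via the coordinate-wise degeneration inequalities that any hyperedge among such vertices would force $\sum_{i,j} u_i(x_{i,j})>0$ while the fixed type forces this sum to vanish, and then lower bound the multinomial coefficient by $|S|^n/\poly(n)$. The only cosmetic difference is that the paper uses the crude bound $\binom{n}{n/|S|,\ldots,n/|S|}\ge |S|^n/(n+1)^{|S|}$ in place of Stirling.
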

In other words, whereas $S$ in the statement of Theorem~\ref{thm:combinatorial_degeneration-intro} may not be an independent set, we can via the algebraic conditions of combinatorial degeneration construct an independent set in the $n$th power of the hypergraph of size approaching $|S^{\times n}|$. Namely, the algebraic conditions allow us to select such an independent set using a natural type analysis of the labels given by the maps $u_i$. Thus we may think of a set $S$ as above as an \emph{approximative} independent set, which asymptotically we can turn into an actual independent set by means of Theorem~\ref{thm:combinatorial_degeneration-intro}.

We note that, whereas it is relatively simple to verify for a given set $S$ that $\Psi \unrhd \Phi$ holds (with the notation of Theorem~\ref{thm:combinatorial_degeneration-intro}) via linear programming, it is seems much harder to \emph{find} a large set $S \subseteq V$ for which $\Psi \unrhd \Phi$, given $H$. We obtain our best lower bounds via an integer linear programming approach. The resulting combinatorial degenerations that we find are explicit and checkable by hand.

We have yet to develop structural understanding of how the above combinatorial degenerations that exhibit the new capacity lower bounds arise (and we feel that deeper understanding of this may lead to more progress or even solve the corner problem), and leave the investigation of further generalizations and improvements to future work. 
As a partial remedy to our limited understanding, we introduce the \emph{acyclic method} as a tool to construct combinatorial degenerations. While the acyclic method does not recover the bounds of Theorem~\ref{th:F2-intro} and Theorem~\ref{th:F3-intro}, it has the merits of being transparent and simple to apply. The acyclic method involves another notion of a set wihtout ``cycles'', which implies a combinatorial degeneration, but whose conditions are simpler to check.

\subsection{Lower bounds for the corner problem from the probabilistic method}
We employ the probabilistic method to find the following very general bound for the corner problem over arbitrary abelian groups.
\begin{theorem}[Prop.~\ref{prop:simple_lowerbound}]\label{th:prob-intro}
For the corner and Eval problem over an arbitrary abelian group~$G$ we have
\begin{itemize}
	\item $\dcc{3}(\eval{G^n}) \leq \frac{\log |G|}{2} n + \bigO(\log n)$
	\item $\rc{G^n} \geq \frac{|G|^{3n/2}}{\poly(n)}$
	\item $\Theta(H_{\cor,G}) \geq |G|^{3/2}$.
\end{itemize}
\end{theorem}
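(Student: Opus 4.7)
By the equivalence among the three bullets asserted in Problem~\ref{prob:gen} (using the correspondences among corner-free sets, independent sets in $H_{\cor,G}^{\times n}$, and NOF protocols for $\eval{G^n}$ that are recalled after Problem~\ref{eq:main_question}), it suffices to establish the middle statement $\rc{G^n} \geq |G|^{3n/2}/\poly(n)$; the upper bound on $\dcc{3}(\eval{G^n})$ and the capacity bound $\Theta(H_{\cor,G}) \geq |G|^{3/2}$ then follow. In fact the probabilistic method will give the cleaner $\rc{G^n} = \Omega(|G|^{3n/2})$, with no polynomial loss.

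The plan is the classical deletion (alteration) argument. Sample a random subset $S \subseteq G^n \times G^n$ by including each pair independently with probability $p$, to be optimized at the end; then $\E[|S|] = p\,|G|^{2n}$. A nontrivial corner is a triple $(x,y),(x+\lambda,y),(x,y+\lambda)$ with $\lambda \in G^n \setminus \{0\}$, and a quick check shows its three points are pairwise distinct whenever $\lambda \neq 0$. The number of such triples is $|G|^{2n}(|G|^n-1) \leq |G|^{3n}$, and each lies entirely in $S$ with probability exactly $p^3$, so the expected number of nontrivial corners contained in $S$ is at most $|G|^{3n} p^3$. Deleting one point from each such bad corner yields a corner-free subset $S' \subseteq S$ with
\[
\E[|S'|] \;\geq\; p\,|G|^{2n} \;-\; p^{3}\,|G|^{3n}.
\]

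Optimizing, the derivative vanishes at $p^* = |G|^{-n/2}/\sqrt{3}$, giving $\E[|S'|] \geq (2/(3\sqrt{3}))\,|G|^{3n/2}$. Hence some realization satisfies $|S'| \geq (2/(3\sqrt{3}))\,|G|^{3n/2}$, proving $\rc{G^n} = \Omega(|G|^{3n/2})$. Feeding this into the relation $\dcc{3}(\eval{G^n}) = \log(|G|^{2n}/\rc{G^n}) + \bigO(\log n)$ (mentioned in the paragraph connecting the Eval problem to corner-free sets) gives $\dcc{3}(\eval{G^n}) \leq \tfrac{\log|G|}{2}\,n + \bigO(\log n)$, and $\Theta(H_{\cor,G}) \geq |G|^{3/2}$ follows from $\rc{G^n} = \Theta(H_{\cor,G})^{n-o(1)}$.

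There is no real obstacle here: the only things one must be careful about are (i) excluding the trivial $\lambda = 0$ triples from the count of corners, and (ii) checking that when $\lambda \neq 0$ the three points of the corner are genuinely distinct, so the probability of simultaneous inclusion is exactly $p^3$ rather than a smaller power. Both are immediate. The argument works uniformly in $G$ and is tight for this particular method; improving the exponent $3/2$ would require the structural ideas developed in the combinatorial-degeneration section of the paper, not a better random construction.
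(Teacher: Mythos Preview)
Your proposal is correct and follows essentially the same approach as the paper: both use the standard probabilistic deletion (alteration) argument, sampling each point of $G^n\times G^n$ independently with probability $p\approx |G|^{-n/2}$, bounding the expected number of nontrivial corners by $|G|^{2n}(|G|^n-1)p^3$, and invoking $\alpha(H)\geq |V(H)|-|E(H)|$ (equivalently, deleting one vertex per bad corner) to obtain $\rc{G^n}=\Omega(|G|^{3n/2})$; the paper then states only the capacity bound $\Theta(H_{\cor,G})\geq |G|^{3/2}$, with the other two bullets following from Lemma~\ref{corner_nof_eval} and Proposition~\ref{prop:corner_and_color} exactly as you indicate.
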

This general bound applied to the special cases $G = \FF_2$ and $G = \FF_3$ does not quite match the bounds in Theorem~\ref{th:F2-intro} and Theorem~\ref{th:F3-intro}, respectively. However, applied to the special case $G = \FF_2$ we do recover the lower bound~$\sqrt{8}$ of~\cite[Cor.~24 in the ITCS version]{LPS18}.

Using the same techniques we gain insight about the high-dimensional version of the corner problem and Eval problem and what happens when the number of players grows. For an arbitrary abelian group $G$, a $k$-dimensional corner over $G$ is naturally defined as a set of $(k+1)$ points in $(G)^{\times k}$ of the form 
\[
\{(x_1, x_2, \ldots, x_k), (\lambda + x_1, x_2, \ldots, x_k), \ldots, (x_1, x_2, \ldots, \lambda + x_k)\}
\]
where $x_i, \lambda \in G^n$. A subset $S \subseteq G^{\times k}$ is called corner-free if it does not contain any nontrivial corners (where nontrivial again means $\lambda \neq 0$). We denote by $\rck{k}{G}$ the size of the largest ($k$-dimensional) corner-free set. Just like the $k=2$ case corner-free sets correspond to independent sets in a naturally defined $(k+1)$-uniform directed hypergraph $H_{k, \cor, G}$. With the probabilistic method (extending Theorem~\ref{th:prob-intro}), we find that when the $k$ goes to infinity, the capacity of $H_{k, \cor, G}$ becomes essentially maximal. As a consequence if $k$ grows with $n$ (e.g., $k = \log n$) we find that the NOF complexity of the corresponding $k$-player Eval problem becomes \emph{sub-linear}.

\begin{theorem}[Rem.~\ref{rem:multiplayer}]\label{th:prob-higher-intro} Let $G$ be a finite abelian group. Then
	\begin{itemize}
		\item $\Theta(H_{k, \cor, G}) \geq |G|^{k - \tfrac1k}$
		\item $\D_{k+1}(\eval{G^n}) \leq \frac{n}{k} \log |G| + \bigO(\log n) + k$
	\end{itemize}
\end{theorem}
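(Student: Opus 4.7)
The plan is to extend the probabilistic (alteration) argument underlying Theorem~\ref{th:prob-intro} from $k=2$ to arbitrary $k$. Since $\Theta(H_{k,\cor,G})$ equals the asymptotic growth rate of $\rck{k}{G^n}$ (via the natural identification of corner-free sets in $(G^n)^{\times k}$ with independent sets in powers of the base corner hypergraph), it suffices to exhibit, for every $n$, a corner-free set in $(G^n)^{\times k}$ of size $|G|^{(k-1/k)n}/\poly(n)$. The Eval bound will then follow by feeding these sets into the standard corner-to-Eval protocol reduction (Lemma~\ref{corner_nof_eval}), generalized to $k+1$ players.

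First I would select each element of $(G^n)^{\times k}$ independently with probability $p$ to form a random set $R$. A nontrivial $k$-dimensional corner is determined by an apex $(x_1,\ldots,x_k)\in (G^n)^k$ together with a nonzero shift $\lambda\in G^n\setminus\{0\}$, and so there are at most $|G|^{(k+1)n}$ such corners, each consisting of $k+1$ distinct points. Hence $\E[|R|] = p\,|G|^{kn}$, while the expected number of corners fully inside $R$ is at most $p^{k+1}|G|^{(k+1)n}$. Deleting one point from each bad corner yields, by averaging, a corner-free set of size at least $p\,|G|^{kn} - p^{k+1}|G|^{(k+1)n}$. Choosing $p$ of order $|G|^{-n/k}$ balances the two terms and produces a lower bound of the form $\Omega_k\!\left(|G|^{(k-1/k)n}\right)$; this gives $\Theta(H_{k,\cor,G})\geq |G|^{k-1/k}$.

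The last step is to invoke the $(k+1)$-player analogue of the dictionary between corner-free sets and deterministic NOF protocols for Eval: a corner-free set of size $M$ in $(G^n)^{\times k}$ yields a protocol for $\eval{G^n}$ of cost $\log(|G|^{kn}/M) + \bigO(\log n) + \bigO(k)$ bits, where each player broadcasts enough information to locate the input tuple within its class in the corner-free decomposition. Substituting $M = |G|^{(k-1/k)n}/\poly(n)$ yields $\D_{k+1}(\eval{G^n}) \leq \tfrac{n}{k}\log|G| + \bigO(\log n) + k$. The main step that will require care is precisely this last reduction: the coloring argument from the $k=2$ case generalizes in a routine way, but the bookkeeping of who announces what across $k+1$ players (and in particular the source of the additive $+k$) must be spelled out explicitly; the probabilistic counting itself is a direct extrapolation of the $k=2$ argument and presents no essential new difficulty.
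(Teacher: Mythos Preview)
Your proposal is correct and follows essentially the same route as the paper: the probabilistic alteration argument with $p$ of order $|G|^{-n/k}$ is exactly the paper's extension of Proposition~\ref{prop:simple_lowerbound} (the paper takes $p=[(k+1)(|G|^n-1)]^{-1/k}$ and uses $\alpha(H)\ge |V(H)|-|E(H)|$, which is the same deletion idea), and the Eval bound is likewise derived via the $(k{+}1)$-player analogues of Lemma~\ref{corner_nof_eval} and Proposition~\ref{prop:coloring_indep_set}, with the additive $k$ coming from the protocol overhead just as you indicate.
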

Thus we learn that to prove a linear lower bound on $\D_{k}(\eval{G^n})$ for any given $G$ (say for~$G = \FF_p$) it is important to keep $k$ constant.

\subsection{Limitations of tensor methods for proving upper bounds for the corner problem}
\label{sec:intro_barrier}
Our second result is a strong limitation of current tensor methods to effectively upper bound the Shannon capacity of hypergraphs. This limitation is caused by induced matchings and applies to various combinatorial problems including the corner problem. We use a method of Strassen to show that these limitations are indeed very strong for the corner problem.

In order to elaborate on these results let us first give an overview of upper bound methods.
The general question of upper bounds on the Shannon capacity of hypergraphs is particularly well-studied in the special setting of undirected graphs, from which the name ``Shannon capacity'' comes: it in fact corresponds to the zero-error capacity of a channel~\cite{sha56}. 
Even for undirected graphs, it is not clear how to compute the Shannon capacity in general, but some methods were developed to give upper bounds. 
The difficulty is to find a good upper bound on the largest independent set that behaves well under the product $\boxtimes$. 
For undirected graphs, the best known methods are the Lov\'asz theta function~\cite{Lo79} and the Haemers bound which is based on the matrix rank~\cite{HW79}. For hypergraphs, we only know of algebraic methods that are based on various notions of tensor rank, and in particular the slice rank~\cite{TS16} (which was used and studied extensively in combinatorics, in the context of cap sets \cite{Tao16,Sawin}, 
sunflowers \cite{naslund2017upper} and right-corners \cite{Naslund}), and similar notions like the analytic rank \cite{gowers2011linear, Lovett, Briet2019SubspacesOT}, the geometric rank \cite{kopparty_et_al}, and the G-stable rank \cite{derksen2020gstable}. Even though the slice rank is not multiplicative under $\boxtimes$ it is possible to give good upper bounds on the asymptotic slice rank via an asymptotic analysis \cite{TS16}, which is closely related to the Strassen support functionals~\cite{Str91} or the more recent quantum functionals~\cite{CVZ18}.

Most of the rank-based bounds actually give upper bounds on the size of induced matchings and not only on the size of independent sets. It is simple and instructive to see this argument in the setting of undirected graphs. 
For a given graph $H = (V,E)$, let $A$ be the adjacency matrix in which we set all the diagonal coefficients to $1$. 
Then for any independent set $I \subseteq V$, the submatrix $(A_{i,j})_{i,j \in I}$ of $A$ is the identity matrix and as a result $|I| \leq \rank(A)$. 
As the matrix rank is multiplicative under tensor product, we get $\Theta(H) \leq \rank(A)$.
Observe that this argument works equally well if we consider an induced matching instead of an independent set.
An induced matching of size $s$ of the graph $H = (V,E)$ can be defined by two lists of vertices  $I_1(1), \dots, I_1(s)$ and $I_2(1), \dots, I_2(s)$ of size $s$ such that for any $\alpha, \beta \in \{1, \dots, s\}$ we have
\begin{align*}
((I_1(\alpha), I_2(\beta)) \in E \text{ or } I_1(\alpha) = I_2(\beta)) \quad \Longleftrightarrow \quad \alpha = \beta \ . 
\end{align*}
In other words, the submatrix $(A_{i,j})_{i \in I_1, j \in I_2}$ is an identity matrix, which also implies that $s \leq \rank(A)$. As such, the matrix rank is an upper bound on the asymptotic maximum induced matching. 
Tensor rank methods such as the subrank, slice rank, analytic rank, geometric rank and G-stable rank also provide upper bounds on the asymptotic maximum induced matching.

Using a result of Strassen~\cite{Str91}, we show that there is an induced matching of the $n$-th power of $H_{\cor,\FF_{2}}$ of size $4^{n - o(1)}$. This establishes a \emph{barrier} on many existing tensor methods (such as slice rank, subrank, analytic rank,~etc.) to make progress on Problem~\ref{eq:main_question}. In fact, this result holds more generally for any abelian group $G$:

\begin{theorem}[Cor.~\ref{cor:cor-tight}]
For any abelian group $G$, the hypergraph $H_{\cor, G}^{\boxtimes n}$ has an induced matching of size $|G|^{2n - o(n)}$.
In other words, for any $n \geq 1$, there exist lists $I_1, I_2, I_3 \subseteq G^n \times G^n$ of size $s(n) = |G|^{2n - o(n)}$ such that the following holds. For any $\alpha, \beta, \gamma \in \{1, \dots, s(n)\}$
\begin{align}
\left(I_1(\alpha), I_2(\beta), I_3(\gamma) \right) \text{ forms a corner} \quad \Longleftrightarrow \quad \alpha = \beta = \gamma \ .
\end{align}
\end{theorem}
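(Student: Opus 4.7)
The plan is to reduce the induced-matching question for $H_{\cor, G}^{\boxtimes n}$ to a classical result of Strassen on the matrix-multiplication hypergraph. The reduction proceeds in three steps: first identify a single copy of the corner hypergraph with matrix multiplication, then use that taking $n$-th powers corresponds to passing to $G^n$, and finally invoke Strassen's Behrend-type construction of a near-diagonal embedding.

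For Step 1, I would apply the substitution $(x, y, \lambda) \mapsto (a, b, c) := (x, y, x + y + \lambda)$ in the corner tensor, rewriting
\[
T_{\cor, G} = \sum_{x, y, \lambda \in G} e_{(x,y)} \otimes e_{(x+\lambda, y)} \otimes e_{(x, y+\lambda)} \;=\; \sum_{a, b, c \in G} e_{(a,b)} \otimes e_{(c-b, b)} \otimes e_{(a, c-a)} .
\]
In each of the three tensor factors, the two coordinates are in affine bijection with a distinct pair from $\{(a,b), (b,c), (a,c)\}$. After relabelling basis vectors of $G \times G$ in each factor, the resulting tensor is the matrix-multiplication tensor $\langle |G|, |G|, |G| \rangle$. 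Because the relabelling is a bijection of basis vectors (not a nontrivial linear combination), the underlying directed tripartite hypergraph $H_{\cor, G}$ is isomorphic to the matrix-multiplication hypergraph with vertex set $G \times G$ in each factor and hyperedges $((a,b), (b,c), (a,c))$ for $a,b,c \in G$.

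For Step 2, I would use that $H_{\cor, G}^{\boxtimes n}$ is the corner hypergraph over $G^n$, so by Step 1 it is isomorphic to the matrix-multiplication hypergraph on $N \times N$ vertices, with $N = |G|^n$. Step 3 is to apply Strassen's construction \cite{Str91}: from a Behrend-type 3-AP-free subset of $[N]$ one obtains an induced matching of size $N^{2 - o(1)}$ in the matrix-multiplication hypergraph on $[N]^2$, where the $o(1)$ vanishes as $N \to \infty$. Transporting this matching through the Step 1--2 isomorphism and substituting $N = |G|^n$ yields an induced matching of size $|G|^{2n - o(n)}$ in $H_{\cor, G}^{\boxtimes n}$, which is the claim.

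The main obstacle I anticipate is Step 1: one must check carefully that the affine change of variables really is a vertex bijection on the directed tripartite hypergraph (rather than merely a linear isomorphism of tensor spaces over a field), since only the former preserves induced matchings. A minor bookkeeping point is converting Strassen's single-copy $o(1)$ bound, naturally expressed in $\log N$, into an $o(n)$ error after substituting $N = |G|^n$; this is immediate from the Behrend estimate $|S| \geq N \, e^{-O(\sqrt{\log N})}$ provided $|G|$ is treated as a constant.
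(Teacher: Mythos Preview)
Your proof is correct, and it takes a genuinely different route from the paper's. The paper proceeds abstractly: it shows in Theorem~\ref{prop:corner-tight} that the support $\Phi_{H_{\cor,G}}$ is \emph{tight} by exhibiting injective maps $u_1,u_2,u_3:G\times G\to\Z$ summing to zero on every corner, and then invokes Strassen's characterization (Theorem~\ref{thm:tight}) that for tight supports $\asympsubrank_{\IM}(\Phi_H)=\max_P\min_i 2^{H(P_i)}$; taking $P$ uniform on $\Phi_{H_{\cor,G}}$ gives uniform marginals and hence the value $|G|^2$. You instead observe the structural fact that the change of variables $(x,y,\lambda)\mapsto(a,b,c)=(x,y,x+y+\lambda)$, followed by the vertex bijections $(c-b,b)\mapsto(b,c)$ and $(a,c-a)\mapsto(a,c)$ in the second and third parts, identifies $H_{\cor,G}$ with the matrix-multiplication hypergraph $\langle |G|,|G|,|G|\rangle$; then you quote Strassen's Behrend-type lower bound $\subrank_{\IM}(\langle N,N,N\rangle)\geq N^{2-o(1)}$ directly.

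Both arguments ultimately rest on the same Strassen machinery (the tight-set theorem is the abstraction of the Behrend construction for $\langle N,N,N\rangle$), so neither is strictly more elementary. Your route has the advantage of exposing a clean structural reason---$H_{\cor,G}$ \emph{is} matrix multiplication up to coordinate relabelling---which makes the maximality of the asymptotic induced matching number unsurprising and links the corner barrier directly to the well-known subrank of $\langle N,N,N\rangle$. The paper's route stays within the self-contained framework of tight sets and support functionals that it develops for other purposes, avoiding any appeal to the matrix-multiplication literature. Your concern about Step~1 is well placed but resolves cleanly: the three coordinate maps are group automorphisms of $G\times G$, hence bijections on basis vectors, so induced matchings transfer verbatim.
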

We prove this result by establishing in Theorem~\ref{prop:corner-tight} that the adjacency tensor of the hypergraph $H_{\cor, G}$ is \emph{tight} (see Definition~\ref{Tighttensor}). Strassen showed in~\cite{Str91} that for tight sets, the asymptotic induced matching number is characterized by the support functionals. By computing the support functionals for the relevant tensors, we establish the claimed result in Corollary~\ref{cor:cor-tight}. Note that if we could ensure that $I_1 = I_2 = I_3$, this would solve Problem~\ref{eq:main_question}. We computed the maximum independent set and maximum induced matching for $H_{\cor, \FF_2}^{\boxtimes n}$ for small powers $n = 1,2,3$ (see Table~\ref{tab:max_induced_matching}) and we found that the maximum independent set is strictly smaller than the maximum induced matching for $n = 2$ and $n = 3$. This motivates the search for methods that go beyond the maximum induced matching barrier. For comparison, we also give the analogous numbers for the cap set hypergraph~$H_{\capset}$ (which is an undirected hypergraph), where, interestingly, the maximum independent set and the maximum induced matching are equal. 

\begin{table}
	\begin{center}
	\begin{tabular}{ ccc } 
		\multicolumn{3}{c}{$H_{\capset}^{\boxtimes n}$} \\[0.5em]
		\toprule
		$n$ & \begin{minipage}{7em}\centering independence\\ number\end{minipage} & \begin{minipage}{9em}\centering induced matching\\ number\end{minipage} \\ \midrule
		1 & $2$ &$2$ \\ 
		2 & $4$ &$4$ \\ 
		3 & $9$ &$9$ \\ \bottomrule
	\end{tabular}
	\quad
	\begin{tabular}{ ccc } 
	      \multicolumn{3}{c}{$H_{\cor, \FF_2}^{\boxtimes n}$} \\[0.5em]
		\toprule
		$n$ & \begin{minipage}{7em}\centering independence\\ number\end{minipage} & \begin{minipage}{9em}\centering induced matching\\ number\end{minipage} \\ \midrule
		1 & $2$ & $2$  \\ 
		2 & $8$ & $9$ \\ 
		3 & $24$ & $32$\\ 
	\bottomrule
	\end{tabular}
	\end{center}
	\caption{Independence number and induced matching number for small powers of the cap set hypergraph $H_{\capset}$ and corner hypergraph $H_{\cor, \FF_2}$. Interestingly, the independence number and induced matching number of powers of the cap set hypergraph are exactly equal for the powers $n = 1,2,3$. For the corner hypergraph we see that they are different already for the second and third power.}
	\label{tab:max_induced_matching}
\end{table}

\section{Lower bounds from the combinatorial degeneration method}
\label{application}
\label{lowerboundShannon}

In this section we discuss three methods to prove lower bounds on the Shannon capacity of directed $3$-uniform hypergraphs: the probabilistic method, the combinatorial degeneration method and the acyclic set method. We apply these methods to the corner problem---the problem of constructing large corner-free sets---which as a consequence gives new NOF communication protocols for the Eval problem. We begin by discussing the corner problem and its relation to NOF communication complexity.

\subsection{Corner problem, cap set problem and number on the forehead communication}
\label{cornerproblem}

\subsubsection*{Hypergraphs}

We recall the definition of directed $k$-uniform hypergraphs and basic properties of Shannon capacity on directed $k$-uniform hypergraphs. 
\begin{definition}
	\label{uniformhypegraph}
	A \emph{directed $k$-uniform hypergraph} $H$ is a pair $H = (V,E)$ where $V$ is a finite set of elements called vertices, and $E$ is a set of $k$-tuples of elements of $V$ which are called hyperedges or edges. If the set of edges $E$ is invariant under permuting the $k$ coefficients of its elements, then we may also think of $H$ as an \emph{undirected $k$-uniform hypergraph}.
\end{definition}

Let $H = (V,E)$ be a directed $k$-uniform hypergraph with $n$ vertices. The~\emph{adjacency tensor} $A$ of $H$ is defined as
\begin{align*}
\label{adj_tensor}
A_{i_1,\dots,i_k}=
\begin{cases}
1 \text{ if } i_1 = i_2 =\dots = i_k \text{ or } (i_1,\dots,i_k) \in E,\\
0 \text{ otherwise}.
\end{cases}
\end{align*}

\begin{definition}
	The \emph{strong product} of a pair of directed $k$-uniform hypergraphs $G = (V_G,E_G)$ and $H = (V_H, E_H)$ is denoted $G\boxtimes H$ and defined as follows. It is a directed $k$-uniform hypergraph with vertex set $V_G \times V_H$ and the following edge set: Any $k$ vertices $(g_1,h_1), \dots, (g_k,h_k) \in V_G \times V_H$ form an edge $((g_1,h_1), \dots, (g_k,h_k))$ if one of the following three conditions holds:
	\begin{enumerate}
		\item $g_1 =\dots = g_k$ and $(h_1,\dots,h_k) \in E_H$
		\item $(g_1,\dots ,g_k) \in E_G$ and $h_1 = \dots = h_k$
		\item $(g_1,\dots ,g_k) \in E_G$ and $(h_1,\dots, h_k) \in E_H$
	\end{enumerate} 
\end{definition}

\begin{definition} \label{independenset}
	An \emph{independent set} in a directed $k$-uniform hypergraph $H = (V,E)$ is a subset $S$ of the vertices $V$ that induces no edges, meaning for every $(e_1, \ldots, e_k) \in E$ there is an $i \in [k]$ such that $e_i \not\in S$. %
	The \emph{independence number} of $H$, denoted by $\alpha(H)$, is the maximal size of an independent set in $H$.
\end{definition}

	If $S$ and $T$ are independent sets in two directed $k$-uniform hypergraphs $G$ and $H$, respectively, then $S\times T$ is an independent set in the strong product $G \boxtimes H$. Therefore, we have the supermultiplicativity property $\alpha(G)\alpha(H) \leq \alpha(G \boxtimes H)$.
	For any directed $k$-uniform hypergraph $H$, let $H^{\boxtimes n}$ denote the $n$-fold product of $H$ with itself. %
\begin{definition}\label{Shannoncapacity} 
	The Shannon capacity of a directed $k$-uniform hypergraph $H$ %
	is defined as
	\begin{align*}
	\Theta(H)\coloneqq \lim_{n \ra \infty}(\alpha(H^{\boxtimes n}))^{1/n}.
	\end{align*}
\end{definition}
By Fekete's lemma we can write $\Theta(H) = \sup_{n }(\alpha(H^{\boxtimes n}))^{1/n}$. The following proposition can be deduced directly from the definition of Shannon capacity.
\begin{proposition}
	\label{basic_property_Shannon_1}
	Suppose $H$ is a directed $k$-uniform hypergraph with $m$ vertices and there is an independent set of size $s$ in $H^{\boxtimes n}$. Then $s^{\frac{1}{n}} \leq \Theta(H) \leq m$.
\end{proposition}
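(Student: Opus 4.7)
The plan is to derive both inequalities directly from the definition of Shannon capacity together with the Fekete-type reformulation $\Theta(H) = \sup_n (\alpha(H^{\boxtimes n}))^{1/n}$ that was noted just before the proposition. Since the statement is a basic sanity check, neither bound should require any genuine combinatorial argument; the work is really just unwinding definitions.

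For the lower bound $s^{1/n} \leq \Theta(H)$, I would observe that an independent set of size $s$ in $H^{\boxtimes n}$ directly witnesses $\alpha(H^{\boxtimes n}) \geq s$, hence $(\alpha(H^{\boxtimes n}))^{1/n} \geq s^{1/n}$. Taking the supremum over all powers (using the Fekete reformulation) gives $\Theta(H) \geq s^{1/n}$.

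For the upper bound $\Theta(H) \leq m$, I would use that $H^{\boxtimes n}$ has vertex set $V^{\times n}$ of cardinality $m^n$, so trivially $\alpha(H^{\boxtimes n}) \leq m^n$, giving $(\alpha(H^{\boxtimes n}))^{1/n} \leq m$ for every $n$. Passing to the limit (or supremum) preserves the inequality, yielding $\Theta(H) \leq m$.

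There is no real obstacle here: the only thing to be a little careful about is invoking the correct formulation of $\Theta(H)$, in particular the $\sup$-characterization from Fekete's lemma, to conclude the lower bound from a single finite witness $n$ rather than having to argue about the limit.
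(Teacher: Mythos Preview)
Your proposal is correct and matches the paper's approach: the paper does not spell out a proof but simply remarks that the proposition ``can be deduced directly from the definition of Shannon capacity,'' which is exactly the unwinding-of-definitions argument you give using the Fekete $\sup$-reformulation.
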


\subsubsection*{Corner problem}
Let $(G,+)$ be a finite Abelian group.
A \emph{corner} in $G \times G$ is a three-element set of the form
$\{(x,y),(x+\lambda,y),(x,y+\lambda)\}$
for some $x,y,\lambda \in G$ and $\lambda \neq 0$. The element $(x,y)$ is called the center of this corner. 
Let~$\rc{G}$ be the size of the largest subset $S\subseteq G \times G$ such that no three elements in $S$ form a corner. %
The corner problem asks to determine $\rc{G}$ given $G$.

Trivially, we have the upper bound $\rc{G} \leq |G|^2$. 
The best-known general upper bound on~$\rc{G}$ comes from~\cite{Shkredov2006OnAG,Shkredov_2006}, and reads
\begin{align*}
\rc{G} \leq \frac{|G|^2}{(\log \log |G|)^{c}} \, ,
\end{align*} 
where $0<c<\frac{1}{73}$ is an absolute constant. In the finite field setting, in \cite{Lacey_07} the following better upper bound for $\rc{G}$ with $G = \FF_{2}^{n}$ was obtained:
\begin{align*}
\rc{\FF_{2}^n} \leq \bigO\Bigl(|G|^2 \frac{\log\log\log |G|}{\log\log |G| }\Bigr) \, .
\end{align*} 

We may phrase the corner problem as a hypergraph independence problem.
We define $H_{\cor,G} = (V,E)$ to be the directed 3-uniform hypergraph with
$V = \{(g_1,g_2) : g_1,g_2 \in G\}$ and $E = \{((g_1,g_2),(g_1+\lambda,g_2),(g_1,g_2+\lambda)): g_1,g_2,\lambda \in G, \, \lambda \neq 0  \}$.
Then by construction:
\begin{lemma}
	\label{lem:corner_indep_set}
	$\rc{G^n} = \alpha(H_{\cor,G}^{\boxtimes n})$.
\end{lemma}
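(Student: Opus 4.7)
The plan is to unpack definitions and show that, under the natural identification of vertex sets, the edges of the strong $n$-th power $H_{\cor,G}^{\boxtimes n}$ correspond bijectively to the nontrivial corners in $G^n \times G^n$. Once this is established, independent sets of $H_{\cor,G}^{\boxtimes n}$ are precisely corner-free subsets of $G^n \times G^n$, and the equality of cardinalities follows.

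First, I would identify the vertex sets. By definition $V(H_{\cor,G}^{\boxtimes n}) = (G \times G)^n$, which via the obvious rearrangement $((a_1,b_1),\ldots,(a_n,b_n)) \mapsto ((a_1,\ldots,a_n),(b_1,\ldots,b_n))$ is in bijection with $G^n \times G^n$. Throughout the argument I write vertices as pairs $u = (x,y)$ with $x,y \in G^n$, and use superscripts $x^{(j)}, y^{(j)}$ for coordinates.

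Next, I would iterate the strong product definition to see that a triple of vertices $(u_1,u_2,u_3)$ is an edge of $H_{\cor,G}^{\boxtimes n}$ exactly when, for every coordinate $j \in [n]$, the triple $(u_1^{(j)}, u_2^{(j)}, u_3^{(j)})$ is either the diagonal $(v,v,v)$ for some $v \in G \times G$ or an edge of $H_{\cor,G}$, and at least one coordinate is genuinely an edge (i.e.\ not all coordinates are diagonals). Writing $u_i = (x_i, y_i)$ and reading off the form of the corner edges in $H_{\cor,G}$, the per-coordinate condition is equivalent to: there exists $\lambda^{(j)} \in G$ (with $\lambda^{(j)} = 0$ allowed, corresponding to the diagonal case) such that $x_2^{(j)} = x_1^{(j)} + \lambda^{(j)}$, $x_3^{(j)} = x_1^{(j)}$, $y_2^{(j)} = y_1^{(j)}$, and $y_3^{(j)} = y_1^{(j)} + \lambda^{(j)}$.

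Assembling these coordinate-wise relations into $\lambda := (\lambda^{(1)},\ldots,\lambda^{(n)}) \in G^n$ gives the global identity $(u_1,u_2,u_3) = ((x_1,y_1),(x_1+\lambda,y_1),(x_1,y_1+\lambda))$. The condition that at least one coordinate is a genuine edge of $H_{\cor,G}$ (i.e.\ $\lambda^{(j)} \neq 0$ for some $j$) becomes exactly $\lambda \neq 0$, which is the nontriviality condition in the definition of a corner in $G^n \times G^n$. The converse direction is immediate: any nontrivial corner in $G^n \times G^n$ produces per-coordinate diagonals where $\lambda^{(j)} = 0$ and per-coordinate edges where $\lambda^{(j)} \neq 0$, satisfying the strong-product edge condition.

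There is no real obstacle; the only care needed is bookkeeping the ``not all diagonal'' clause in the strong product so that it lines up precisely with the nontriviality requirement $\lambda \neq 0$ in the definition of a corner, which is handled above. With the edge correspondence in hand, $S \subseteq G^n \times G^n$ is corner-free if and only if the corresponding subset of $V(H_{\cor,G}^{\boxtimes n})$ is independent, so taking maxima yields $\rc{G^n} = \alpha(H_{\cor,G}^{\boxtimes n})$.
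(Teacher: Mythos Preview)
Your proposal is correct and is precisely the natural unfolding of the definitions that the paper leaves implicit: the paper states the lemma as holding ``by construction'' and gives no further proof. Your careful bookkeeping of the strong-product edge condition (diagonal-or-edge in each coordinate, not all diagonal) against the nontriviality requirement $\lambda \neq 0$ in $G^n$ is exactly what is needed, and there is nothing to add.
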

As a consequence,
$\rc{G^n} = \Theta(H_{\cor,G})^{n - o(n)} \, .$

\begin{example}
	Let $G$ correspond to addition in $\FF_{2}$. Then $H_{\cor, G} = (V,E)$ with
	\begin{align*}
	E = \{((0,0),(1,0),(0,1)),((0,1),(1,1),(0,0)),((1,0),(0,0),(1,1)),((1,1),(0,1),(1,0)) \}.
	\end{align*}
	Under the labeling $(0,0)=0,(0,1)=1,(1,0)=2$ and $(1,1)=3$ we will think of $H_{\cor,\FF_{2}}$ as the hypergraph $H_{\cor,\FF_{2}} = (V,E)$ with $V = (0,1,2,3)$ and $E = \{(0,2,1),(1,3,0),(2,0,3), (3,1,2) \}$. 
\end{example}  
Closely related to $\rc{G}$ is the minimum number of colors needed to color~$G \times G$ so that no corner is monochromatic, which we denote by $\cc{G}$. Then:
\begin{proposition}[\cite{CFL83,LPS18}] %
\label{prop:coloring_indep_set}
	Let $(G,+)$ be a finite Abelian group. There is a constant~$c$, such that for every $n \in \NN$,
	\label{prop:corner_and_color}
	\begin{align*}
	\frac{|G|^{2n}}{\rc{G^n}} \leq \cc{G^n} \leq c\frac{n|G|^{2n}\log |G|}{\rc{G^n}} \, .
	\end{align*}
\end{proposition}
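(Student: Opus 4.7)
For the lower bound my plan is a direct pigeonhole argument: any valid corner-coloring of $G^n \times G^n$ with $\cc{G^n}$ colors partitions the space into color classes each of which is itself corner-free, hence of size at most $\rc{G^n}$. Summing class sizes gives $|G|^{2n} \leq \cc{G^n}\cdot\rc{G^n}$, which rearranges to $\cc{G^n} \geq |G|^{2n}/\rc{G^n}$.

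For the upper bound my plan is the standard probabilistic covering argument using random translates of a maximum corner-free set. Let $S \subseteq G^n \times G^n$ be corner-free with $|S| = \rc{G^n}$. The key observation is that the corner relation is translation-invariant: if $(x,y),(x+\lambda,y),(x,y+\lambda)$ is a corner then so is its shift by any $(a,b)$. Consequently every translate $S+(a,b) := \{(s_1+a, s_2+b) : (s_1,s_2) \in S\}$ is again corner-free, and every point of $G^n \times G^n$ lies in exactly $|S|$ such translates. I would then pick $T$ shifts $(a_1,b_1),\dots,(a_T,b_T)$ independently and uniformly at random in $G^n \times G^n$; for any fixed point the probability of being missed by all $T$ translates is $(1-|S|/|G|^{2n})^T \leq e^{-T|S|/|G|^{2n}}$, and a union bound over $|G|^{2n}$ points shows that for $T$ of order $n|G|^{2n}\log|G|/\rc{G^n}$ the expected number of uncovered points is below $1$, so some deterministic choice of shifts covers $G^n \times G^n$. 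Fixing such a list and coloring each point by the least index $i$ with $(x,y)\in S+(a_i,b_i)$ yields a valid corner-coloring with $T$ colors, since each color class lies inside a translate of $S$ and is therefore corner-free. Absorbing universal constants (including the $\log$-versus-$\ln$ conversion and the union-bound slack) into $c$ gives the claimed bound.

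The main obstacle, to the extent there is one, is just verifying the two ingredients: translation-invariance of the corner relation (immediate from the definition of a corner) and the routine random-covering calculation. No structural information about $G$ beyond being a finite abelian group is used, which is why the bound holds in the generality stated and why the argument is already attributed to \cite{CFL83,LPS18} in the literature.
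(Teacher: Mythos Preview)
The paper does not actually prove this proposition; it states it with attribution to \cite{CFL83,LPS18} and uses it as a black box. Your argument is correct and is precisely the standard one found in those references: pigeonhole for the lower bound, and random translates of a maximum corner-free set together with a union bound for the upper bound.
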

For $G  = \FF_{2}$, the current upper bound in the literature is $\cc{\FF_{2}^{n}} \leq \bigO(n2^{n/2})$~\cite{LPS18}, which we will improve on.

\subsubsection*{Number on the forehead communication}

The corner problem is closely related to the Number On the Forehead (NOF) communication model~\cite{CFL83}.
In this model, $k$ players wish to evaluate a function $F: \mathcal{X}_1 \times \dots \times \mathcal{X}_k \ra \{0,1\}$ on a given input $x_1,\dots,x_k$. The input is distributed among the players in a way that player~$i$ sees every $x_j$ for $j \neq i$. This scenario is visualized as $x_i$ being written on the forehead of Player~$i$. The computational power of everyone is unlimited, but the number of exchanged bits has to be minimized. 
Let $\dcc{k}(F)$ be the minimum number of bits they need to communicate to compute the function $F$ in the NOF model with $k$ players. 
Many questions that have been thoroughly analyzed for the two-player case remain open in the general case of $3$ or more players, where lower bounds on communication complexity are much more difficult to prove. 
The difficulty in proving lower bounds arises from the overlap in the inputs known to different players. 

One interesting function in this context is the family of Eval functions. The function $\eval{G^n}: (G^n)^3 \ra \{0,1\}$ outputs $1$ on inputs $x_1, x_2, x_3 \in G^n$ if and only if $x_1 +x_2+ x_3 = 0$. The trivial algorithm gives that $\dcc{3}(\eval{G^n}) \leq \lceil n\log(|G|) \rceil+1$. For two players Yao \cite{Yao79} proved that $\dcc{2}(\eval{G^n}) = \Omega(n)$ (for nontrivial $G$). But, for three players it is an open problem whether $\dcc{3}(\eval{G^n}) = \Omega(n)$.

\begin{lemma}[\cite{beigel2006multiparty}]
	\label{corner_nof_eval}
	$\log(\cc{G^n}) \leq \dcc{3}(\eval{G^n}) \leq 2+\log(\cc{G^n}) \,.$  
\end{lemma}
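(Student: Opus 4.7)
The plan is to establish the two inequalities separately by translating between corner-free colorings of $G^n \times G^n$ and NOF protocols for $\eval{G^n}$. The key ingredient in both directions is a triple of points in $G^n \times G^n$ that each player can compute from their own view, arranged so that they form a nontrivial corner exactly when $x_1 + x_2 + x_3 \neq 0$.

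For the upper bound $\dcc{3}(\eval{G^n}) \leq \log(\cc{G^n}) + 2$, I would start from an optimal coloring $\chi : G^n \times G^n \to [c]$ with $c = \cc{G^n}$ having no monochromatic corner, and define
\begin{align*}
P_1 = (x_2, x_3), \quad P_2 = (-x_1 - x_3, x_3), \quad P_3 = (x_2, -x_1 - x_2),
\end{align*}
so that each $P_i$ is computable by Player~$i$. A direct computation gives $P_2 - P_1 = (-(x_1+x_2+x_3), 0)$ and $P_3 - P_1 = (0, -(x_1+x_2+x_3))$, so setting $\lambda = -(x_1+x_2+x_3)$, the three points form the corner $\{(x,y), (x+\lambda,y), (x,y+\lambda)\}$ centered at $P_1$. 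If $x_1 + x_2 + x_3 = 0$, all three points coincide and share a color; if $x_1 + x_2 + x_3 \neq 0$, they form a nontrivial corner, so the no-monochromatic-corner property forces some color mismatch. The protocol thus has Player~1 broadcast $\chi(P_1)$ using $\lceil \log c \rceil$ bits, and Players~2 and~3 each broadcast one bit indicating whether $\chi(P_i) = \chi(P_1)$, for a total of $\log(\cc{G^n}) + \bigO(1)$ bits.

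For the lower bound $\log(\cc{G^n}) \leq \dcc{3}(\eval{G^n})$, I would take an optimal NOF protocol of cost $d = \dcc{3}(\eval{G^n})$, whose transcript partitions the input space into at most $2^d$ cylinder intersections $S_1 \cap S_2 \cap S_3$ (with $S_i$ independent of $x_i$), and color $(a,b) \in G^n \times G^n$ by the transcript on the input $(a, b, -a-b)$. Suppose, toward a contradiction, that this coloring had a monochromatic corner $\{(x, y), (x+\lambda, y), (x, y+\lambda)\}$ with $\lambda \neq 0$; then the three tuples $A = (x, y, -x-y)$, $B = (x+\lambda, y, -x-\lambda-y)$ and $C = (x, y+\lambda, -x-y-\lambda)$ all lie in a common cylinder intersection $S_1 \cap S_2 \cap S_3$. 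Consider the auxiliary tuple $T = (x, y, -x-y-\lambda)$, which has sum $-\lambda \neq 0$. Its $(x_2, x_3)$ matches that of $B$, so $T \in S_1$; its $(x_1, x_3)$ matches that of $C$, so $T \in S_2$; and its $(x_1, x_2)$ matches that of $A$, so $T \in S_3$. Hence $T$ lies in the same accepting cylinder intersection, but $\eval{G^n}(T) = 0$, a contradiction. This yields $\cc{G^n} \leq 2^d$.

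The main obstacle is writing down the correct ``gadgets'' $P_1, P_2, P_3$ and $T$: these are the algebraic objects that bridge the corner condition on $G^n \times G^n$ and the zero-sum condition on $(G^n)^3$, and once they are in hand both arguments become straightforward bookkeeping. A minor technical point is the exact additive constant in the upper bound, since Player~1's broadcast of $\chi(P_1)$ uses $\lceil \log c \rceil$ bits while Players~2 and~3 contribute one bit each, so the stated ``$+2$'' matches up to the usual rounding convention on $\log$.
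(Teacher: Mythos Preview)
The paper does not supply its own proof of this lemma; it simply cites \cite{beigel2006multiparty}. Your argument is correct and is exactly the standard one: for the upper bound you use the three ``corner'' points $P_1,P_2,P_3$ each visible to the respective player, and for the lower bound you color by transcript and use the cylinder-intersection structure together with the auxiliary 0-input $T$ to derive a contradiction. Your computations check out, including the pairing of coordinates of $T$ with those of $A,B,C$, and your remark about the additive constant (the $\lceil\log c\rceil$ versus $\log c$ issue) is the only slack in the bound.
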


From Lemma~\ref{corner_nof_eval} and Proposition~\ref{prop:corner_and_color} it follows that $\Theta(H_{\cor,G})<|G|^2$ would imply that $\dcc{3}(\eval{G^n}) = \Omega(n)$, and also that lower bounds on $\rc{G^n}$ give  upper bounds on $\dcc{3}(\eval{G^n})$. 
For $G = \FF_{2}$, the best-known upper bound on $\dcc{3}(\eval{\FF_{2}^n})$ is $0.5n+\bigO(\log n)$ \cite{ACFN15} which we improve on.

\subsubsection*{Three-term arithmetic progressions and the cap set problem}
\label{capsetproblem}

A \emph{three-term arithmetic progression} in $G$ is a three-element set of the form 
$\{x, x+\lambda, x + 2\lambda\}$
for some $x, \lambda \in G$ and $\lambda \neq 0$. 
Let $r_3(G)$ be the size of the largest subset $S \subseteq G$ such that no three elements in $S$ form a three-term arithmetic progression.

Following \cite[Corollary 3.24]{zhao} there is a simple relation between corner-free sets and three-term-arithmetic-progression-free sets:

\begin{lemma}\label{lem:cap-sets-corners}
	$p^n\, r_3(\FF_p^n) \leq \rc{\FF_p^n}$
\end{lemma}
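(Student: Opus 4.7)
The plan is to use a standard ``lift'' that turns a $3$-AP-free set in $\FF_p^n$ into a corner-free set in $\FF_p^n \times \FF_p^n$, where each element of the former contributes $p^n$ elements to the latter. Throughout we treat $p$ as an odd prime so that $2$ is invertible in $\FF_p$; the inequality is vacuous otherwise.

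Let $A \subseteq \FF_p^n$ be a $3$-AP-free set of size $r_3(\FF_p^n)$, and define
\[
	S \;\coloneqq\; \bigl\{\,(x,y) \in \FF_p^n \times \FF_p^n \,:\, x + 2y \in A\,\bigr\}.
\]
First I would verify the size: for each fixed $a \in A$, the fiber $\{(x,y) : x+2y = a\}$ is an affine subspace of dimension $n$, hence has exactly $p^n$ elements. Since these fibers are disjoint across different $a \in A$, we get $|S| = p^n \cdot |A| = p^n \, r_3(\FF_p^n)$.

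Next I would show that $S$ is corner-free. Suppose towards a contradiction that $(x,y), (x+\lambda,y), (x,y+\lambda)$ form a corner in $S$ with $\lambda \neq 0$. Evaluating $x' + 2y'$ on each of the three points yields the three elements
\[
	a_1 = x+2y,\qquad a_2 = x+2y+\lambda,\qquad a_3 = x+2y+2\lambda,
\]
all of which lie in $A$ by construction. But then $a_1, a_2, a_3$ form a $3$-term arithmetic progression in $\FF_p^n$ with common difference $\lambda$, and since $p$ is odd and $\lambda \neq 0$ the three elements are distinct. This contradicts the $3$-AP-freeness of $A$. Hence $S$ is corner-free, giving $\rc{\FF_p^n} \geq |S| = p^n \, r_3(\FF_p^n)$.

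There is no real obstacle here beyond choosing the right linear form; the mild subtlety is simply that the argument needs $2 \in \FF_p^\times$ so that $a_3 \neq a_1$, which is why the construction works cleanly in odd characteristic. An alternative linear form such as $(x,y) \mapsto x - y$ would fail since the resulting three values in $A$ would include a repetition, producing only a degenerate progression.
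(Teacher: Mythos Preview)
Your main argument is correct and is essentially the same construction as the paper's: lift a $3$-AP-free set $A$ to a corner-free set by pulling back along a linear form, observe that each fiber has size $p^n$, and check that a corner maps to a nontrivial $3$-AP in $A$.

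However, your closing remark is wrong, and in fact contradicts the paper. The paper's proof uses precisely the form $(x,y)\mapsto x-y$: if $(x,y),(x+\lambda,y),(x,y+\lambda)$ all lie in $T=\{(x,y):x-y\in A\}$, then the three values
\[
x-y,\quad x-y+\lambda,\quad x-y-\lambda
\]
all lie in $A$ and, reordered as $x-y-\lambda,\ x-y,\ x-y+\lambda$, form a $3$-term arithmetic progression with common difference $\lambda$. For odd $p$ and $\lambda\neq 0$ these three values are distinct (since $2\lambda\neq 0$), so the progression is nontrivial --- there is no repetition. Thus $x-y$ works just as well as your $x+2y$; both require only that $2$ be invertible in $\FF_p$, and neither has any advantage over the other. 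You should drop the final paragraph.
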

\begin{proof}
	Let $S \subseteq \FF_p^n$ be a subset that is free of three-term arithmetic progressions. Define the subset $T = \{(x,y) : x - y \in S\}$. Then $T$ is a corner-free set of size $p^n |S|$. Indeed, if $(x,y), (x + \lambda, y), (x, y + \lambda)$ are elements of $T$, then $x-y, x+\lambda-y, x-y-\lambda$ are in $S$ and these elements form a three-term arithmetic progression.
\end{proof}

A three-term-arithmetic-progression-free subset of $\FF_3^n$ is also called a \emph{cap set}.
The notorious cap set problem is to determine how $r_3(\FF_3^n)$ grows when~$n$ goes to infinity. A priori we have that $2^n \leq r_3(\FF_3^n) \leq 3^n$.
Using Fourier methods and the density increment argument of Roth, the upper bound $r_3(\FF_3^n) \leq \bigO(3^n/n)$ was obtained by Meshulam \cite{MESHULAM1995168}, and improved only as late as 2012 to $\bigO(3^n/n^{1+\epsilon})$ for some positive constant $\epsilon$ by Bateman and Katz in \cite{bateman2012new}. Until recently it was not known whether $r_3(\FF_3^n)$ grows like $3^{n-\smallo(n)}$ or like $c^{n-\smallo(n)}$ for some $c<3$. Gijswijt and Ellenberg solved this question in 2017, showing that $r_3(\FF_3^n) \leq 2.756^{n + o(n)}$~\cite{EG17}.
The best lower bound is $2.2174^{n} \leq r_3(\FF_3^n)$ by Edel~\cite{Ed04}.
In particular, using Lemma~\ref{lem:cap-sets-corners}, this implies the lower bound $3^n \cdot 2.2174^n = 6.6522^n \leq \rc{\FF_3^n}$ for the corner problem. We will improve this lower bound in Theorem~\ref{cor:lower_F3}.

We may phrase the cap set problem as a hypergraph independence problem by defining the undirected $3$-uniform hypergraph $H_{\capset}$ consisting of three vertices $\{0,1,2\}$ and a single edge $e = \{0,1,2\}$. 
The independence number $\alpha(H_{\capset}^{\boxtimes n})$ equals $r_3(\FF_3^n)$, and thus the Shannon capacity of $H_{\capset}$ determines the rate of growth of $r_3(\FF_3^n)$.

\subsection{Probabilistic method} %
We start off with a simple and general method for obtaining lower bounds on the Shannon capacity. For any element $g \in G$, the set $\{(g,g+\lambda): \lambda \in G \}$ is an independent set of $H_{\cor,G}$, and therefore we have $\Theta(H_{\cor,G}) \geq |G|$, which we think of as the trivial lower bound. By using a simple probabilistic argument (which does not use much of the structure of $H_{\cor,G}$), we show the following nontrivial lower bound for $\Theta(H_{\cor,G})$.
\begin{proposition}
	\label{prop:simple_lowerbound}
	For any finite Abelian group $G$, we have $\Theta(H_{\cor,G}) \geq |G|^{3/2}$.  
\end{proposition}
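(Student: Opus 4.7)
The plan is to use the alteration form of the probabilistic method applied to the $n$-th tensor power hypergraph $H_{\cor,G}^{\boxtimes n}$, whose independent sets correspond (by Lemma~\ref{lem:corner_indep_set}) to corner-free subsets of $G^n \times G^n$. For each $n$, I will construct, with positive probability, a corner-free subset of $G^n \times G^n$ of size roughly $|G|^{3n/2}$, and then take the $n$-th root using $\Theta(H) = \sup_n \alpha(H^{\boxtimes n})^{1/n}$.

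Concretely, first I would fix a parameter $p \in (0,1)$ (to be optimized) and form a random set $T \subseteq G^n \times G^n$ by including each pair $(x,y)$ independently with probability $p$. By linearity of expectation, $\E|T| = p\,|G|^{2n}$. Next I would count the number of (nontrivial) corners in $G^n \times G^n$: each corner is determined by a center $(x,y) \in G^n \times G^n$ together with a nonzero $\lambda \in G^n$, so the total number of corners is exactly $|G|^{2n}(|G|^n-1) \leq |G|^{3n}$. Each such corner lies entirely in $T$ with probability $p^3$, so the expected number $N$ of corners contained in $T$ satisfies $\E N \leq p^3 |G|^{3n}$.

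Then I would perform alteration: remove one element from each corner found in $T$. The resulting set $S$ is corner-free and has expected size at least
\[
\E|S| \;\geq\; p\,|G|^{2n} - p^3\,|G|^{3n}.
\]
Optimizing the right-hand side in $p$ by setting the derivative to zero gives $p = \frac{1}{\sqrt{3}\,|G|^{n/2}}$, which yields $\E|S| \geq \frac{2}{3\sqrt{3}}\,|G|^{3n/2}$. Hence there exists a corner-free set of this size, so $\alpha(H_{\cor,G}^{\boxtimes n}) \geq \frac{2}{3\sqrt{3}}\,|G|^{3n/2}$.

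Finally, applying Proposition~\ref{basic_property_Shannon_1} (or the Fekete-style formula $\Theta(H) = \sup_n \alpha(H^{\boxtimes n})^{1/n}$) I conclude
\[
\Theta(H_{\cor,G}) \;\geq\; \lim_{n\to\infty} \left(\tfrac{2}{3\sqrt{3}}\,|G|^{3n/2}\right)^{1/n} \;=\; |G|^{3/2}.
\]
There is no real obstacle here; the only mild subtlety is that the simple first-moment deletion works because the number of corners is only $|G|^{3n}$ rather than something like $|G|^{4n}$ (the center and a single shift vector already determine all three points), which is exactly what makes the exponent $3/2$ come out. The same template extends almost verbatim to higher-dimensional corners, which will be used later to establish Theorem~\ref{th:prob-higher-intro}.
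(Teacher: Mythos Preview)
Your proposal is correct and essentially identical to the paper's proof: both pick a random subset of $G^n\times G^n$ with inclusion probability $p\approx |G|^{-n/2}$, use that the number of nontrivial corners is $|G|^{2n}(|G|^n-1)$ so the expected number of bad corners is $p^3|G|^{3n}$, and then delete one vertex per bad corner (the paper phrases this deletion as the inequality $\alpha(H)\geq |V(H)|-|E(H)|$). The only cosmetic difference is that the paper uses the exact corner count rather than your upper bound $|G|^{3n}$, leading to $p=1/\sqrt{3(|G|^n-1)}$ instead of your $p=1/(\sqrt{3}\,|G|^{n/2})$, which does not affect the conclusion.
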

\begin{proof}
	Let $|G| = m$ and $n \in \NN$. 
	Recall that the hypergraph $H_{\cor,G}^{\boxtimes n}$ has vertices given by the elements of $G^n \times G^n$ and edges given by the corners in $G^n \times G^n$. 
	Let $p =  1/{\sqrt{3(m^n-1)}}$ and choose the subset $A$ of $V(H_{\cor,G}^{\boxtimes n})$ randomly by choosing any element $(g_1, g_2) \in G^n \times G^n$ to be in the set $A$ with probability $p$. Let $H_A$ be the directed subhypergraph of $H_{\cor,G}^{\boxtimes n}$ induced by~$A$. We have $\mathbb{E}[|V(H_A)|] = m^{2n}p$. 
	Let $e$ be any edge of $H_{\cor,G}^{\boxtimes n}$. Then $e$ is of the form
	\[e = \big((g_1,g_2),(g_1+\lambda,g_2),(g_1,g_2+\lambda) \big)\]
	for some $g_1, g_2, \lambda \in G^n$ and $\lambda \neq 0$. Since $(g_1,g_2)$, $(g_1 + \lambda, g_2)$ and $(g_1, g_2 + \lambda)$ are different, and for each the probability of being in $A$ is $p$, we have that 
	 $\Pr[e \in E(H_A)] = p^3$. 
	Therefore, since $|E(H_{\cor,G}^{\boxtimes n})| = m^{2n}(m^n-1)$, we have $\mathbb{E}[|E(H_A)|] = m^{2n}(m^n-1)p^3$. 
	On the other hand, for any hypergraph $H$ we have $\alpha(H) \geq |V(H)| - |E(H)|$. 
	Therefore
	\begin{align*}
	\alpha(H_{\cor,G}^{\boxtimes n}) \geq \mathbb{E}[|V(H_A)|] - \mathbb{E}[|E(H_A)|] = \frac{2m^{2n}}{3\sqrt{3(m^{n}-1)}} \, .
	\end{align*}
	Thus find the lower bound $\Theta(H_{\cor,G}) = \lim_{n \ra \infty}\alpha(H_{\cor,G}^{\boxtimes n})^{1/n} \geq m^{3/2} \, $.     
\end{proof}
The idea in the proof of Proposition~\ref{prop:simple_lowerbound} to apply the probabilistic method to lower bound the number of remaining elements afther a ``pruning'' procedure (in this case, pruning vertices that induce edges) goes back to \cite{coppersmith1987matrix}. A similar probabilistic method construction is the driving component in the recent new upper bound on the matrix multiplication exponent~$\omega$~\cite{DBLP:conf/soda/AlmanW21}.

In terms of the corner problem, the lower bound on the Shannon capacity in Proposition~\ref{prop:simple_lowerbound} for $G = \FF_{2}$ corresponds to the upper bound $\cc{\FF_{2}^{n}} \leq \bigO(n2^{n/2})$ (via Proposition~\ref{prop:corner_and_color}). This upper bound is similar to the bound provided in~\cite[Corollary 26 in the ITCS version]{LPS18}.
\begin{remark}\label{rem:multiplayer}
	The proof of Proposition~\ref{prop:simple_lowerbound} directly extends from 2-dimensional corners to $k$-dimensional corners, which are sets of the form 
	\[
		\{(x_1, x_2, \ldots, x_k), (x_1 + \lambda, x_2,\ldots, x_k), \ldots, (x_1, x_2, \ldots, x_k + \lambda)\}.
	\]
	Just like the Eval problem on 3 players is closely related to 2-dimensional corners in $(G^n)^{\times 2}$, the Eval function on $k+1$ players is closely related to $k$-dimensional corners in $(G^n)^{\times k}$. 
	By a similar argument as the proof of Lemma~\ref{corner_nof_eval} we have that the $k+1$ player NOF complexity is upper bounded by $\dcc{k+1}(\eval{G^n}) \leq k+ \cck{k}{G^n}$, where $\cck{k}{G^n}$ is minimum number of colors that we can use to color $(G^{n})^{\times k}$ such that no $k$-dimensional corner is monochromatic. Letting $\rck{k}{G^n}$ denote the size of the largest $k$-dimensional corner free set in ${G^n}^{\times k}$, we have similar to Proposition~\ref{prop:coloring_indep_set} the relation between $\rck{k}{G^n}$ and $\cck{k}{G^n}$ given by 
	\[
		\frac{|G|^{kn}}{\rck{k}{G^n}} \leq \cck{k}{G^n} \leq \frac{nk|G|^n\log(|G|)}{\rck{k}{G^n}},
	\]
	which is proved in~\cite{LPS18}. From a similar probabilistic method argument as in the proof of Proposition~\ref{prop:simple_lowerbound}, choosing each $(x_1,\dots,x_k) \in (G^n)^{\times k}$ independently at random with probability $p=\frac{1}{[(k+1)(|G|^n-1)]^{1/k}}$, we get
	\begin{align*}
		\rck{k}{G^n} \geq \frac{k|G|^{kn}}{|G|^{n/k}(k+1)^{\frac{k+1}{k}}} \, , 
	\end{align*}
	as a consequence one has $\Theta(H_{k,\cor,G}) \geq |G|^{k-1/k}$, where $H_{k,\cor,G}$ is directed $(k+1)$-uniform hypergraph that construct for the $k$-dimensional corner. Furthermore from the lower bound of $\rck{k}{G^n}$, we have 
	\begin{align*}
		\dcc{k+1}(\eval{G^n})  \leq \frac{n}{k} \log |G|+\log n +\log\log |G| + (1+\frac{1}{k})\log(1+k)+k \, .
	\end{align*}
	If we take $k = \log n$ (for instance), then $\dcc{k+1}(\eval{G^n}) \leq \frac{n}{\log n}\log |G| +\bigO(\log n)$, that is, we obtain a sublinear upper bound for $\dcc{\log n}(\eval{G^n})$ in $n$.
\end{remark}

\subsection{Combinatorial degeneration method}\label{subsubsec:comb-degen}
We now introduce the combinatorial degeneration method for lower bounding Shannon capacity. Combinatorial degeneration is an existing concept from algebraic complexity theory introduced by Strassen in \cite[Section~6, in particular Theorem 6.1]{Strassen1987RelativeBC}%
\footnote{The precise connection to \cite{Strassen1987RelativeBC} is as follows. Strassen defines the notion of \emph{M-degeneration} on tensors. In our terminology, a tensor is an $M$-degeneration of another tensor, if the support of the first is a combinatorial degeneration of the support of the second. The terminology ``combinatorial degeneration'', which does not refer to tensors, but rather directly to their supports (hence the adjective ``combinatorial''), was introduced in~\cite[Definition~15.29]{PMM97}.}. `'
In that original setting it was used as part of the construction of fast matrix multiplication algorithms \cite[Definition~15.29 and Lemma~15.31]{PMM97}, and, in a broader setting, combinatorial degeneration was used to construct large induced matchings in \cite[Lemma~3.9]{Alon2013}, \cite[Lemma~5.1]{DBLP:conf/innovations/AlmanW18} and \cite[Theorem~4.11]{CVZ18}. 
However, we will be using it in a novel manner in order to construct independent sets instead of induced matchings.
We will subsequently apply the combinatorial degeneration method  to get new bounds for the corner problem. We expect the method to be useful in the study of other problems besides the corner problem as well. First we must define combinatorial degeneration. 
\begin{definition}[Combinatorial degeneration]\label{def:comb-degen}
	Let $I_1, \ldots, I_k$ be finite sets.
	Let $\Phi \subseteq \Psi \subseteq I_1 \times \dots \times I_k$. We say that $\Phi$ is a \emph{combinatorial degeneration} of $\Psi$, and write~$\Psi \unrhd \Phi$, if there are maps $u_i:I_i \ra \Z$ ($i \in [k]$) such that for every $x = (x_1, \ldots, x_k) \in I_1 \times \dots \times I_k$, if~$x \in \Psi \setminus \Phi$, then $\sum_{i=1}^{k}u_i(x_i)>0$, and if $x \in \Phi$, then $\sum_{i=1}^{k}u_i(x_i) = 0$.  
\end{definition}

\begin{example}
	As a quick example of a combinatorial degeneration, let
	\begin{align*}
	\Phi &= \{(0,0,0),(1,1,0),(1,0,1) \}, \\
	\Psi &= \{(0,0,0),(1,1,0),(1,0,1),(0,1,1)\}.
	\end{align*}
	Then we have a combinatorial degeneration $\Psi \unrhd \Phi$ by picking the maps $u_1(0) = u_2(0) = u_3(0) = 0$, and $u_1(1) = -1$, $u_2(1) = u_3(1) = 1$. 
\end{example}

We apply combinatorial degeneration in the following fashion to get Shannon capacity lower bounds:

\begin{theorem}[Combinatorial degeneration method]\label{thm:combinatorial_degeneration}
Let $H = (V,E)$ be a directed $k$-uniform hypergraph. Let $S \subseteq V$. Let $\Psi = E \cup\{(v,\ldots, v) : v\in V\}$ and let $\Phi = \{(v,\ldots, v) : v \in S\}$ and suppose that $\Psi \unrhd \Phi$. Then $\Theta(H) \geq \abs{S}$.
\end{theorem}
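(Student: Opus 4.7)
The plan is to construct an independent set of size at least $|S|^n / \poly(n)$ in the $n$-fold strong product $H^{\boxtimes n}$ and then let $n \to \infty$. I will use the degeneration maps $u_1, \dots, u_k : V \to \Z$ witnessing $\Psi \unrhd \Phi$ in order to ``filter'' a large subset of $S^n$ whose internal sums satisfy a rigid linear identity, which in turn rules out the presence of any product edge.

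The starting observation is the following. For $a = (a_1, \ldots, a_n) \in V^n$ and $i \in [k]$, set $U_i(a) := \sum_{j=1}^n u_i(a_j)$. If $(w_1, \ldots, w_k)$ is an edge of $H^{\boxtimes n}$, then for every coordinate $j \in [n]$ the tuple $(w_1^{(j)}, \ldots, w_k^{(j)})$ lies in $\Psi$ (being either an edge of $H$ or a diagonal loop), so by the defining property of the $u_i$'s one has $\sum_{i=1}^k u_i(w_i^{(j)}) \geq 0$, with equality exactly when $(w_1^{(j)}, \ldots, w_k^{(j)}) \in \Phi$. Because any product edge must have at least one coordinate which is a genuine edge of $H$ (hence lies in $\Psi \setminus \Phi$), summing over $j$ yields the strict inequality $\sum_{i=1}^k U_i(w_i) > 0$.

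Now restrict attention to $S^n \subseteq V^n$, and for every type $\tau : S \to \Z_{\geq 0}$ with $\sum_{s \in S} \tau(s) = n$ let $T_\tau \subseteq S^n$ denote the set of tuples in which each $s \in S$ appears exactly $\tau(s)$ times. The crucial point is that on $T_\tau$ the quantities $U_i$ are \emph{constant}: for any $a \in T_\tau$ one has $U_i(a) = \sum_{s \in S} \tau(s) u_i(s)$, depending on $i$ and $\tau$ but not on $a$. Consequently, if some edge $(w_1, \ldots, w_k)$ of $H^{\boxtimes n}$ had all vertices $w_\ell \in T_\tau$, one would compute
\[
\sum_{i=1}^k U_i(w_i) \;=\; \sum_{i=1}^k \sum_{s \in S} \tau(s)\, u_i(s) \;=\; \sum_{s \in S} \tau(s) \Bigl( \sum_{i=1}^k u_i(s) \Bigr) \;=\; 0,
\]
using that $\sum_i u_i(s) = 0$ for every $s \in S$. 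This contradicts the strict positivity from the previous paragraph, so each $T_\tau$ is in fact an independent set in $H^{\boxtimes n}$.

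To finish, $S^n$ partitions into at most $\binom{n + |S| - 1}{|S| - 1} \leq (n+1)^{|S|-1}$ type classes, so by pigeonhole some $T_\tau$ has size at least $|S|^n / (n+1)^{|S|-1}$, giving $\alpha(H^{\boxtimes n}) \geq |S|^n / \poly(n)$; taking $n$-th roots and letting $n \to \infty$ then yields $\Theta(H) \geq |S|$. I expect the only substantive step to be the rigidity observation that $U_i$ is constant on a single type class of $S^n$, which converts the weak inequality $\sum_i U_i(w_i) > 0$ into an honest contradiction; the rest is a routine type-counting argument in the spirit of Strassen's constructions.
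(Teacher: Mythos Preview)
Your proof is correct and follows essentially the same approach as the paper's: both use the degeneration maps to show that the total score $\sum_i U_i$ vanishes on a type class of $S^n$ but is strictly positive on any product edge, forcing each type class to be independent. The only cosmetic difference is that the paper fixes $n$ to be a multiple of $|S|$ and works directly with the single uniform type (bounding its size via the central multinomial coefficient), whereas you observe that \emph{every} type class is independent and then pigeonhole over the $\binom{n+|S|-1}{|S|-1}$ types; this yields a marginally sharper polynomial factor but is otherwise the same argument.
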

\begin{proof}
	Let $u_i$ be the maps given by the combinatorial degeneration $\Psi \unrhd \Phi$.
	Let $n$ be any multiple of $|S|$. Let $(x^{(1)}, \ldots, x^{(k)}) \in \Psi^{\otimes n}$.
	Suppose for every $i \in [k]$ that the $n$ elements in the tuple $x^{(i)} = (x^{(i)}_1, \ldots, x^{(i)}_n)$ are uniformly distributed over $S$, so that every element of~$S$ appears $n/|S|$ times in $x^{(i)}$. Then, using that $\sum_{i=1}^k u_i(s) = 0$ for every $s \in S$ and the uniformity of $x^{(i)}$, we have
	\begin{equation}
		\label{eq:is_zero}
		\sum_{i=1}^k \sum_{j=1}^n u_i(x^{(i)}_j) = \frac{n}{|S|} \sum_{s \in S} \sum_{i=1}^k u_i(s) = 0.
	\end{equation}
	For every $j \in [n]$, since $(x_j^{(1)}, \ldots, x_j^{(k)}) \in \Psi$, we have $\sum_{i=1}^k u_{i}(x_j^{(i)}) \geq 0$. Suppose that there is an index $j \in [n]$ such that $(x^{(1)}_j, \ldots, x^{(k)}_j) \not\in\Phi$. Then
	\[
	\sum_{i=1}^k u_i(x_j^{(i)}) > 0.
	\]
	As a consequence, $\sum_{j=1}^n \sum_{i=1}^k u_i(x_j^{(i)}) > 0$, which contradicts \eqref{eq:is_zero}.
	Thus the uniform strings in~$S^n$ form an independent set in $H^{\boxtimes n}$. There are
	\[
		\binom{|S| \frac{n}{|S|}}{\frac{n}{|S|}, \ldots, \frac{n}{|S|}} \geq \frac{|S|^n}{(n + 1)^{|S|}}
	\]
	such strings. The inequality follows from the fact that the largest multinomial coefficient is the central one, that is, $\binom{n}{n_1, \ldots, n_{|S|}} \leq \binom{n}{\frac{n}{|S|}, \ldots, \frac{n}{|S|}}$ and the number of possible partitions of $n$ into $|S|$ parts is at most $(n+1)^{|S|}$.
\end{proof}
Motivated by Theorem~\ref{thm:combinatorial_degeneration} we have the following definition:
\begin{definition}\label{def:beta}
	For any directed $k$-uniform hypergraph $H = (V,E)$, %
	we define $\beta(H)$ to be the size of the largest subset $S\subseteq V$ such that $\{(v,\dots,v): v\in S \}$ is a combinatorial degeneration of $E \cup \{(v,\dots,v): v\in V \}$. 
\end{definition}
Clearly, $\Theta(H) \geq \beta(H)$ by Theorem~\ref{thm:combinatorial_degeneration}.

In order to construct combinatorial degenerations we employ integer programming.
To state the integer program, we let $t$ be a variable that takes values in $\{0,1\}^{|V|}$ and let $u_1,\dots,u_k$ be variables that take values in $\Z^{|V|}$. We choose $M \in \NN$ large enough. The parameter $\beta(H)$ can be then computed by the following integer linear program:

\begin{equation}
	\label{pro:combinatorial_degeration}
	\boxed{
		\begin{array}{rrll}
			\text{max } & \sum_{i \in V}t(i) \\
			\text{subject to } & u_1(i_1)+\dots+u_k(i_k)  \!\!&\geq 1  & \forall (i_1,\dots,i_k) \in E,  \\
			& 1- t(i) \leq u_1(i)+\dots+u_k(i) \!\!&\leq M(1-t(i))  & \forall i \in V %
		\end{array}	
	}
\end{equation}

 Indeed, if $(t,u_1,\dots,u_k)$ is a feasible solution of the program~\eqref{pro:combinatorial_degeration}, then $\{(v,\dots,v): v\in S\}$ is a combinatorial degeneration of $E \cup \{(v,\dots,v): v\in V\}$ by choosing $k$ integer maps $u_1,\dots,u_k$, where $S = \{i \in V:t(i)=1 \}$. Therefore, one has $\beta(H) \geq A$ ($A$ is a maximum value of program~\eqref{pro:combinatorial_degeration}). On the other hand, for any $S \subseteq V$ such that if there is a combinatorial degeneration from $E \cup \{(v,\dots,v): v\in V\}$ to $\{(v,\dots,v):v\in S \}$ with $k$ integer maps $u_1,\dots,u_k$, by defining $t \in \{0,1\}^{|V|}$ so that $t(i) =1 $ iff $i\in S$, we have $(t,u_1,\dots,u_k)$ is a feasible solution of the program~\eqref{pro:combinatorial_degeration}. Thus,  $\beta(H) \leq A$.

As a first application of the combinatorial degeneration method (Theorem~\ref{thm:combinatorial_degeneration}), we prove the following new bound for corners over~$\FF_3^n$ by lower bounding $\beta$ (Definition~\ref{def:beta}). %

\begin{theorem}
	\label{cor:lower_F3}
	$\beta(H_{\cor,\FF_{3}}) \geq 7$ and thus $\Theta(H_{\cor,\FF_{3}}) \geq 7$. %
\end{theorem}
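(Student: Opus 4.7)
The plan is to apply Theorem~\ref{thm:combinatorial_degeneration} directly to $H = H_{\cor, \FF_3}$, which has $9$ vertices (namely $\FF_3 \times \FF_3$) and $18$ corner edges (two per nonzero $\lambda$ and per center). Since $\Theta(H) \geq \beta(H)$, it suffices to produce a subset $S \subseteq \FF_3 \times \FF_3$ of size $7$ together with three integer maps $u_1, u_2, u_3 : \FF_3 \times \FF_3 \to \Z$ that witness the combinatorial degeneration $\Psi \unrhd \Phi$, where $\Psi = E \cup \{(v,v,v) : v \in V\}$ and $\Phi = \{(v,v,v) : v \in S\}$.

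To find such data I would invoke the integer linear programming formulation~\eqref{pro:combinatorial_degeration} with $H_{\cor, \FF_3}$ as input. This is a small program in $4 \cdot 9 = 36$ variables (one binary indicator $t(v)$ per vertex, plus three integer labelings) subject to $18$ edge constraints and $9$ vertex constraints, so it is entirely tractable to solve exactly. A feasible point of objective value $7$ is, by the discussion following~\eqref{pro:combinatorial_degeration}, precisely a witness of $\beta(H_{\cor, \FF_3}) \geq 7$.

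Once such $S$ and $u_1, u_2, u_3$ are in hand, the remainder of the proof is a finite verification: present the seven vertices of $S$ and the nine values of each $u_i$ in a table, and check three families of inequalities by inspection. Namely $u_1(v) + u_2(v) + u_3(v) = 0$ for the $7$ vertices $v \in S$; $u_1(v) + u_2(v) + u_3(v) > 0$ for the two excluded vertices; and $u_1(e_1) + u_2(e_2) + u_3(e_3) > 0$ for each of the $18$ corners $((x,y),(x+\lambda,y),(x,y+\lambda))$ with $\lambda \in \FF_3 \setminus \{0\}$. These $27$ checks are elementary and independent of how the witnesses were obtained.

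The main obstacle is identifying $S$ and the $u_i$ in the first place. A natural symmetric ansatz (for instance labelings that depend only on a single coordinate, or only on $y - x$) quickly collapses to forcing $u_1 + u_2 + u_3 \equiv 0$ and hence fails to separate the two excluded vertices from $S$; this is consistent with the authors' remark preceding the theorem that a clean structural derivation of such combinatorial degenerations is still lacking, and is precisely why the ILP is used as the search engine rather than a closed-form construction.
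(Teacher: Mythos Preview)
Your approach is exactly the paper's: exhibit a set $S$ of seven vertices and three maps $u_1,u_2,u_3$ witnessing the combinatorial degeneration, then invoke Theorem~\ref{thm:combinatorial_degeneration}. The structure, the counts ($9$ vertices, $18$ edges, $27$ checks), and the use of the integer program~\eqref{pro:combinatorial_degeration} as the search tool all match.

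The gap is that your proposal describes the proof without supplying it. For this theorem the entire mathematical content \emph{is} the witness: the paper writes down the explicit set $S=\{0,1,2,3,4,7,8\}$ (under the labeling $(a,b)\mapsto 3a+b$) and a $9\times 3$ table of integer values for $u_1,u_2,u_3$, after which the verification is mechanical. You have correctly identified that the ILP is small and tractable, but asserting tractability is not the same as exhibiting a feasible point of objective value $7$; until the table is produced and the $27$ sign conditions are checked, nothing has been proved. Your own remark that symmetric ans\"atze fail underscores that the existence of such a witness is not obvious a priori, so the proof cannot be completed by appeal to ``run the ILP'' without actually reporting its output.
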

In other words, $7^{n}/\poly(n) \leq \rc{\FF_3^n}$. This improves on the lower bound $6.6522^n \leq \rc{\FF_3^n}$ that can be obtained from Edel's construction of cap sets \cite{Ed04} and Lemma~\ref{lem:cap-sets-corners}.
As a consequence of the new lower bound, we find the bounds $\cc{\FF_{3}^n} \leq \bigO(\poly(n)(\frac{9}{7})^{n})$ and $\dcc{3}(\eval{\FF_{3}^n}) \leq n\log(9/7) + \bigO(\log n) \leq 0.37n + \bigO(\log n)$. Previously, only the weaker bound $\dcc{3}(\eval{\FF_{3}^n}) \leq n + \bigO(\log n)$ was known \cite{LPS18}.\footnote{We note that, as far as we know, the NOF protocol for $\eval{\FF_{2}^n}$ given in~\cite{ACFN15} does not generalize to $\eval{\FF_{3}^n}$ in any direct way.}
\begin{proof}
	Let $\Psi$ be the support of the adjacency tensor of $H_{\cor,\FF_{3}}$. We label each pair $(a,b)$ for $a,b \in \{0,1,2\}$ by the integer number $3a+b$. 
	The hypergraph $H_{\cor,\FF_{3}}$ has vertex set $V = \{0,1,3,4,5,6,7,8\}$ and the set $\Psi$ is given by
	\begin{align*}
	\Psi = \{&(0,0,0),(1,1,1),(2,2,2),(3,3,3),(4,4,4),(5,5,5),(6,6,6),(7,7,7), (8,8,8),\\&(0,3,1),(0,6,2),(1,4,2),(1,7,0),(2,5,0),(2,8,1),
	(3,6,4),(3,0,5),(4,7,5),\\
	&(4,1,3), (5,8,3),(5,2,4), (6,0,7),(6,3,8),(7,1,8), (7,4,6),(8,2,6),(8,5,7) \} \, .
	\end{align*} 
	Let $S \subseteq V(H_{\cor,\FF_{3}})$ be the subset consisting of the following seven vertices:
	\begin{align*}
	S \coloneqq \{0,1,2,3,4,7,8 \} \, .
	\end{align*}
	One directly verifies that the maps $u_i : V \to \Z$ provided in the following table give a combinatorial degeneration from $\Psi$ to $\Phi_S\coloneqq \{(v,v,v) : v \in S\}$.
	
	\begin{center}\footnotesize
		\begin{tabular}{ crrr } 
			\toprule
			vertex & $u_1$ & $u_2$ &$u_3$\\
			\midrule
			$0$ & $-5$ & $1$ & $4$\\
			$1$ & $-5$ & $4$ & $1$\\
			$2$ & $-5$ & $1$ & $4$\\
			$3$ & $-5$ & $5$ & $0$\\
			$4$ & $-3$ & $2$ & $1$\\
			$5$ & $-1$ & $5$ & $5$\\
			$6$ & $-1$ & $5$ & $5$\\
			$7$ & $-3$ & $2$ & $1$\\
			$8$ & $-5$ & $5$ & $0$\\ 
			\bottomrule
		\end{tabular}
	\end{center} 
	We conclude that $\beta(H_{\cor,\FF_{3}}) \geq 7$ (Definition~\ref{def:beta}) and thus $\Theta(H_{\cor,\FF_{3}}) \geq 7$ by Theorem~\ref{thm:combinatorial_degeneration}.
\end{proof}

In the previous proof we only considered the first power of the relevant hypergraph. For the next result we will be able to get good bounds by considering higher powers.

\begin{theorem}
	\label{cor:lower_sqrt10}
	$\beta(H_{\cor,\FF_{2}}^{\boxtimes 2}) \geq 11$ and $\beta(H_{\cor,\FF_{2}}^{\boxtimes 3}) \geq 39, $ as a consequence $\Theta(H_{\cor,\FF_{2}}) \geq \sqrt[3]{39} \, $.  %
\end{theorem}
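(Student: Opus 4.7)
The plan is to replicate the combinatorial degeneration construction of Theorem~\ref{cor:lower_F3}, but now applied to the tensor powers $H_{\cor,\FF_2}^{\boxtimes 2}$ and $H_{\cor,\FF_2}^{\boxtimes 3}$. Concretely, I would exhibit an explicit subset $S_2 \subseteq \FF_2^2 \times \FF_2^2$ of size $11$ together with three integer maps $u_1,u_2,u_3 : \FF_2^2 \times \FF_2^2 \to \Z$ that witness the combinatorial degeneration $\Psi \unrhd \Phi$ in the sense of Definition~\ref{def:comb-degen} (with $\Psi = E(H_{\cor,\FF_2}^{\boxtimes 2}) \cup \{(v,v,v) : v \in V\}$ and $\Phi = \{(v,v,v) : v \in S_2\}$), and likewise an explicit subset $S_3 \subseteq \FF_2^3 \times \FF_2^3$ of size $39$ with a matching triple of maps for the third power. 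By Definition~\ref{def:beta} these data yield $\beta(H_{\cor,\FF_2}^{\boxtimes 2}) \geq 11$ and $\beta(H_{\cor,\FF_2}^{\boxtimes 3}) \geq 39$. For the capacity bound I would invoke Theorem~\ref{thm:combinatorial_degeneration} to get $\alpha((H_{\cor,\FF_2}^{\boxtimes 3})^{\boxtimes m}) \geq 39^m/\poly(m)$, which, combined with $\Theta(H)^3 = \Theta(H^{\boxtimes 3})$ (a standard consequence of $\Theta(H) = \sup_n \alpha(H^{\boxtimes n})^{1/n}$ from Fekete's lemma), gives $\Theta(H_{\cor,\FF_2}) \geq \sqrt[3]{39}$.

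The search for the pair $(S_n, u_1, u_2, u_3)$ is carried out by solving the integer linear program~\eqref{pro:combinatorial_degeration} specialized to $H_{\cor,\FF_2}^{\boxtimes n}$. For $n = 2$ the ILP has $16$ binary variables $t(i)$, three integer vectors $u_j$ of length $16$, and $16 \cdot 3 = 48$ edge constraints (one per choice of center in $\FF_2^2 \times \FF_2^2$ and nonzero shift $\lambda \in \FF_2^2$). For $n = 3$ the numbers become $64$ binary variables, $192$ integer variables, and $64 \cdot 7 = 448$ edge constraints. Both instances are small enough to be attacked by a standard ILP solver; symmetry reduction using translation by $\FF_2^n \times \FF_2^n$ (which preserves the corner structure) and the $S_3$-action on coordinates can shrink the effective search space if needed. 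Once candidate solutions are found, the proof reduces to printing two explicit tables (one per power), entirely analogous to the table displayed in the proof of Theorem~\ref{cor:lower_F3}, and verifying by inspection the three finite integer conditions: $u_1(v) + u_2(v) + u_3(v) = 0$ for every $v \in S_n$, $u_1(v) + u_2(v) + u_3(v) \geq 1$ for every $v \in V \setminus S_n$, and $u_1(e_1) + u_2(e_2) + u_3(e_3) \geq 1$ for every corner $(e_1, e_2, e_3)$.

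The main obstacle is the computational search in the third power: even though the ILP is finite, there is no a priori reason the optimum should be as large as $39$, and any $|S_3| = s < 39$ would only yield $\Theta(H_{\cor,\FF_2}) \geq s^{1/3}$, which may fail to beat the $\sqrt{11}$ obtained from the second power. I therefore expect the hard part to be finding the particular size-$39$ set $S_3$ together with an admissible triple of integer labels, and this is where I would spend most of the computational effort. Once these witnesses are produced, the verification and the capacity deduction via Theorem~\ref{thm:combinatorial_degeneration} are routine; and conceptually, this proof fits the same ``approximative independent set'' framework introduced via combinatorial degeneration earlier in the paper, with $S_3$ playing the role of an approximative size-$39$ independent set in $H_{\cor,\FF_2}^{\boxtimes 3}$ that the algebraic conditions of the $u_i$ asymptotically turn into a genuine one.
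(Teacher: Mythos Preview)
Your proposal is correct and follows essentially the same approach as the paper: the paper exhibits explicit subsets $S_2$ of size $11$ and $S_3$ of size $39$ together with tables of integer maps $u_1,u_2,u_3$ (found via the ILP~\eqref{pro:combinatorial_degeration}) that witness the required combinatorial degenerations, and then invokes Theorem~\ref{thm:combinatorial_degeneration} exactly as you describe. The only difference is that the paper simply presents the finished tables rather than discussing the search.
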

In other words, $(\sqrt[3]{39})^{n}/\poly(n) \leq \rc{\FF_2^n}$. As a consequence, we have the upper bound $\cc{\FF_{2}^n} \leq \bigO(\poly(n)(\frac{4}{\sqrt[3]{39}})^n)  \leq \bigO(\poly(n)1.18^n)$ for the corner problem and the upper bound $\dcc{3}(\eval{\FF_{2}^n}) \leq \log(\frac{4}{\sqrt[3]{39}}) n + \bigO(\log n)  \leq 0.24n + \bigO(\log n)$ for the eval problem.

\begin{proof}
	Let $H = H_{\cor,\FF_{2}}^{\boxtimes 2}$. We will show $\beta(H) \geq 11$. %
	Let $\Psi$ be the support of the adjacency tensor of $H$. Then $\Psi$ is the following set of 64 triples:
	\begin{align*}
		\!\!\!\!\!\Psi = \{&((0,0), (0,0), (0,0)),\,\,
		((0,1), (0,1), (0,1)),\,\,
		((0,2), (0,2), (0,2)),\,\,
		((0,3), (0,3), (0,3)),\,\,\\
		&((1,0), (1,0), (1,0)),\,\,
		((1,1), (1,1), (1,1)),\,\,
		((1,2), (1,2), (1,2)),\,\,
		((1,3), (1,3), (1,3)),\,\,\\
		&((2,0), (2,0), (2,0)),\,\,
		((2,1), (2,1), (2,1)),\,\,
		((2,2), (2,2), (2,2)),\,\,
		((2,3), (2,3), (2,3)),\,\,\\
		&((3,0), (3,0), (3,0)),\,\,
		((3,1), (3,1), (3,1)),\,\,
		((3,2), (3,2), (3,2)),\,\,
		((3,3), (3,3), (3,3)),\,\,\\
		&((0,0), (0,2), (0,1)),\,\,
		((0,0), (2,0), (1,0)),\,\,
		((0,0), (2,2), (1,1)),\,\,
		((0,1), (0,3), (0,0)),\,\,\\
		&((0,1), (2,1), (1,1)),\,\,
		((0,1), (2,3), (1,0)),\,\,
		((0,2), (0,0), (0,3)),\,\,
		((0,2), (2,0), (1,3)),\,\,\\
		&((0,2), (2,2), (1,2)),\,\,
		((0,3), (0,1), (0,2)),\,\,
		((0,3), (2,1), (1,2)),\,\,
		((0,3), (2,3), (1,3)),\,\,\\
		&((1,0), (1,2), (1,1)),\,\,
		((1,0), (3,0), (0,0)),\,\,
		((1,0), (3,2), (0,1)),\,\,
		((1,1), (1,3), (1,0)),\,\,\\
		&((1,1), (3,1), (0,1)),\,\,
		((1,1), (3,3), (0,0)),\,\,
		((1,2), (1,0), (1,3)),\,\,
		((1,2), (3,0), (0,3)),\,\,\\
		&((1,2), (3,2), (0,2)),\,\,
		((1,3), (1,1), (1,2)),\,\,
		((1,3), (3,1), (0,2)),\,\,
		((1,3), (3,3), (0,3)),\,\,\\
		&((2,0), (0,0), (3,0)),\,\,
		((2,0), (0,2), (3,1)),\,\,
		((2,0), (2,2), (2,1)),\,\,
		((2,1), (0,1), (3,1)),\,\,\\
		&((2,1), (0,3), (3,0)),\,\,
		((2,1), (2,3), (2,0)),\,\,
		((2,2), (0,0), (3,3)),\,\,
		((2,2), (0,2), (3,2)),\,\,\\
		&((2,2), (2,0), (2,3)),\,\,
		((2,3), (0,1), (3,2)),\,\,
		((2,3), (0,3), (3,3)),\,\,
		((2,3), (2,1), (2,2)),\,\,\\
		&((3,0), (1,0), (2,0)),\,\,
		((3,0), (1,2), (2,1)),\,\,
		((3,0), (3,2), (3,1)),\,\,
		((3,1), (1,1), (2,1)),\,\,\\
		&((3,1), (1,3), (2,0)),\,\,
		((3,1), (3,3), (3,0)),\,\,
		((3,2), (1,0), (2,3)),\,\,
		((3,2), (1,2), (2,2)),\,\,\\
		&((3,2), (3,0), (3,3)),\,\,
		((3,3), (1,1), (2,2)),\,\,
		((3,3), (1,3), (2,3)),\,\,
		((3,3), (3,1), (3,2)) \}.	
	\end{align*}
	 Let $S \subseteq V(H)$ be the subset consisting of the following eleven vertices of $H$:
	\begin{align*}
	S \coloneqq \{(0,0), (0,1), (1,0), (1,2), (1,3,) (2,0), (2,1), (2,2), (3,1), (3,2), (3,3)  \} \, .
	\end{align*}
	One directly verifies that the maps $u_i : \{0,1,2,3\}^2 \to \Z$ provided in the following table give a combinatorial degeneration from $\Psi$ to $\Phi_S\coloneqq \{(v,v,v) : v \in S\}$. %
	\begin{center}\footnotesize
		\begin{tabular}{ crrr } 
			\toprule
			vertex & $u_1$ & $u_2$ &$u_3$\\
			\midrule
			$(0,0)$ & $-10$ & $0$ & $10$\\
			$(0,1)$ & $-10$ & $0$ & $10$\\
			$(0,2)$ & $10$ & $1$ & $-1$\\
			$(0,3)$ & $10$ & $1$ & $1$\\
			$(1,0)$ & $-10$ & $0$ & $10$\\
			$(1,1)$ & $-8$ & $10$ & $10$\\
			$(1,2)$ & $7$ & $3$ & $-10$\\
			$(1,3)$ & $1$ & $5$ & $-6$\\
			$(2,0)$ & $-5$ & $1$ & $4$\\
			$(2,1)$ & $-6$ & $1$ & $5$\\
			$(2,2)$ & $9$ & $1$ & $-10$\\
			$(2,3)$ & $10$ & $3$ & $-7$\\
			$(3,0)$ & $-3$ & $1$ & $10$\\
			$(3,1)$ & $-8$ & $1$ & $7$\\
			$(3,2)$ & $8$ & $1$ & $-9$\\
			$(3,3)$ & $9$ & $-1$ & $-8$\\
			\bottomrule
		\end{tabular}
	\end{center}
We make it easier to carry out the above verification by hand by listing in the following table the following data: every element $e =(v_1, v_2, v_3) \in \Psi$, the corresponding evaluation $(u_1(v_1), u_2(v_2), u_3(v_3))$, the sum of the evaluations $\sum_i u_i(v_i)$, and whether~$e$ is in $\Phi$ or in~$\Psi\setminus \Phi$.
\begin{center}
	\scriptsize
	\begin{tabular}{clcl}
		\toprule
		$(v_1, v_2, v_3)$ & $(u_1, u_2, u_3)$ & $\sum_i u_i$ & in $\Phi$ or $\Psi\setminus \Phi$?\\
		\midrule
		((0,0), (0,0), (0,0)) & $(-10, 0, 10)$ & 0 & $\Phi$\\
		((0,1), (0,1), (0,1)) & $(-10, 0, 10)$ & 0 & $\Phi$\\
		((0,2), (0,2), (0,2)) & $(10, 1, -1)$ & 10 & $\Psi\setminus\Phi$\\
		((0,3), (0,3), (0,3)) & $(10, 1, 1)$ & 12 & $\Psi\setminus\Phi$\\
		((1,0), (1,0), (1,0)) & $(-10, 0, 10)$ & 0 & $\Phi$\\
		((1,1), (1,1), (1,1)) & $(-8, 10, 10)$ & 12 & $\Psi\setminus\Phi$\\
		((1,2), (1,2), (1,2)) & $(7, 3, -10)$ & 0 & $\Phi$\\
		((1,3), (1,3), (1,3)) & $(1, 5, -6)$ & 0 & $\Phi$\\
		((2,0), (2,0), (2,0)) & $(-5, 1, 4)$ & 0 & $\Phi$\\
		((2,1), (2,1), (2,1)) & $(-6, 1, 5)$ & 0 & $\Phi$\\
		((2,2), (2,2), (2,2)) & $(9, 1, -10)$ & 0 & $\Phi$\\
		((2,3), (2,3), (2,3)) & $(10, 3, -7)$ & 6 & $\Psi\setminus\Phi$\\
		((3,0), (3,0), (3,0)) & $(-3, 1, 10)$ & 8 & $\Psi\setminus\Phi$\\
		((3,1), (3,1), (3,1)) & $(-8, 1, 7)$ & 0 & $\Phi$\\
		((3,2), (3,2), (3,2)) & $(8, 1, -9)$ & 0 & $\Phi$\\
		((3,3), (3,3), (3,3)) & $(9, -1, -8)$ & 0 & $\Phi$\\
		((0,0), (0,2), (0,1)) & $(-10, 1, 10)$ & 1 & $\Psi\setminus\Phi$\\
		((0,0), (2,0), (1,0)) & $(-10, 1, 10)$ & 1 & $\Psi\setminus\Phi$\\
		((0,0), (2,2), (1,1)) & $(-10, 1, 10)$ & 1 & $\Psi\setminus\Phi$\\
		((0,1), (0,3), (0,0)) & $(-10, 1, 10)$ & 1 & $\Psi\setminus\Phi$\\
		((0,1), (2,1), (1,1)) & $(-10, 1, 10)$ & 1 & $\Psi\setminus\Phi$\\
		((0,1), (2,3), (1,0)) & $(-10, 3, 10)$ & 3 & $\Psi\setminus\Phi$\\
		((0,2), (0,0), (0,3)) & $(10, 0, 1)$ & 11 & $\Psi\setminus\Phi$\\
		((0,2), (2,0), (1,3)) & $(10, 1, -6)$ & 5 & $\Psi\setminus\Phi$\\
		((0,2), (2,2), (1,2)) & $(10, 1, -10)$ & 1 & $\Psi\setminus\Phi$\\
		((0,3), (0,1), (0,2)) & $(10, 0, -1)$ & 9 & $\Psi\setminus\Phi$\\
		((0,3), (2,1), (1,2)) & $(10, 1, -10)$ & 1 & $\Psi\setminus\Phi$\\
		((0,3), (2,3), (1,3)) & $(10, 3, -6)$ & 7 & $\Psi\setminus\Phi$\\
		((1,0), (1,2), (1,1)) & $(-10, 3, 10)$ & 3 & $\Psi\setminus\Phi$\\
		((1,0), (3,0), (0,0)) & $(-10, 1, 10)$ & 1 & $\Psi\setminus\Phi$\\
		((1,0), (3,2), (0,1)) & $(-10, 1, 10)$ & 1 & $\Psi\setminus\Phi$\\
		((1,1), (1,3), (1,0)) & $(-8, 5, 10)$ & 7 & $\Psi\setminus\Phi$\\
((1,1), (3,1), (0,1)) & $(-8, 1, 10)$ & 3 & $\Psi\setminus\Phi$\\
((1,1), (3,3), (0,0)) & $(-8, -1, 10)$ & 1 & $\Psi\setminus\Phi$\\
((1,2), (1,0), (1,3)) & $(7, 0, -6)$ & 1 & $\Psi\setminus\Phi$\\
((1,2), (3,0), (0,3)) & $(7, 1, 1)$ & 9 & $\Psi\setminus\Phi$\\
((1,2), (3,2), (0,2)) & $(7, 1, -1)$ & 7 & $\Psi\setminus\Phi$\\
((1,3), (1,1), (1,2)) & $(1, 10, -10)$ & 1 & $\Psi\setminus\Phi$\\
((1,3), (3,1), (0,2)) & $(1, 1, -1)$ & 1 & $\Psi\setminus\Phi$\\
((1,3), (3,3), (0,3)) & $(1, -1, 1)$ & 1 & $\Psi\setminus\Phi$\\
((2,0), (0,0), (3,0)) & $(-5, 0, 10)$ & 5 & $\Psi\setminus\Phi$\\
((2,0), (0,2), (3,1)) & $(-5, 1, 7)$ & 3 & $\Psi\setminus\Phi$\\
((2,0), (2,2), (2,1)) & $(-5, 1, 5)$ & 1 & $\Psi\setminus\Phi$\\
((2,1), (0,1), (3,1)) & $(-6, 0, 7)$ & 1 & $\Psi\setminus\Phi$\\
((2,1), (0,3), (3,0)) & $(-6, 1, 10)$ & 5 & $\Psi\setminus\Phi$\\
((2,1), (2,3), (2,0)) & $(-6, 3, 4)$ & 1 & $\Psi\setminus\Phi$\\
((2,2), (0,0), (3,3)) & $(9, 0, -8)$ & 1 & $\Psi\setminus\Phi$\\
((2,2), (0,2), (3,2)) & $(9, 1, -9)$ & 1 & $\Psi\setminus\Phi$\\
((2,2), (2,0), (2,3)) & $(9, 1, -7)$ & 3 & $\Psi\setminus\Phi$\\
((2,3), (0,1), (3,2)) & $(10, 0, -9)$ & 1 & $\Psi\setminus\Phi$\\
((2,3), (0,3), (3,3)) & $(10, 1, -8)$ & 3 & $\Psi\setminus\Phi$\\
((2,3), (2,1), (2,2)) & $(10, 1, -10)$ & 1 & $\Psi\setminus\Phi$\\
((3,0), (1,0), (2,0)) & $(-3, 0, 4)$ & 1 & $\Psi\setminus\Phi$\\
((3,0), (1,2), (2,1)) & $(-3, 3, 5)$ & 5 & $\Psi\setminus\Phi$\\
((3,0), (3,2), (3,1)) & $(-3, 1, 7)$ & 5 & $\Psi\setminus\Phi$\\
((3,1), (1,1), (2,1)) & $(-8, 10, 5)$ & 7 & $\Psi\setminus\Phi$\\
((3,1), (1,3), (2,0)) & $(-8, 5, 4)$ & 1 & $\Psi\setminus\Phi$\\
((3,1), (3,3), (3,0)) & $(-8, -1, 10)$ & 1 & $\Psi\setminus\Phi$\\
((3,2), (1,0), (2,3)) & $(8, 0, -7)$ & 1 & $\Psi\setminus\Phi$\\
((3,2), (1,2), (2,2)) & $(8, 3, -10)$ & 1 & $\Psi\setminus\Phi$\\
((3,2), (3,0), (3,3)) & $(8, 1, -8)$ & 1 & $\Psi\setminus\Phi$\\
((3,3), (1,1), (2,2)) & $(9, 10, -10)$ & 9 & $\Psi\setminus\Phi$\\
((3,3), (1,3), (2,3)) & $(9, 5, -7)$ & 7 & $\Psi\setminus\Phi$\\
((3,3), (3,1), (3,2)) & $(9, 1, -9)$ & 1 & $\Psi\setminus\Phi$\\
\bottomrule
\end{tabular}
\end{center}
\newpage
Indeed, in the above table we see that $\sum_i u_i(v_i)$ is always nonnegative, and equals 0 if and only if $(v_1, v_2, v_3) \in \Phi$. Therefore $\beta(H) \geq 11$, finishing this part.

Now we give the construction that implies $\beta(H_{\cor,\FF_{2}^n}^{\boxtimes 3}) \geq 39$. Let $S \subseteq V(H_{\cor,\FF_{2}^n}^{\boxtimes 3})$ be the subset consisting of the following thirty-nine vertices:
\begin{align*}
	S \coloneqq \{&(0,0,0), (0,0,1), (0,0,2), (0,1,0), (0,1,2), (0,1,3), (0,2,1), (0,2,3), (0,3,0), (0,3,1),\\ &(0,3,3), (1,0,2), (1,1,0), (1,1,1), (1,1,3), (1,2,1), (1,2,2), (1,2,3), (1,3,1), (1,3,2),\\ &(1,3,3), (2,0,0), (2,0,1), (2,0,3), (2,1,2), (2,1,3), (2,2,0), (2,2,2), (2,3,0), (2,3,2),\\ &(2,3,3), (3,0,0), (3,0,1), (3,0,2), (3,1,2), (3,2,0), (3,2,1), (3,2,3), (3,3,2)\}.
\end{align*}
One directly verifies that the maps $u_i : \{0,1,2,3\}^3 \to \Z$ provided in the following table give a combinatorial degeneration from $\Psi = \{(v,v,v): v \in V(H_{\cor,\FF_{2}^n}^{\boxtimes 3})\} \cup E(H_{\cor,\FF_{2}^n}^{\boxtimes 3})$ to $\Phi_S\coloneqq \{(v,v,v) : v \in S\}$.
\begin{center}
	\scriptsize
	\begin{tabular}{ clcl } 
		\toprule
		vertex & $u_1$ & $u_2$ &$u_3$\\
		\midrule
		$(0,0,0)$ & $1$ & $-4$ & $3$\\
		$(0,0,1)$ & $-15$ & $-4$ & $19$\\
		$(0,0,2)$ & $-3$ & $-17$ & $20$\\
		$(0,0,3)$ & $20$ & $20$ & $20$\\
		$(0,1,0)$ & $-5$ & $7$ & $-2$\\
		$(0,1,1)$ & $20$ & $20$ & $20$\\
		$(0,1,2)$ & $-5$ & $20$ & $-15$\\
		$(0,1,3)$ & $1$ & $0$ & $-1$\\
		$(0,2,0)$ & $20$ & $5$ & $20$\\
		$(0,2,1)$ & $5$ & $-4$ & $-1$\\
		$(0,2,2)$ & $20$ & $20$ & $7$\\
		$(0,2,3)$ & $1$ & $18$ & $-19$\\
		$(0,3,0)$ & $-4$ & $3$ & $1$\\
		$(0,3,1)$ & $20$ & $-20$ & $0$\\
		$(0,3,2)$ & $20$ & $20$ & $20$\\
		$(0,3,3)$ & $20$ & $-20$ & $0$\\
		$(1,0,0)$ & $20$ & $20$ & $20$\\
		$(1,0,1)$ & $20$ & $20$ & $20$\\
		$(1,0,2)$ & $-2$ & $-14$ & $16$\\
		$(1,0,3)$ & $9$ & $20$ & $20$\\
		$(1,1,0)$ & $-16$ & $20$ & $-4$\\
		$(1,1,1)$ & $0$ & $1$ & $-1$\\
		$(1,1,2)$ & $20$ & $20$ & $20$\\
		$(1,1,3)$ & $-4$ & $5$ & $-1$\\
		$(1,2,0)$ & $20$ & $20$ & $20$\\
		$(1,2,1)$ & $-5$ & $1$ & $4$\\
		$(1,2,2)$ & $18$ & $2$ & $-20$\\
		$(1,2,3)$ & $20$ & $0$ & $-20$\\
		$(1,3,0)$ & $20$ & $20$ & $20$\\
		$(1,3,1)$ & $-4$ & $-11$ & $15$\\
		$(1,3,2)$ & $1$ & $2$ & $-3$\\
		$(1,3,3)$ & $20$ & $-15$ & $-5$\\
		\bottomrule
	\end{tabular} \quad\quad
	\begin{tabular}{ clcl } 
	\toprule
	vertex & $u_1$ & $u_2$ &$u_3$\\
	\midrule
	$(2,0,0)$ & $5$ & $-5$ & $0$\\
	$(2,0,1)$ & $-4$ & $4$ & $0$\\
	$(2,0,2)$ & $20$ & $20$ & $20$\\
	$(2,0,3)$ & $-15$ & $5$ & $10$\\
	$(2,1,0)$ & $20$ & $20$ & $20$\\
	$(2,1,1)$ & $20$ & $20$ & $20$\\
	$(2,1,2)$ & $-9$ & $7$ & $2$\\
	$(2,1,3)$ & $1$ & $1$ & $-2$\\
	$(2,2,0)$ & $2$ & $5$ & $-7$\\
	$(2,2,1)$ & $20$ & $20$ & $20$\\
	$(2,2,2)$ & $9$ & $1$ & $-10$\\
	$(2,2,3)$ & $20$ & $20$ & $20$\\
	$(2,3,0)$ & $-4$ & $0$ & $4$\\
	$(2,3,1)$ & $20$ & $20$ & $20$\\
	$(2,3,2)$ & $4$ & $-10$ & $6$\\
	$(2,3,3)$ & $12$ & $-9$ & $-3$\\
	$(3,0,0)$ & $15$ & $-16$ & $1$\\
	$(3,0,1)$ & $-19$ & $6$ & $13$\\
	$(3,0,2)$ & $-3$ & $-17$ & $20$\\
	$(3,0,3)$ & $3$ & $20$ & $20$\\
	$(3,1,0)$ & $-1$ & $19$ & $20$\\
	$(3,1,1)$ & $20$ & $20$ & $13$\\
	$(3,1,2)$ & $-17$ & $-3$ & $20$\\
	$(3,1,3)$ & $20$ & $20$ & $20$\\
	$(3,2,0)$ & $-2$ & $4$ & $-2$\\
	$(3,2,1)$ & $8$ & $7$ & $-15$\\
	$(3,2,2)$ & $9$ & $20$ & $20$\\
	$(3,2,3)$ & $10$ & $0$ & $-10$\\
	$(3,3,0)$ & $20$ & $20$ & $20$\\
	$(3,3,1)$ & $20$ & $20$ & $20$\\
	$(3,3,2)$ & $-7$ & $-2$ & $9$\\
	$(3,3,3)$ & $20$ & $20$ & $20$\\
	\bottomrule
\end{tabular}
\end{center}
This implies the claim.
\end{proof}

We have yet to develop structural understanding of how the above combinatorial degenerations that exhibit the new capacity lower bounds arise, and leave the investigation of further generalizations and improvements to future work. 
As a partial remedy to our limited understanding, we introduce in the next section the \emph{acyclic method} as a tool to construct combinatorial degenerations. While the acyclic method does not recover the bounds of Theorem~\ref{cor:lower_sqrt10} and Theorem~\ref{cor:lower_F3}, it has the merits of being transparent and simple to apply.

\begin{remark}
	The above proof of Theorem~\ref{thm:combinatorial_degeneration} gives in fact the precise lower bound
	\begin{equation}\label{eq:precise-lower}
		\alpha(H^{\boxtimes n}) \geq \frac{|S|^n}{(n + 1)^{|S|}}.
	\end{equation}
	This lower bound is optimal up to a $\poly(n)$ factor. The following more careful analysis improves this $\poly(n)$ factor, but may safely be skipped when the reader is satisfied by the lower bound of \eqref{eq:precise-lower}.
	
	We may without loss of generality assume that $S = V$. 
	For $p \in \Z$, let $[V^{n}]^{(i)}_p \subseteq V^n$ be the subset of all elements $(x_1, \ldots, x_n) \in V^n$ such that $\sum_{j=1}^n u_i(x_j) = p$.
	For $p_1, \ldots, p_k \in \Z$, we let $[\Psi^{\otimes n}]_{p_1, \ldots, p_k} \subseteq \Psi^{\otimes n}$ denote the subset of all elements $(x^{(1)}, \ldots, x^{(k)}) \in \Psi^{\otimes n}$ such that for every $i \in [k]$ we have $\sum_{j=1}^n u_i(x_j^{(i)}) = p_i$. Thus $[\Psi^{\otimes n}]_{p_1, \ldots, p_k} = \Psi^{\otimes n} \cap ([V^n]^{(1)}_{p_1} \times \cdots \times [V^n]^{(k)}_{p_1})$.
	Then 
	\[
		\Psi^{\otimes n} = \bigsqcup_{p_1, \ldots, p_k} [\Psi^{\otimes n}]_{p_1, \ldots, p_k}
	\]
	and from the definition of a combinatorial degeneration we get
	\begin{equation}\label{eq:Phi}
		\Phi^{\otimes n} = \bigsqcup_{\substack{p_1, \ldots, p_k:\\ \sum_{i=1}^k p_i = 0}} [\Psi^{\otimes n}]_{p_1, \ldots, p_k}.
	\end{equation}
	Since $\Phi^{\otimes n}$ only contains elements of the form $(x, \ldots, x)$, we see that if $[\Psi^{\otimes n}]_{p_1, \ldots, p_k} \neq \emptyset$ and $\sum_{i=1}^k p_i = 0$, then the elements of $[\Psi^{\otimes n}]_{p_1, \ldots, p_k}$ are all the elements $(x, \ldots, x)$ going over all $x \in \cap_{i=1}^k [V^n]^{(i)}_{p_i}$.
	Thus $\alpha(H^{\otimes n}) \geq \abs{[\Psi^{\otimes n}]_{p_1, \ldots, p_k}}$ for any choice of $p_1, \ldots, p_k$ such that $\sum_{i=1}^k p_i = 0$. 
	
	One good choice of $p_1, \ldots, p_k$ is obtained as follows, and lets us recover the lower bound in~\eqref{eq:precise-lower}. For notational simplicity we are still assuming $S = V$. Let $(x_1, \ldots, x_n) \in V^n$ be any element that is uniform on $S$. For every $i \in [k]$ let $p_i = \sum_{j=1}^n u_i(x_j)$. Note that for every $i \in [k]$ the value of $p_i$ remains the same if we had picked another uniform element $(x_1, \ldots, x_n) \in V^n$.
	We claim that $\sum_{i=1}^k p_i = 0$. To prove this, let $(x^{(1)}, \ldots, x^{(k)}) \in \Psi^n$ be any element for which every $x^{(i)}$ is uniform on $S$. Then we have $p_1 + \cdots + p_k = \sum_i \sum_j u_i(x^{(i)}_j) = 0$, using that for every $s \in S$ we have $\sum_i u_i(s) = 0$.
	Finally, note that $[\Psi^{\otimes n}]_{p_1, \ldots, p_k}$ contains all elements $(x^{(1)}, \ldots, x^{(k)}) \in \Psi^{\otimes n}$ for which every $x^{(i)}$ is uniform. Therefore, with this choice we recover a bound that is at least as good as \eqref{eq:precise-lower}. 
	
	Another choice of $p_1, \ldots, p_k$ (that leads to an incomparable lower bound) is obtained as follows. Note that if $[\Psi^{\otimes n}]_{p_1, \ldots, p_k} \neq \emptyset$, then $n\min_{x \in V} u_i(x) \leq p_i \leq n\max_{x \in V} u_i(x)$. Thus the number of nonzero summands in \eqref{eq:Phi} is at most $c_{|S|} n^{k-1}$ for a constant $c_{|S|}$ that depends only on $|S|$. Therefore, there is a choice of $p_1, \ldots, p_k$ with $\sum_{i=1}^k p_i = 0$ such that
	\[
		\alpha(H^{\boxtimes n}) \geq \abs{[\Psi^{\otimes n}]_{p_1, \ldots, p_k}} \geq \frac{|\Phi^{\otimes n}|}{c_{|S|}n^{k-1}} = \frac{|S|^n}{c_{|S|} n^{k-1}}
	\]
	which improves on \eqref{eq:precise-lower} in some parameter regimes.
\end{remark}

\subsection{Acyclic set method}
\label{sec:acyclic}
The acyclic set method that we are about to introduce is modeled on the fact that
the Shannon capacity of a directed graph $G$ is at least the size of any induced acyclic subgraph of $G$ \cite{BM85}. 
We introduce the concept of an \emph{acyclic set} in a directed $k$-uniform hypergraph as an extension of the notion of an induced acyclic subgraph.

\begin{definition}
	Let $H$ %
	be a directed $k$-uniform hypergraph.
	We associate to $H$ the directed graph $G_H$ with vertices $V(G) = V(H)$ and edges $E(G) = \{(a_1, a_2) : (a_1, a_2, \ldots, a_k) \in E \text{ for some }  a_3, \dots, a_k \}$.
	For any subset $A \subseteq V$ let $H[A]$ denote the subhypergraph of $H$ induced by $A$, that is, $H[A]$ is the directed $k$-uniform hypergraph with vertices $S$ and edges $E \cap A^{\times k}$. 
	We call a subset $A \subseteq V$ an \emph{acyclic set} of $H$ if the directed graph $G_{H[A]}$ is a directed acyclic graph. %
\end{definition}

Note that, if $A$ is an independent set of $H$, then $E(H[A]) = \emptyset$ and thus $E(G_{H[A]}) = \emptyset$, and in particular $A$ is an acyclic set of $H$.
On the other hand, acyclic sets are not necessarily independent sets. However, the existence of an acyclic set does imply strong lower bounds on the Shannon capacity (via combinatorial degeneration, as we will see):

\begin{theorem} \label{thm:lowerbound_acyclic}
	Let $H$ be a directed $k$-uniform hypergraph. For any acyclic set $A$ of $H$, we have
	$\Theta(H) \geq |A|$.
\end{theorem}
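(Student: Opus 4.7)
My plan is to prove the bound by invoking the combinatorial degeneration method of Theorem~\ref{thm:combinatorial_degeneration} with $S = A$. Concretely, setting $\Psi = E \cup \{(v,\ldots,v) : v \in V\}$ and $\Phi = \{(v,\ldots,v) : v \in A\}$, it suffices to exhibit integer maps $u_1, \ldots, u_k : V \to \Z$ realizing a combinatorial degeneration $\Psi \unrhd \Phi$; the conclusion $\Theta(H) \geq |A|$ then follows immediately from Theorem~\ref{thm:combinatorial_degeneration}.

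The key combinatorial ingredient is that, because $A$ is acyclic, the associated directed graph $G_{H[A]}$ is a DAG, so I can fix a topological ordering $\pi : A \to \{1, 2, \ldots, |A|\}$ satisfying $\pi(a) < \pi(b)$ whenever $(a,b) \in E(G_{H[A]})$. Unpacking the definition of $G_{H[A]}$, this means that for every hyperedge $(a_1, \ldots, a_k) \in E$ with all coordinates in $A$, the strict inequality $\pi(a_1) < \pi(a_2)$ holds, because by definition $(a_1, a_2) \in E(G_{H[A]})$.

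With this in hand I would build the maps as follows. Fix a large constant $L$, say $L = |A| + 1$. For $v \in A$, set $u_1(v) = -\pi(v)$, $u_2(v) = \pi(v)$, and $u_i(v) = 0$ for $i \geq 3$. For $v \in V \setminus A$, set $u_i(v) = L$ for every $i \in [k]$. One then verifies the three conditions required by Definition~\ref{def:comb-degen}: on a diagonal element $(v,\ldots,v)$ with $v \in A$ the sum is $-\pi(v)+\pi(v) = 0$, placing it in $\Phi$; on a diagonal with $v \in V\setminus A$ the sum is $kL > 0$; and on a hyperedge $(a_1,\ldots,a_k) \in E$ the sum is either $\pi(a_2)-\pi(a_1) \geq 1$ when every $a_i \in A$, or, when some $a_j \notin A$, at least $L - |A| = 1$ since the only potentially negative summand is $u_1(a_1) \geq -|A|$ and every $u_i(a_i)$ with $i \geq 2$ is nonnegative (either $\pi(a_i) \geq 1$, or $L$, or $0$).

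I do not anticipate a real obstacle here: the topological ordering encodes exactly the information needed to separate within-$A$ hyperedges from the diagonal of $A$, and the scaling constant $L$ handles hyperedges that escape $A$. The only point that deserves a moment of attention is ensuring that the single potentially negative contribution $u_1(a_1) = -\pi(a_1)$ on a hyperedge is always overwhelmed by a strictly positive term elsewhere in the sum, which is why $u_2(v) = \pi(v) \geq 1$ for $v \in A$ and $u_i(v) = L > |A|$ for $v \notin A$ are chosen as they are.
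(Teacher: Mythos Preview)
Your proposal is correct and follows essentially the same route as the paper: the paper deduces Theorem~\ref{thm:lowerbound_acyclic} from Theorem~\ref{thm:combinatorial_degeneration} via a lemma (Lemma~\ref{Lem:Acyclicset_degeneration}) whose proof builds the maps $u_i$ exactly as you do, taking $u_1 = -\pi$, $u_2 = \pi$, $u_i = 0$ for $i\geq 3$ on $A$ from a topological ordering of $G_{H[A]}$, and setting all $u_i$ to a large constant on $V\setminus A$. Your write-up is in fact more explicit than the paper's about the case $A\subsetneq V$ and the choice of the constant $L$.
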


Theorem~\ref{thm:lowerbound_acyclic} follows directly from the combinatorial degeneration method (Theorem~\ref{thm:combinatorial_degeneration}) and the following lemma:

\begin{lemma}
	\label{Lem:Acyclicset_degeneration}
	Let $H = (V,E)$ be a directed $k$-uniform hypergraph. 
	Let $A$ be an acyclic set of~$H$. 
	Then there is a combinatorial degeneration from $E \cup \{(v,\dots,v): v \in V\}$ to $\Phi = \{(v,\dots,v): v\in A\}$. 
\end{lemma}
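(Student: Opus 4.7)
The plan is to construct the maps $u_1, \dots, u_k : V \to \Z$ explicitly from a topological numbering of $G_{H[A]}$, using a large ``penalty'' constant for vertices outside $A$.

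First, since $A$ is acyclic, $G_{H[A]}$ is a DAG and therefore admits a topological numbering, i.e.\ a bijection $M : A \to \{1, 2, \dots, |A|\}$ with $M(a) < M(a')$ whenever $(a, a')$ is an arc of $G_{H[A]}$. In particular, for every hyperedge $(a_1, \dots, a_k) \in E$ whose coordinates all lie in $A$, the projection $(a_1, a_2)$ is an arc of $G_{H[A]}$ (and $a_1 \neq a_2$, since a self-loop would contradict acyclicity), which yields $M(a_1) < M(a_2)$.

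Next, I would pick an integer $N > |A|$ and define
\[
u_i(v) = \begin{cases} -M(v) & \text{if } v \in A \text{ and } i = 1, \\ M(v)  & \text{if } v \in A \text{ and } i = 2, \\ 0      & \text{if } v \in A \text{ and } i \geq 3, \\ N      & \text{if } v \notin A. \end{cases}
\]
The remainder is a four-case verification of the combinatorial degeneration conditions: (i) on a diagonal $(v, \dots, v)$ with $v \in A$ the sum $\sum_i u_i(v)$ collapses to $-M(v) + M(v) = 0$, matching the requirement on $\Phi$; (ii) on a diagonal $(v, \dots, v)$ with $v \notin A$ the sum equals $kN > 0$; (iii) on a hyperedge $(a_1, \dots, a_k) \in E$ with every coordinate in $A$, the sum equals $M(a_2) - M(a_1) \geq 1$ by the topological property; (iv) on a hyperedge with some $a_j \notin A$, at least one term contributes $N$, while every remaining term is bounded below by $-|A|$ (only $u_1$ can take negative values and $|u_1(v)| \leq |A|$ on $A$), so the sum is at least $N - |A| > 0$.

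The only delicate step is (iv): the acyclicity of $G_{H[A]}$ controls only hyperedges lying entirely in $A$, and the hypothesis says nothing about hyperedges that straddle $A$ and $V \setminus A$. The remedy is to let the $u_i$ take an arbitrarily large value $N$ on $V \setminus A$, so that a single penalty token dominates the bounded contributions from the $A$-coordinates. This decoupling between ``inside $A$'' (balanced to zero) and ``outside $A$'' (large positive) is what makes the two clauses of the combinatorial degeneration compatible.
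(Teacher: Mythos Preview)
Your proof is correct and follows essentially the same approach as the paper: both use a topological ordering of $G_{H[A]}$ to set $u_1 = -(\text{rank})$, $u_2 = +(\text{rank})$, $u_3 = \cdots = u_k = 0$ on $A$, and assign a large constant on $V\setminus A$. Your treatment of case~(iv) is in fact more explicit than the paper's, which simply remarks that one should take the $u_i$ to be ``some large integer number'' outside $A$.
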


\begin{proof}
We may assume that $A = V = [n]$. The proof for the case that $A \subsetneq V$ is a simple adaptation.
Recall that we construct the directed graph $G$ associated to $H$ with the same vertex set as $H$ and the edges as follows:
for every edge  $e = (a_1,a_2,\dots,a_k)$ in $H$ we add the edge $(a_1,a_2)$ to $G$. 
Since $V$ is an acyclic set we have that $G$ is a directed acyclic graph. Therefore, we have a topological ordering on the vertices of $G$. A topological ordering is a total ordering~$>$ on the vertices such that if $(u,v)$ forms an edge then $u < v$. Assume that this ordering is $1>2>\dots >n$. 
For each vertex $i \in [n]$, we define $u_1(i) = -i$, $u_2(i) = i$, $u_3(i) = \cdots = u_k(i) =0$. For every $i \in [n]$ we clearly have $u_1(i)+u_2(i)+ \dots + u_k(i) = 0$. For each edge $e = (a_1,a_2,\dots,a_k)$ in $H$ we have $u_1(a_1) + u_2(a_2) + \dots +u_k(a_k)>0$ because of the topological ordering and since we have the edge $(a_1,a_2)$ in $G$. Therefore we have a combinatorial degeneration from $E \cup \{(v,\dots,v): v\in V\}$ to $\{(v,\dots,v): v\in V\}$. For the case $A\subsetneq V$ the proof is similar except that we define $u_1(i),u_2(i),\dots,u_k(i)$ to be some large integer number for each $i \in V \setminus A$.
\end{proof}

As can be seen from the proof of Lemma~\ref{Lem:Acyclicset_degeneration}, the combinatorial degenerations that result from acyclic sets have a special form, and in particular the acyclic set method does not recover the full power of the combinatorial degeneration method. However the acyclic set method is much easier to apply than the combinatorial degeneration method. 
For example, we can use the acyclic set method to quickly see that $\Theta(H_{\cor,\FF_{2}}) \geq 3$.
	Namely, it is verified directly that the set $S = \{0,1,2\}$ of size three is an acyclic set in $H_{\cor,\FF_{2}}$, which implies the claim by Theorem \ref{thm:lowerbound_acyclic}. %

Finally, we note that %
for directed graphs ($k=2$) the combinatorial degeneration method can be used to characterize whether the Shannon capacity is full or not.
\begin{theorem}
	Let $G = (V,E)$ be a directed graph. Then $\Theta(G) =  |V|$ if and only if there is a combinatorial degeneration from $E \cup \{(v,v): v \in V\}$ to $\{(v,v): v\in V\}$. 
\end{theorem}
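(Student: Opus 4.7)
The plan is to prove the iff by identifying the combinatorial degeneration condition with $G$ being a directed acyclic graph (DAG). The constraint $u_1(v) + u_2(v) = 0$ forces $u_2 = -u_1$, so letting $f = u_1$ the remaining condition $u_1(a) + u_2(b) > 0$ for $(a,b) \in E$ with $a \neq b$ becomes $f(a) > f(b)$. An integer-valued $f$ with this property exists iff the edges of $G$ extend to a linear order, iff $G$ is acyclic. So the theorem reduces to showing $\Theta(G) = |V|$ iff $G$ is a DAG.

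For the $(\Leftarrow)$ direction, assuming $G$ is a DAG, I apply Theorem~\ref{thm:combinatorial_degeneration} with $H = G$ and $S = V$ to obtain $\Theta(G) \geq |V|$; together with the trivial bound $\Theta(G) \leq |V|$ from Proposition~\ref{basic_property_Shannon_1}, this yields equality.

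For the $(\Rightarrow)$ direction I prove the contrapositive: if $G$ has a directed cycle, then $\Theta(G) < |V|$. The key is to exploit the cyclic structure. Pick a shortest directed cycle $C = \{v_1, \ldots, v_k\}$ with $v_i \to v_{i+1 \bmod k}$ and define the cyclic shift $\sigma : V \to V$ that rotates $C$ and fixes $V \setminus C$. Because $(v, \sigma(v)) \in \Psi$ for every $v \in V$ (either $v \in C$ and $(v, \sigma(v)) \in E$, or $v \notin C$ and $\sigma(v) = v$), extending $\sigma$ coordinate-wise to $V^n$ makes $(x, \sigma^{\boxtimes n}(x))$ an edge of $G^{\boxtimes n}$ whenever $x$ has at least one coordinate in $C$. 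Hence for any independent $S \subseteq V^n$ we have $S \cap \sigma^{\boxtimes n}(S) \subseteq (V \setminus C)^n$, and since $\sigma^{\boxtimes n}$ acts on strings with at least one $C$-coordinate in orbits of size $k$ that form directed $k$-cycles in $G^{\boxtimes n}$ (no chord edges, by minimality of $C$), each such orbit contributes at most $\lfloor k/2 \rfloor$ elements to $S$.

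This orbit bound alone gives only $|S|^{1/n} \to |V|$, and sharpening it to a strict exponential gap will be the main obstacle. To close that gap the plan is to combine the shift constraint with a method-of-types analysis: by entropy, if $\Theta(G) = |V|$ then any near-maximal independent set is concentrated on strings of near-uniform composition, and within each type class the shift $\sigma$ yields a product of orbit constraints at independently chosen coordinates whose combined multiplicative loss compounds to a factor $c^n$ with $c < |V|$. Equivalently one can first reduce, via induced-subgraph monotonicity of $\Theta$ together with the standard additivity of Shannon capacity over disjoint unions (obtained by a component-pattern type-class decomposition), to the case where $G$ is strongly connected with $|V| \geq 2$, and then invoke the above analysis on this component, where strong connectivity additionally guarantees a rich family of cyclic shifts. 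Turning this plan into a rigorous bound $\alpha(G^{\boxtimes n}) \leq c^n$ with $c<|V|$ is the main technical work, but the structural reduction of the theorem to this one combinatorial inequality is the core of the argument.
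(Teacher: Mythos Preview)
Your reduction of the combinatorial-degeneration condition to ``$G$ is a DAG'' is correct and is exactly what the paper does implicitly via Lemma~\ref{Lem:Acyclicset_degeneration}. Your $(\Leftarrow)$ direction, invoking Theorem~\ref{thm:combinatorial_degeneration} with $S=V$, is also correct and matches the paper.

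The $(\Rightarrow)$ direction, however, is not proved. The paper dispatches it in one line by citing the result of~\cite{BM85} that $\Theta(G)=|V|$ forces $G$ to be acyclic, and then applies Lemma~\ref{Lem:Acyclicset_degeneration}. You instead attempt a direct argument and explicitly concede that ``turning this plan into a rigorous bound $\alpha(G^{\boxtimes n})\le c^n$ with $c<|V|$ is the main technical work''; that work is absent. Two concrete issues with the sketched plan:
\begin{itemize}
\item The ``standard additivity of Shannon capacity over disjoint unions'' you invoke to reduce to the strongly connected case is \emph{not} standard. The easy direction is $\Theta(G\sqcup H)\ge \Theta(G)+\Theta(H)$; you need the reverse inequality $\Theta(G\sqcup H)\le \Theta(G)+\Theta(H)$, which is not a routine type-class argument and is, to my knowledge, not known in general even for undirected graphs, let alone in the directed (Sperner) setting relevant here.
\item Even granting the reduction to a strongly connected $G$ with $|V|\ge 2$, you still owe the inequality $\Theta(G)<|V|$ in that case. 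Your orbit/shift analysis only yields a constant-factor loss $|S|\le \tfrac{\lfloor k/2\rfloor}{k}|V|^n$, which you correctly note gives $|S|^{1/n}\to|V|$. The suggested ``entropy plus compounding of shift losses'' heuristic is not an argument: the shift constraints at different coordinates are not independent events on a product measure, so there is no immediate multiplicative compounding.
\end{itemize}
In short, the structural reformulation is right, but for the hard implication you should either supply a complete proof that a directed graph containing a cycle has $\Theta(G)<|V|$ or, as the paper does, cite~\cite{BM85}.
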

\begin{proof}
	The \emph{if} direction follows directly from Theorem~\ref{thm:combinatorial_degeneration}. For the \emph{only if} direction, it is shown in~\cite{BM85} that if $\Theta(G) = |V|$, then G is an acyclic graph. Then, applying Lemma~\ref{Lem:Acyclicset_degeneration} for $k=2$ proves the claim.    
\end{proof}

\section{Upper bounds from tensor methods, and their limitations}
\label{sec:ind-match}
In this section we discuss methods to obtain upper bounds on the Shannon capacity of directed $k$-uniform hypergraphs and we discuss limitations of these methods for hypergraphs like $H_{\cor,G}$. We will not be discussing all available methods, but rather some of the main ones: subrank and slice rank. The main point is to introduce the induced matching barrier and apply it to the corner problem.

We recall some standard tensor notation and definitions that we will use in the rest of the section.
For $d \in \NN$ let $[d] = \{1,\dots,d\}$. 
Let $\cP([d])$ be the set of all probability distributions on~$[d]$. 
Let $f\in \FF^{d_1} \otimes \dots \otimes \FF^{d_k}$ be a $k$-tensor over a field $\FF$. 
Let $\{e_1,\dots,e_{d_j}\}$ denote the standard basis of $\FF^{d_j}$. We may then write $f$ as
\[
	f = \sum f_{i_1,\dots,i_k}\,e_{i_1}\otimes \dots e_{i_k},
\]
where the sum goes over $i \in [d_1] \times \cdots \times [d_k]$. 
In this way $f$ %
corresponds to a $k$-way array $f \in \FF^{d_1 \times \dots \times d_k}$. For $f\in \FF^{d_1} \otimes \dots \otimes \FF^{d_k}$ and $f' \in  \FF^{d'_1} \otimes \dots \otimes \FF^{d'_k}$, we define the tensor product as $(f \otimes f')_{(i_1, j_1), \dots, (i_k, j_k)} = f_{i_1, \dots, i_k} \cdot f'_{j_1, \dots, j'_k}$.
We define the \emph{support} of $f$ as the set
\[
	\supp(f)\coloneqq \{(i_1,\dots,i_k): f_{i_1,\dots,i_k} \neq 0 \} \subseteq [d_1] \times \dots \times [d_k].
\]
For $r \in \NN$, we call $\left \langle  r \right \rangle\coloneqq \sum_{i=1}^{r}e_{i}^{\otimes k}$ the \emph{unit tensor} of size~$r$.

\subsection{Tensor methods: subrank, slice rank (and more)}
\label{tensorrank_inducedmatching}
We focus on two tensor methods here: subrank and slice rank. We begin by defining subrank, for which we need the notion of
restriction of tensors~\cite{Strassen1987RelativeBC}.
	We say that the tensor $f \in  \FF^{d_1} \otimes \dots \otimes \FF^{d_k}$ \emph{restricts to} $f'\in  \FF^{d'_1} \otimes \dots \otimes \FF^{d'_k}$, and write $f' \leq f$ if there exist linear maps $A^{(i)}: \FF^{d_i} \ra \FF^{d'_i}$ such that $f' = (A^{(1)} \otimes \cdots \otimes A^{(k)})\cdot f$. Written in the standard basis, this corresponds to having for all $i_1 \in [d'_1], \dots, i_k \in [d'_k]$ that
	\begin{align*}
	f'_{i_1,\dots,i_k} = \sum_{j_1 \in[d_1],\dots,j_k \in [d_k]}A^{(1)}_{i_1,j_1} \dots A^{(k)}_{i_k,j_k}f_{j_1,\dots,j_k}.
	\end{align*}

\begin{example}
	Here we see restriction in action in a small example.
	For the tensors
	\begin{align*}
	f  &= e_0 \otimes e_0 \otimes e_0 + e_1 \otimes e_1 \otimes e_1,\quad 
	f' = e_0\otimes(e_0 \otimes e_0 + e_1 \otimes e_1), 
	\end{align*}
	we have $f' \leq f$ by letting $A^{(1)}: e_{0} \mapsto e_0, e_{1} \mapsto e_0$ and letting $A^{(2)}$ and $A^{(3)}$ both be the identity map.
\end{example}
Let $\langle n \rangle = \sum_{i \in [n]} e_i \otimes \cdots \otimes e_i$ be the \emph{unit tensor} of rank $n$.
	Strassen \cite{Strassen1987RelativeBC} defined the \emph{subrank} of $f$ as
	\begin{align*}
	\subrank(f) \coloneqq \max \{r\in \NN: \left \langle  r \right \rangle \leq f\}.
	\end{align*}
Similarly, one may define the ``opposite'' of the subrank as $\trank(f) \coloneqq \min \{r \in \NN: f \leq \left \langle  r \right \rangle\}$, which is called the rank and 
which coincides with the usual notion of tensor rank in terms of a rank-one decomposition. 
For $k = 2$, the subrank and rank of $f$ are the usual matrix rank: $\subrank(f) = \trank(f) = \rank(f)$. When $k \geq 3$, however, there are $f$ for which $\subrank(f) < \trank(f)$. In fact, the tensor rank can be larger than the dimensions $d_1, \dots, d_k$, whereas the subrank cannot exceed $\min_i d_i$.

Applications require us to understand the rate of growth of the subrank as we take tensor product powers of a fixed tensor. Strassen~\cite{Strassen1987RelativeBC} defined the \emph{asymptotic subrank} of $f \in \FF^{d_1} \otimes \dots \otimes \FF^{d_k}$ as 
\begin{align*}
\asympsubrank(f): = \lim_{n \ra \infty} \subrank(f^{\otimes n})^{1/n} \, .
\end{align*}
Since the subrank is super-multiplicative, we can, by Fekete's lemma, replace the limit by a supremum. %

The second tool we focus on is slice rank.
Slice rank was introduced by Tao \cite{Tao16} and developed further in~\cite{TS16} and \cite{Matrix_capset} as a variation on tensor rank to study cap sets and approaches to fast matrix multiplication algorithms. A tensor in $\FF^{d_1}\otimes \dots \otimes \FF^{d_k}$ has \emph{slice rank one} if it has the form $u\otimes v$ for $u\in \FF^{d_i}$ and $v \in \bigotimes_{j \neq i} \FF^{d_j}$ for some $i \in [k]$. The \emph{slice rank} of $f$, denoted by $\slicerank(f)$, is the smallest number $r$ such that $f$ can be written as sum of $r$ slice rank one tensors. Since slice rank is not sub-multiplicative and not super-multiplicative, the limit $\lim_{n \ra \infty}\slicerank(f^{\otimes n})^{1/n}$ might not exist~\cite{CVZ18}. We define
\begin{align*}
\asyslicerank(f) = \limsup\limits_{n \ra \infty} \slicerank(f^{\otimes n})^{1/n} \, .
\end{align*}
Since slice rank is monotone under the restriction order and normalized on $\langle n\rangle$~\cite{Tao16}, it follows that $\subrank(f) \leq \slicerank(f)$ and $\asympsubrank(f) \leq \asyslicerank(f)$.

\subsection{Induced matchings and tightness}
Now we discuss the notion of induced matchings, and we will discuss Strassen's theorem that gives a construction of large induced matchings under a tightness condition.

Let $H = (V,E)$ be a directed $k$-uniform hypergraph with adjacency tensor $A$. 
Let $\Phi_H$ be the support of $A$. 
A subset $D \subseteq \Phi_H$ is called a \emph{matching} if any two distinct elements $a,b \in D$ differ in all $k$ coordinates, that is, $a_i \neq b_i$ for all $i \in [k]$.
We call a matching $D \subseteq \Phi_H$ an \emph{induced matching} if $D = \Phi_H \cap (D_1 \times \dots \times D_k)$, where
$D_i = \{a_i: a \in D\}$ is the projection of $D$ onto the $i$-th coordinate. 
We denote by $\subrank_{\IM}(\Phi_H)$ the maximum size of an induced matching $D \subseteq \Phi_H$. 

For two directed $k$-uniform hypergraphs $G = (V_G,E_G)$ and $H = (V_H,E_H)$, let $\Phi_G$ and $\Phi_H$ be the support of the adjacency tensors of $G$ and $H$, respectively. 
We define the product $\Phi_G \times \Phi_H \subseteq (V_G \times V_H)\times \dots \times (V_G \times V_H)$ by $\Phi_G \times \Phi_H \coloneqq \{((a_1,b_1),\dots,(a_k,b_k)): a\in \Phi_G, b\in \Phi_H \}$. The \emph{asymptotic induced matching number} of $H$ is defined as $\asympsubrank_{\IM}(\Phi_H): = \lim_{n\ra \infty}\subrank_{\IM}(\Phi_H^{\times n})^{1/n} = \sup_{n}\subrank_{\IM}(\Phi_{H}^{\times n})^{1/n}$.

The induced matching number should be thought of as the combinatorial version of the subrank, as follows.
Let $\Phi_H$ be the support of the adjacency tensor $A_H$ of a directed $k$-uniform hypergraph~$H$. 
Then the induced matching number $\subrank_{\IM}(\Phi_H)$ is the largest number $n$ such that~$\left \langle  n \right \rangle$ can be obtained from $A_H$ using a restriction that consists of matrices that have at most one nonzero entry in each row and in each column. Therefore, $\subrank_{\IM}(\Phi_{H}) \leq \subrank(A_H)$.

\begin{lemma}
	\label{matching_bound}
	Let $H$ be a directed $k$-uniform hypergraph and $A_H$ its adjacency tensor with support $\Phi_H = \supp(A_H)$. Then
	\begin{align*}
	\Theta(H) \leq \asympsubrank_{\IM}(\Phi_H) \leq \asympsubrank(A_H).
	\end{align*}
\end{lemma}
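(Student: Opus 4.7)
I will prove the two inequalities separately, starting with the rightmost one since it is purely algebraic, and then deriving the left inequality by translating independent sets into diagonal induced matchings.

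For the inequality $\asympsubrank_{\IM}(\Phi_H) \leq \asympsubrank(A_H)$, the plan is to show at finite level that $\subrank_{\IM}(\Psi) \leq \subrank(f)$ whenever $\Psi \subseteq \supp(f)$. Given an induced matching $D \subseteq \Phi_H$ of size $r$ with projections $D_1,\ldots,D_k$, choose for each $i \in [k]$ a linear map $A^{(i)}$ whose matrix has rows indexed by $[r]$ (via a bijection with the elements of $D_i$, say $D_i = \{a^{(1)}_i,\ldots,a^{(r)}_i\}$ where $a^{(s)} \in D$) and columns indexed by $V(H)$, with $A^{(i)}_{s,j} = 1$ if $j = a_i^{(s)}$ and $0$ otherwise. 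Because $D$ is an \emph{induced} matching, the pullback $(A^{(1)} \otimes \cdots \otimes A^{(k)}) \cdot A_H$ is supported exactly on the diagonal in $[r]^k$ with nonzero entries, hence it equals $\langle r\rangle$ after rescaling rows (which is allowed since restriction permits arbitrary linear maps). This shows $\subrank_{\IM}(\Phi_H) \leq \subrank(A_H)$. Applied to $A_H^{\otimes n}$ whose support is $\Phi_H^{\times n}$, and taking $n$-th roots in the limit, yields the desired asymptotic inequality.

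For $\Theta(H) \leq \asympsubrank_{\IM}(\Phi_H)$, the key observation is that every independent set $S$ of $H$ yields an induced matching of $\Phi_H$ of the same size, namely $D_S := \{(v,\ldots,v) : v \in S\}$. Indeed, $D_S \subseteq \Phi_H$ because the adjacency tensor has $1$'s on the diagonal by definition, and $D_S$ is a matching because distinct diagonal elements differ in all $k$ coordinates. The induced condition $D_S = \Phi_H \cap S^{\times k}$ is exactly the statement that $\Phi_H$ contains no non-diagonal tuple with every coordinate in $S$, which is precisely the independence of $S$ in $H$ (edges of $H$ are the non-diagonal elements of $\Phi_H$). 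Hence $\alpha(H) \leq \subrank_{\IM}(\Phi_H)$.

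To conclude, I will apply this to $H^{\boxtimes n}$ after checking that $A_{H^{\boxtimes n}} = A_H^{\otimes n}$, hence $\Phi_{H^{\boxtimes n}} = \Phi_H^{\times n}$. This identity follows from the three-case definition of the strong product: the entry at position $((v^{(1)}_1,\ldots,v^{(n)}_1),\ldots,(v^{(1)}_k,\ldots,v^{(n)}_k))$ of $A_{H^{\boxtimes n}}$ is $1$ iff for each $j$, the tuple $(v^{(j)}_1,\ldots,v^{(j)}_k)$ is either constant or an edge of $H$, which is exactly the condition that the $j$-th factor $A_H$ contributes a $1$ to the tensor product. Combining this with $\alpha(H^{\boxtimes n}) \leq \subrank_{\IM}(\Phi_H^{\times n})$ and taking $n$-th roots yields $\Theta(H) \leq \asympsubrank_{\IM}(\Phi_H)$. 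I do not anticipate a genuine obstacle; the only point that requires care is the identification of supports under the strong product, which relies critically on the fact that the adjacency tensor is defined to include the diagonal.
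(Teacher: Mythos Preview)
Your proposal is correct and follows essentially the same route as the paper: the paper also derives $\subrank_{\IM}(\Phi_H)\leq\subrank(A_H)$ by noting that an induced matching yields a restriction (via monomial matrices) to $\langle r\rangle$, and proves the left inequality by observing that an independent set $S$ of $H^{\boxtimes n}$ satisfies $\Phi_H^{\times n}\cap S^{\times k}=\{(a,\ldots,a):a\in S\}$. Your extra care in verifying $A_{H^{\boxtimes n}}=A_H^{\otimes n}$ and hence $\Phi_{H^{\boxtimes n}}=\Phi_H^{\times n}$ is welcome (the paper uses this implicitly), and your ``rescaling'' remark is harmless but in fact unnecessary since $A_H$ is a $0$--$1$ tensor.
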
  
\begin{proof}
	We begin with the first inequality.
	Let $S$ be an independent set of $H^{\boxtimes n}$. We have $\Phi_{H}^{\times n} = \supp(A_{H}^{\otimes n})$. Thus $\Phi_{H}^{\times n} \cap (S \times S \cdots \times S) = \{(a,\dots,a): a \in S\}$. This means that $|S| \leq \subrank_{\IM}(\Phi_H^{\times n})$. We conclude $\Theta(H) \leq \asympsubrank_{\IM}(\Phi_{H})$. The second inequality follows from the already established inequality $\subrank_{\IM}(\Phi_{H}) \leq \subrank(A_H)$.
\end{proof}

Next, we discuss \emph{tight sets}, a notion introduced by Strassen \cite{Str91}. 

\begin{definition}[\cite{Str91}, see also \cite{CVZ18}] \label{Tighttensor}
	Let $I_1,\dots,I_k$ be finite sets. We call any subset $\Phi \subseteq I_1 \times \dots \times I_k$ \emph{tight} if there are injective maps $u_i: I_i \ra \Z$ for every $i \in [k]$ such that:
	\begin{align*}
	u_1(a_1) + \cdots + u_k(a_k) = 0 \text{ for every } (a_1,\dots,a_k) \in \Phi.
	\end{align*}
\end{definition}

When $\Phi_H$ is tight, the asymptotic induced matching number is essentially known, and can be described as a simple optimization. To explain the precise formula we recall some definitions.

For any finite set $X$, let $\mathcal{P}(X)$ be the set of all distributions on $X$. For any probability distribution $P \in \mathcal{P}(X)$ the \emph{Shannon entropy} of $P$ is defined as $H(P)\coloneqq-\sum_{x \in X}P(x)\log_2 P(x)$ with $0\log_2 0 = 0$. Given finite sets $I_1,\dots,I_k$ and a probability distribution $P\in \mathcal{P}(I_1 \times \dots \times I_k)$ on the product set $I_1\times \dots \times I_k$ we denote the \emph{marginal distribution} of $P$ on $I_i$ by $P_i$, that is, $P_i(a)=\sum_{x:x_i=a}P(x)$ for any $a\in I_i$. 

\begin{theorem}[\cite{Str91}] \label{thm:tight}
	Let $H$ be a directed $3$-uniform hypergraph. If $\Phi_H$ is tight, then 
	\begin{align*}
	\asympsubrank_{\IM}(\Phi_{H}) = \max_{P \in \mathcal{P}(\Phi_H)} \min_{i\in[3]} 2^{H(P_i)}.
	\end{align*}
\end{theorem}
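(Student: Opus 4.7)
The stated equality splits into two inequalities which I would prove by quite different means: the $\leq$ direction is a soft entropy argument that does not use tightness at all, while the $\geq$ direction is the genuine content of Strassen's theorem and is where tightness enters crucially.

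\textbf{Upper bound.} For $\asympsubrank_{\IM}(\Phi_H) \leq \max_{P \in \mathcal{P}(\Phi_H)} \min_{i \in [3]} 2^{H(P_i)}$, let $D \subseteq \Phi_H^{\times n}$ be any induced matching and let $U = (U^{(1)},\ldots,U^{(n)}) \in \Phi_H^n$ be uniform on $D$. Because $D$ is a matching, the projection $D \to D_i$ to the $i$-th coordinate is a bijection, so $U_i = (\pi_i(U^{(1)}),\ldots,\pi_i(U^{(n)}))$ is uniform on $D_i$ with $H(U_i) = \log|D|$. Writing $P^{(j)} \in \mathcal{P}(\Phi_H)$ for the distribution of $U^{(j)}$, $P_i^{(j)} \in \mathcal{P}(I_i)$ for its $i$-th marginal, and $P = \tfrac{1}{n}\sum_{j=1}^{n} P^{(j)}$ for the single-letter average, subadditivity of joint entropy followed by concavity of Shannon entropy gives
\[
\log|D| \;=\; H(U_i) \;\leq\; \sum_{j=1}^{n} H(P_i^{(j)}) \;\leq\; n\,H(P_i)
\]
for every $i \in [3]$. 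Taking the minimum over $i$, then $n$-th roots, and finally a supremum over $P \in \mathcal{P}(\Phi_H)$ gives the stated inequality. Note that neither the tightness of $\Phi_H$ nor any special structure beyond $D$ being a matching is needed here.

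\textbf{Lower bound.} For $\asympsubrank_{\IM}(\Phi_H) \geq \max_{P}\min_{i} 2^{H(P_i)}$, fix a $P \in \mathcal{P}(\Phi_H)$ with rational entries (absorbing the rationality restriction into an eventual $\poly(n)$ loss) and work inside the type class $T_n(P) \subseteq \Phi_H^n$, which has size $2^{nH(P)-o(n)}$. Each coordinate projection $\pi_i:T_n(P) \to I_i^n$ lands inside the marginal type class $T_n(P_i)$ of size $2^{nH(P_i)-o(n)}$. The plan, following Strassen's tight-set construction, is to extend the injective tightness maps $u_i: I_i \to \mathbb{Z}$ to hashes $U_i(y) = \sum_j u_i(y_j)$ on $I_i^n$, enrich them with an auxiliary Behrend/Salem--Spencer-type family of higher-moment hashes taking only $\poly(n)$ values on each $T_n(P_i)$, and then pigeonhole over the joint hash value of the three coordinate projections to select a sub-type-class of size at least $|T_n(P)|/\poly(n)$. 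A final thinning within this sub-type-class enforces injectivity of each $\pi_i$. The tightness identity $u_1(a_1)+u_2(a_2)+u_3(a_3)=0$ on $\Phi_H$ enters decisively here: the coordinate-wise linear constraint it imposes means that the three coordinate hashes of any element of $\Phi_H^n$ are linearly dependent, so the joint hash constraint costs only two effective levels instead of three (keeping the type class essentially intact), and, combined with the rigidity of the Behrend hashes, it forces any ``foreign'' triple $(y_1,y_2,y_3) \in \Phi_H^n$ whose projections lie in the already-selected projections to coincide with a selected element. This is precisely the induced matching property. Taking $n$-th roots and maximizing over $P$ completes the argument.

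\textbf{Main obstacle.} The upper bound is essentially routine once one recognizes the uniform distribution on the matching as the right object. The hard part is the Behrend-style hash design for the lower bound: the auxiliary hashes must simultaneously (i) take only $\poly(n)$ values so that pigeonhole costs are subexponential, (ii) be sufficiently rigid---in combination with the tightness relation---to certify the induced property, and (iii) be compatible with the type-class structure so that a $1/\poly(n)$ fraction of $T_n(P)$ survives. Without the linear relation $u_1+u_2+u_3=0$ supplied by tightness, the three coordinate-wise hash constraints would be essentially independent and would shrink the type class by a factor $2^{\Omega(n)}$, destroying the desired exponential rate. This is also the step where the proof mirrors, at a structural level, the laser-method construction that originally motivated Strassen's introduction of tightness.
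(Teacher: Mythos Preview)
The paper does not contain a proof of this statement: Theorem~\ref{thm:tight} is quoted from Strassen~\cite{Str91} and used as a black box (the paper only remarks afterwards on its implications and on the higher-order extension). There is therefore nothing in the paper to compare your proposal against.

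That said, your plan is a reasonable outline of how the result is actually proved. The upper bound you give is correct and standard; note only that the distribution $P$ you produce need not lie in $\mathcal{P}(\Phi_H)$ with rational entries, but compactness of $\mathcal{P}(\Phi_H)$ and continuity of $P\mapsto\min_i H(P_i)$ make the final supremum harmless. For the lower bound, your high-level description---restrict to a type class $T_n(P)$, use the linear relation $u_1+u_2+u_3=0$ on $\Phi_H$ to couple the three coordinate hashes, then pigeonhole and thin---captures the mechanism of Strassen's argument. Two cautions if you try to fill this in: first, the injectivity of the $u_i$ is what lets a single linear hash $U_i$ distinguish elements of $I_i$, but on $I_i^n$ the map $y\mapsto\sum_j u_i(y_j)$ only sees the type of $y$, so the ``auxiliary Behrend/Salem--Spencer hashes'' you allude to are doing real work and need to be specified (in Strassen's proof this is handled via a careful degeneration argument rather than literal Behrend hashes); second, the ``final thinning to enforce injectivity of each $\pi_i$'' is where the rate $\min_i H(P_i)$, rather than $H(P)$, actually appears, and you should make explicit that the thinning costs a factor $2^{n(H(P)-\min_i H(P_i))}$ and no more. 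As written your sketch is a credible roadmap but not yet a proof.
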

In particular, Theorem~\ref{thm:tight} implies that, for any directed 3-uniform hypergraph $H = (V,E)$ if there is a distribution $P$ on~$\Phi_H$ such that every marginal distribution $P_i$ is uniform on $V$, then $\Phi_H$ has asymptotically maximal induced matchings.

Note that Theorem~\ref{thm:tight} only applies to directed $k$-uniform hypergraphs for $k=3$. For the higher-order case $k>3$ an extension of the lower bound of Theorem~\ref{thm:tight} was proven in~\cite[Theorem~1.2.4]{DBLP:journals/cc/ChristandlVZ19}.

\subsection{The corner hypergraph is tight}
We will now apply Theorem~\ref{thm:tight} to the corner problem. First we see how the tightness property is satisfied by the corner problem by a simple construction.
\begin{theorem}\label{prop:corner-tight}
	For any finite Abelian group $(G,+)$, let $\Phi_{H_{\cor, G}}$ be the support of the adjacency tensor of $H_{\cor,G}$. Then the set $\Phi_{H_{\cor, G}}$ is tight. 
\end{theorem}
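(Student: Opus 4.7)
The plan is to exhibit injective maps $u_1, u_2, u_3 : G \times G \to \Z$ satisfying $u_1(a_1) + u_2(a_2) + u_3(a_3) = 0$ for every $(a_1, a_2, a_3) \in \Phi_{H_{\cor,G}}$. First I would unpack what $\Phi_{H_{\cor,G}}$ is: it consists of the diagonal triples $((x,y),(x,y),(x,y))$ and the corner triples $((x,y),(x+\lambda,y),(x,y+\lambda))$, so the tightness requirement reduces to the single family of identities
\begin{equation*}
u_1(x, y) + u_2(x + \lambda, y) + u_3(x, y + \lambda) = 0 \qquad \text{for all } x, y, \lambda \in G,
\end{equation*}
with the case $\lambda = 0$ encoding the diagonal condition.

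Next I would introduce the separable ansatz
\begin{equation*}
u_1(a, b) = s(a) + p(b), \qquad u_2(a, b) = -p(b) - q(a + b), \qquad u_3(a, b) = -s(a) + q(a + b),
\end{equation*}
for auxiliary functions $s, p, q : G \to \Z$ to be chosen later. Substituting into the identity above, the $s$-terms cancel between $u_1$ and $u_3$, the $p$-terms between $u_1$ and $u_2$, and the $q$-terms between $u_2$ and $u_3$, because $(x + \lambda) + y = x + (y + \lambda)$ by commutativity of $G$. Thus the sum-to-zero condition is satisfied identically, for \emph{any} choice of $s, p, q$.

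What remains is to pick $s, p, q$ so that each $u_i$ is injective on $G \times G$. I would fix any bijection $\varphi : G \to \{0, 1, \dots, m-1\}$ with $m = |G|$ and set $s = m \cdot \varphi$, $p = \varphi$, and $q = m^3 \cdot \varphi$. A standard mixed-radix (digit-extraction) argument, leveraging the separation of scales $1$, $m$, and $m^3$, then shows that $u_1$ determines the pair $(\varphi(a), \varphi(b))$, that $u_2$ determines $(\varphi(b), \varphi(a+b))$, and that $u_3$ determines $(\varphi(a), \varphi(a+b))$; in each case this is enough to recover $(a,b)$, so all three $u_i$ are injective. The main obstacle I anticipate is really just finding the right ansatz; once the form above is on the table, both the identical cancellation and the injectivity verification are routine book-keeping.
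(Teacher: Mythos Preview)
Your proposal is correct and follows essentially the same approach as the paper: both construct the $u_i$ as integer linear combinations of $\varphi$ applied to the first coordinate, the second coordinate, and their sum, with separated powers of $m$ guaranteeing injectivity. The only cosmetic differences are that you motivate the construction through the separable ansatz $s,p,q$ (which makes the cancellation transparent) and you use $m^3$ for $q$ where the paper gets by with $m^2$; neither affects the argument.
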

\begin{proof}
	Let $m = |G|$ and $\phi$ be a bijection between $G$ and $\{0,1,\dots, m-1\}$. We define
	\begin{align*}
	u_1((g_1,g_2)) &= \phi(g_1) + m \phi(g_2) \\
	u_2((g_1,g_2)) &= m^2 \phi(g_1+g_2) - m \phi(g_2) \\
	u_3((g_1,g_2)) &= -m^2 \phi(g_1+g_2) - \phi(g_1) \ .
	\end{align*}
	It is easy to check that the maps $u_1, u_2, u_3$ are injective and that for every triple of pairs $(g_1,g_2), (g_1+ \lambda, g_2), (g_1, g_2+\lambda)$, it holds that
	\begin{align*}
	u_1((g_1,g_2)) + u_2((g_1+ \lambda, g_2)) + u_3((g_1,  g_2+\lambda)) = 0 \, .
	\end{align*}
	This proves the claim.
\end{proof}

As a consequence of Theorem~\ref{prop:corner-tight} and Theorem~\ref{thm:tight}, we find almost directly that the asymptotic induced matching number of the corner hypergraph is maximal:

\begin{corollary}\label{cor:cor-tight}
	For any group $G$, $\asympsubrank_{\IM}(H_{\cor, G}) = |G|^2$.	
\end{corollary}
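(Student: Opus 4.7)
The plan is to combine the tightness statement of Theorem~\ref{prop:corner-tight} with Strassen's formula from Theorem~\ref{thm:tight}, showing matching upper and lower bounds of $|G|^2$ on the asymptotic induced matching number.

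For the upper bound, I would use a trivial counting argument: any induced matching $D \subseteq \Phi_{H_{\cor,G}}^{\times n}$ projects injectively to each coordinate set $V(H_{\cor,G}^{\boxtimes n})$, which has size $|G|^{2n}$. Hence $\subrank_{\IM}(\Phi_{H_{\cor,G}}^{\times n}) \leq |G|^{2n}$ for every $n$, so $\asympsubrank_{\IM}(\Phi_{H_{\cor,G}}) \leq |G|^2$.

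For the lower bound, Theorem~\ref{prop:corner-tight} tells us $\Phi_{H_{\cor,G}}$ is tight, so Theorem~\ref{thm:tight} applies and gives
\begin{equation*}
    \asympsubrank_{\IM}(\Phi_{H_{\cor,G}}) = \max_{P \in \mathcal{P}(\Phi_{H_{\cor,G}})} \min_{i \in [3]} 2^{H(P_i)}.
\end{equation*}
Since each marginal $P_i$ lives on a set of size $|G|^2$, we have $2^{H(P_i)} \leq |G|^2$, and to reach equality it suffices to exhibit one distribution $P$ whose three marginals are all uniform on $G \times G$. The natural choice is the uniform distribution on the $|G|^3$ triples
\begin{equation*}
    \bigl((g_1,g_2),\,(g_1+\lambda,g_2),\,(g_1,g_2+\lambda)\bigr),\qquad g_1,g_2,\lambda \in G,
\end{equation*}
where we allow $\lambda = 0$ so that these triples all lie in $\Phi_{H_{\cor,G}}$. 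For any fixed value $(h_1,h_2) \in G \times G$, each of the three coordinate maps (sending $(g_1,g_2,\lambda)$ to $(g_1,g_2)$, to $(g_1+\lambda,g_2)$, and to $(g_1,g_2+\lambda)$) is a bijection from $G^3$ to the set of triples whose relevant coordinate equals $(h_1,h_2)$; equivalently, for each marginal, the preimage of any point has the same cardinality $|G|$. Hence all three marginals are uniform on $G \times G$, and $2^{H(P_i)} = |G|^2$ for $i \in [3]$.

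Combining the two bounds yields $\asympsubrank_{\IM}(H_{\cor,G}) = |G|^2$, as claimed. No step in this plan looks delicate: the only place one must be careful is verifying that the three marginals of the proposed uniform distribution are genuinely uniform, which reduces to the elementary observation that the affine maps $(g_1,g_2,\lambda) \mapsto (g_1+\lambda,g_2)$ and $(g_1,g_2,\lambda) \mapsto (g_1,g_2+\lambda)$ are group automorphisms of $G^3$ (after reparametrization), so uniformity is preserved.
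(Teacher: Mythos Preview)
Your proposal is correct and follows essentially the same route as the paper: invoke tightness (Theorem~\ref{prop:corner-tight}), apply Strassen's formula (Theorem~\ref{thm:tight}), and plug in the uniform distribution on $\Phi_{H_{\cor,G}}$, checking that all three marginals are uniform on $G\times G$. Your separate counting upper bound is harmless but redundant, since Theorem~\ref{thm:tight} already states an equality; otherwise your argument is just a more explicit version of the paper's one-paragraph proof.
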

\begin{proof}
	We know that $\Phi_{H_{\cor, G}}$ is tight by Theorem~\ref{prop:corner-tight}, and so we may apply Theorem~\ref{thm:tight}. We take $P \in \mathcal{P}(\Phi_{H_{\cor, G}})$ to be the uniform probability distribution. It then suffices to observe that every marginal distribution $P_i$ is also uniform to obtain the claim.
\end{proof}

In particular, Corollary~\ref{cor:cor-tight} implies that no better upper bound on $\Theta(H_{\cor, G})$ can be obtained via methods that also upper bound the asymptotic induced matching number $\asympsubrank_{\IM}(H_{\cor, G})$. Such methods include the slice rank, the analytic rank, the geometric rank and the G-stable rank.

\bibliography{larger-corner-free-sets-itcs-arxiv}

\begin{thebibliography}{10}

\bibitem{ACFN15}
Anil Ada, Arkadev Chattopadhyay, Omar Fawzi, and Phuong Nguyen.
\newblock The {NOF} multiparty communication complexity of composed functions.
\newblock {\em Computational Complexity}, 24(3):645--694, 2015.
\newblock \href {https://doi.org/10.1007/s00037-013-0078-4}
  {\path{doi:10.1007/s00037-013-0078-4}}.

\bibitem{DBLP:conf/innovations/AlmanW18}
Josh Alman and Virginia~Vassilevska Williams.
\newblock Further limitations of the known approaches for matrix
  multiplication.
\newblock In {\em 9th Innovations in Theoretical Computer Science Conference
  (ITCS 2018)}, pages 25:1--25:15, 2018.
\newblock \href {https://doi.org/10.4230/LIPIcs.ITCS.2018.25}
  {\path{doi:10.4230/LIPIcs.ITCS.2018.25}}.

\bibitem{DBLP:conf/soda/AlmanW21}
Josh Alman and Virginia~Vassilevska Williams.
\newblock A refined laser method and faster matrix multiplication.
\newblock In {\em Proceedings of the 2021 {ACM-SIAM} Symposium on Discrete
  Algorithms (SODA 2021)}, pages 522--539, 2021.
\newblock \href {https://doi.org/10.1137/1.9781611976465.32}
  {\path{doi:10.1137/1.9781611976465.32}}.

\bibitem{Alon2013}
Noga Alon, Amir Shpilka, and Christopher Umans.
\newblock On sunflowers and matrix multiplication.
\newblock {\em computational complexity}, 22(2):219--243, 2013.
\newblock \href {https://doi.org/10.1007/s00037-013-0060-1}
  {\path{doi:10.1007/s00037-013-0060-1}}.

\bibitem{alon2020algorithmic}
Noga Alon and Adi Shraibman.
\newblock Algorithmic number on the forehead protocols yielding dense
  {Ruzsa-Szemer\'{e}di} graphs and hypergraphs, 2020.
\newblock \href {http://arxiv.org/abs/2001.00387} {\path{arXiv:2001.00387}}.

\bibitem{BGKL04}
L\'{a}szl\'{o} Babai, Anna G\'{a}l, Peter~G. Kimmel, and Satyanarayana~V.
  Lokam.
\newblock Communication complexity of simultaneous messages.
\newblock {\em SIAM J. Comput.}, 33(1):137--166, 2004.
\newblock \href {https://doi.org/10.1137/S0097539700375944}
  {\path{doi:10.1137/S0097539700375944}}.

\bibitem{bateman2012new}
Michael Bateman and Nets~Hawk Katz.
\newblock New bounds on cap sets.
\newblock {\em J. Amer. Math. Soc.}, 25(2):585--613, 2012.
\newblock \href {https://doi.org/10.1090/S0894-0347-2011-00725-X}
  {\path{doi:10.1090/S0894-0347-2011-00725-X}}.

\bibitem{beame2007separating}
Paul Beame, Matei David, Toniann Pitassi, and Philipp Woelfel.
\newblock Separating deterministic from nondeterministic {NOF} multiparty
  communication complexity.
\newblock In {\em International Colloquium on Automata, Languages, and
  Programming (ICALP~2007)}, pages 134--145, 2007.
\newblock \href {https://doi.org/10.1007/978-3-540-73420-8_14}
  {\path{doi:10.1007/978-3-540-73420-8_14}}.

\bibitem{beigel2006multiparty}
Richard Beigel, William Gasarch, and James Glenn.
\newblock The multiparty communication complexity of {Exact-{T}}: Improved
  bounds and new problems.
\newblock In {\em International Symposium on Mathematical Foundations of
  Computer Science (MFCS 2006)}, pages 146--156, 2006.
\newblock \href {https://doi.org/10.1007/11821069_13}
  {\path{doi:10.1007/11821069_13}}.

\bibitem{Bei94}
Richard Beigel and Jun Tarui.
\newblock On {ACC}.
\newblock {\em Computational Complexity}, 4(4):350--366, 1994.
\newblock \href {https://doi.org/10.1007/BF01263423}
  {\path{doi:10.1007/BF01263423}}.

\bibitem{BM85}
E.~Bidamon and H.~Meyniel.
\newblock On the {S}hannon capacity of a directed graph.
\newblock {\em European J. Combin.}, 6(4):289--290, 1985.
\newblock \href {https://doi.org/10.1016/S0195-6698(85)80042-1}
  {\path{doi:10.1016/S0195-6698(85)80042-1}}.

\bibitem{Matrix_capset}
Jonah Blasiak, Thomas Church, Henry Cohn, Joshua~A. Grochow, Eric Naslund,
  William~F. Sawin, and Chris Umans.
\newblock On cap sets and the group-theoretic approach to matrix
  multiplication.
\newblock {\em Discrete Anal.}, 2017.
\newblock \href {http://arxiv.org/abs/1605.06702} {\path{arXiv:1605.06702}},
  \href {https://doi.org/10.19086/da.1245} {\path{doi:10.19086/da.1245}}.

\bibitem{Briet2019SubspacesOT}
Jop Bri\"et.
\newblock Subspaces of tensors with high analytic rank, 2019.
\newblock \href {http://arxiv.org/abs/1908.04169} {\path{arXiv:1908.04169}}.

\bibitem{PMM97}
Peter B\"{u}rgisser, Michael Clausen, and M.~Amin Shokrollahi.
\newblock {\em Algebraic complexity theory}, volume 315 of {\em Grundlehren der
  Mathematischen Wissenschaften}.
\newblock Springer-Verlag, Berlin, 1997.
\newblock \href {https://doi.org/10.1007/978-3-662-03338-8}
  {\path{doi:10.1007/978-3-662-03338-8}}.

\bibitem{CFL83}
Ashok~K. Chandra, Merrick~L. Furst, and Richard~J. Lipton.
\newblock Multi-party protocols.
\newblock In {\em Proceedings of the 15th Annual ACM Symposium on Theory of
  Computing (STOC~1983)}, pages 94--99, 1983.
\newblock \href {https://doi.org/10.1145/800061.808737}
  {\path{doi:10.1145/800061.808737}}.

\bibitem{CS14}
Arkadev Chattopadhyay and Michael~E. Saks.
\newblock The power of super-logarithmic number of players.
\newblock In {\em Approximation, Randomization, and Combinatorial Optimization.
  Algorithms and Techniques (APPROX/RANDOM 2014)}, volume~28, pages 596--603,
  2014.
\newblock \href {https://doi.org/10.4230/LIPIcs.APPROX-RANDOM.2014.596}
  {\path{doi:10.4230/LIPIcs.APPROX-RANDOM.2014.596}}.

\bibitem{christandl2021communication}
Matthias Christandl, Omar Fawzi, Hoang Ta, and Jeroen Zuiddam.
\newblock Communication complexity, corner-free sets and the symmetric subrank
  of tensors, 2021.
\newblock \href {http://arxiv.org/abs/2104.01130v1}
  {\path{arXiv:2104.01130v1}}.

\bibitem{CVZ18}
Matthias Christandl, P{\'{e}}ter Vrana, and Jeroen Zuiddam.
\newblock Universal points in the asymptotic spectrum of tensors.
\newblock In {\em Proceedings of the 50th Annual {ACM} {SIGACT} Symposium on
  Theory of Computing (STOC~2018)}, pages 289--296, 2018.
\newblock \href {http://arxiv.org/abs/1709.07851} {\path{arXiv:1709.07851}},
  \href {https://doi.org/10.1145/3188745.3188766}
  {\path{doi:10.1145/3188745.3188766}}.

\bibitem{DBLP:journals/cc/ChristandlVZ19}
Matthias Christandl, P{\'{e}}ter Vrana, and Jeroen Zuiddam.
\newblock Asymptotic tensor rank of graph tensors: beyond matrix
  multiplication.
\newblock {\em Comput. Complex.}, 28(1):57--111, 2019.
\newblock \href {http://arxiv.org/abs/1609.07476} {\path{arXiv:1609.07476}},
  \href {https://doi.org/10.1007/s00037-018-0172-8}
  {\path{doi:10.1007/s00037-018-0172-8}}.

\bibitem{1530730}
H.~{Cohn}, R.~{Kleinberg}, B.~{Szegedy}, and C.~{Umans}.
\newblock Group-theoretic algorithms for matrix multiplication.
\newblock In {\em Proceedings of the 46th Annual IEEE Symposium on Foundations
  of Computer Science (FOCS~2005)}, pages 379--388, 2005.
\newblock \href {https://doi.org/10.1109/SFCS.2005.39}
  {\path{doi:10.1109/SFCS.2005.39}}.

\bibitem{coppersmith1987matrix}
Don Coppersmith and Shmuel Winograd.
\newblock Matrix multiplication via arithmetic progressions.
\newblock In {\em Proceedings of the nineteenth annual ACM symposium on Theory
  of computing}, pages 1--6. ACM, 1987.

\bibitem{croot2017progression}
Ernie Croot, Vsevolod~F. Lev, and P{\'e}ter~P{\'a}l Pach.
\newblock Progression-free sets in $\mathbb{Z}_4^n$ are exponentially small.
\newblock {\em Annals of Mathematics}, pages 331--337, 2017.
\newblock \href {https://doi.org/10.4007/annals.2017.185.1.7}
  {\path{doi:10.4007/annals.2017.185.1.7}}.

\bibitem{derksen2020gstable}
Harm Derksen.
\newblock The {G}-stable rank for tensors, 2020.
\newblock \href {http://arxiv.org/abs/2002.08435} {\path{arXiv:2002.08435}}.

\bibitem{Ed04}
Yves Edel.
\newblock Extensions of generalized product caps.
\newblock {\em Designs, Codes and Cryptography}, 31(1):5--14, 2004.
\newblock \href {https://doi.org/10.1023/A:1027365901231}
  {\path{doi:10.1023/A:1027365901231}}.

\bibitem{EG17}
Jordan~S. Ellenberg and Dion Gijswijt.
\newblock On large subsets of {$\mathbb{F}^n_q$} with no three-term arithmetic
  progression.
\newblock {\em Ann. of Math.}, 185(1):339--343, 2017.
\newblock \href {https://doi.org/10.4007/annals.2017.185.1.8}
  {\path{doi:10.4007/annals.2017.185.1.8}}.

\bibitem{Gargano1993}
L.~Gargano, J.~K{\"o}rner, and U.~Vaccaro.
\newblock {Sperner} capacities.
\newblock {\em Graphs and Combinatorics}, 9(1):31--46, 1993.
\newblock \href {https://doi.org/10.1007/BF01195325}
  {\path{doi:10.1007/BF01195325}}.

\bibitem{GARGANO90}
L.~Gargano, J.~Körner, and U.~Vaccaro.
\newblock Qualitative independence and sperner problems for directed graphs.
\newblock {\em Journal of Combinatorial Theory, Series A}, 61(2):173--192,
  1992.
\newblock \href {https://doi.org/10.1016/0097-3165(92)90016-N}
  {\path{doi:10.1016/0097-3165(92)90016-N}}.

\bibitem{gowers2011linear}
W.~T. Gowers and J.~Wolf.
\newblock Linear forms and higher-degree uniformity for functions on
  {$\mathbb{F}^n_p$}.
\newblock {\em Geom. Funct. Anal.}, 21(1):36--69, 2011.
\newblock \href {https://doi.org/10.1007/s00039-010-0106-3}
  {\path{doi:10.1007/s00039-010-0106-3}}.

\bibitem{Gro94}
Vince Grolmusz.
\newblock The {BNS} lower bound for multi-party protocols in nearly optimal.
\newblock {\em Inf. Comput.}, 112(1):51--54, 1994.
\newblock \href {https://doi.org/10.1006/inco.1994.1051}
  {\path{doi:10.1006/inco.1994.1051}}.

\bibitem{HW79}
Willem~H. Haemers.
\newblock On some problems of lov{\'{a}}sz concerning the shannon capacity of a
  graph.
\newblock {\em {IEEE} Trans. Inf. Theory}, 25(2):231--232, 1979.
\newblock \href {https://doi.org/10.1109/TIT.1979.1056027}
  {\path{doi:10.1109/TIT.1979.1056027}}.

\bibitem{Sawin}
Robert Kleinberg, William Sawin, and David Speyer.
\newblock The growth rate of tri-colored sum-free sets.
\newblock {\em Discrete Anal.}, 2018.
\newblock \href {http://arxiv.org/abs/1607.00047} {\path{arXiv:1607.00047}},
  \href {https://doi.org/10.19086/da.3734} {\path{doi:10.19086/da.3734}}.

\bibitem{kopparty_et_al}
Swastik Kopparty, Guy Moshkovitz, and Jeroen Zuiddam.
\newblock {Geometric Rank of Tensors and Subrank of Matrix Multiplication}.
\newblock In {\em Proceedings of the 35th Computational Complexity Conference
  (CCC 2020)}, pages 35:1--35:21, 2020.
\newblock \href {http://arxiv.org/abs/2002.09472} {\path{arXiv:2002.09472}},
  \href {https://doi.org/10.4230/LIPIcs.CCC.2020.35}
  {\path{doi:10.4230/LIPIcs.CCC.2020.35}}.

\bibitem{Lacey_07}
Michael Lacey and William McClain.
\newblock On an argument of {Shkredov} on two-dimensional corners.
\newblock {\em Online Journal of Analytic Combinatorics}, 2007.
\newblock URL:
  \url{https://hosted.math.rochester.edu/ojac/vol2/Lacey_McClain_2007.pdf},
  \href {http://arxiv.org/abs/math/0510491} {\path{arXiv:math/0510491}}.

\bibitem{LPS18}
Nati Linial, Toni Pitassi, and Adi Shraibman.
\newblock On the communication complexity of high-dimensional permutations.
\newblock {\em In 10th Innovations in Theoretical Computer Science Conference
  (ITCS 2019)}, 124, page 54:1–54:20, 2019.
\newblock \href {http://arxiv.org/abs/1706.02207} {\path{arXiv:1706.02207}},
  \href {https://doi.org/10.4230/LIPIcs.ITCS.2019.54}
  {\path{doi:10.4230/LIPIcs.ITCS.2019.54}}.

\bibitem{DBLP:conf/coco/LinialS21}
Nati Linial and Adi Shraibman.
\newblock An improved protocol for the exactly-n problem.
\newblock In Valentine Kabanets, editor, {\em 36th Computational Complexity
  Conference ({CCC} 2021)}, volume 200, pages 2:1--2:8, 2021.
\newblock \href {https://doi.org/10.4230/LIPIcs.CCC.2021.2}
  {\path{doi:10.4230/LIPIcs.CCC.2021.2}}.

\bibitem{linial2021larger}
Nati Linial and Adi Shraibman.
\newblock Larger corner-free sets from better {NOF} {E}xactly-$n$ protocols,
  2021.
\newblock \href {http://arxiv.org/abs/2102.00421} {\path{arXiv:2102.00421}}.

\bibitem{Lo79}
L{\'{a}}szl{\'{o}} Lov{\'{a}}sz.
\newblock On the shannon capacity of a graph.
\newblock {\em {IEEE} Trans. Inf. Theory}, 25(1):1--7, 1979.
\newblock \href {https://doi.org/10.1109/TIT.1979.1055985}
  {\path{doi:10.1109/TIT.1979.1055985}}.

\bibitem{Lovett}
Shachar Lovett.
\newblock The analytic rank of tensors and its applications.
\newblock {\em Discrete Anal.}, 2019.
\newblock \href {http://arxiv.org/abs/1806.09179} {\path{arXiv:1806.09179}}.

\bibitem{MESHULAM1995168}
Roy Meshulam.
\newblock On subsets of finite abelian groups with no 3-term arithmetic
  progressions.
\newblock {\em Journal of Combinatorial Theory, Series A}, 71(1):168 -- 172,
  1995.
\newblock \href {https://doi.org/10.1016/0097-3165(95)90024-1}
  {\path{doi:10.1016/0097-3165(95)90024-1}}.

\bibitem{Naslund}
Eric Naslund.
\newblock The partition rank of a tensor and $k$-right corners in
  $\mathbb{F}_{q}^{n}$.
\newblock {\em Journal of Combinatorial Theory, Series A}, 174:105190, 2020.
\newblock \href {https://doi.org/10.1016/j.jcta.2019.105190}
  {\path{doi:10.1016/j.jcta.2019.105190}}.

\bibitem{naslund2017upper}
Eric Naslund and Will Sawin.
\newblock Upper bounds for sunflower-free sets.
\newblock {\em Forum of Mathematics, Sigma}, 5, 2017.
\newblock \href {https://doi.org/10.1017/fms.2017.12}
  {\path{doi:10.1017/fms.2017.12}}.

\bibitem{sha56}
Claude~E. Shannon.
\newblock The zero error capacity of a noisy channel.
\newblock {\em {IRE} Trans. Inf. Theory}, 2(3):8--19, 1956.
\newblock \href {https://doi.org/10.1109/TIT.1956.1056798}
  {\path{doi:10.1109/TIT.1956.1056798}}.

\bibitem{Shkredov2006OnAG}
I.~D. Shkredov.
\newblock On a generalization of {S}zemer\'{e}di's theorem.
\newblock {\em Proc. London Math. Soc.}, 93(3):723--760, 2006.
\newblock \href {https://doi.org/10.1017/S0024611506015991}
  {\path{doi:10.1017/S0024611506015991}}.

\bibitem{Shkredov_2006}
I.~D. Shkredov.
\newblock On a problem of {Gowers}.
\newblock {\em Izvestiya: Mathematics}, 70(2):385--425, 2006.
\newblock \href {https://doi.org/10.1070/im2006v070n02abeh002316}
  {\path{doi:10.1070/im2006v070n02abeh002316}}.

\bibitem{SHRAIBMAN201844}
Adi Shraibman.
\newblock A note on multiparty communication complexity and the
  {Hales–Jewett} theorem.
\newblock {\em Information Processing Letters}, 139:44--48, 2018.
\newblock \href {https://doi.org/10.1016/j.ipl.2018.07.002}
  {\path{doi:10.1016/j.ipl.2018.07.002}}.

\bibitem{Strassen1987RelativeBC}
Volker Strassen.
\newblock Relative bilinear complexity and matrix multiplication.
\newblock {\em J. Reine Angew. Math.}, 375/376:406--443, 1987.
\newblock \href {https://doi.org/10.1515/crll.1987.375-376.406}
  {\path{doi:10.1515/crll.1987.375-376.406}}.

\bibitem{Str91}
Volker Strassen.
\newblock Degeneration and complexity of bilinear maps: some asymptotic
  spectra.
\newblock {\em J. Reine Angew. Math}, 413:127–180, 1991.
\newblock \href {https://doi.org/10.1515/crll.1991.413.127}
  {\path{doi:10.1515/crll.1991.413.127}}.

\bibitem{Tao16}
Terence Tao.
\newblock A symmetric formulation of the {Croot-Lev-Pach-Ellenberg-Gijswijt}
  capset bound.
\newblock {\em Tao's blog post}, 2016.
\newblock URL: \url{https://terrytao.wordpress.com}.

\bibitem{TS16}
Terence Tao and Will Sawin.
\newblock Notes on the “slice rank” of tensors.
\newblock {\em Tao's blog post}, 2016.
\newblock URL:
  \url{https://terrytao.wordpress.com/2016/08/24/notes-on-the-slice-rank-of-tensors/}.

\bibitem{DBLP:journals/sigact/Viola19}
Emanuele Viola.
\newblock Guest column: Non-abelian combinatorics and communication complexity.
\newblock {\em {SIGACT} News}, 50(3):52--74, 2019.
\newblock \href {https://doi.org/10.1145/3364626.3364637}
  {\path{doi:10.1145/3364626.3364637}}.

\bibitem{Yao79}
Andrew Chi-Chih Yao.
\newblock Some complexity questions related to distributive computing.
\newblock In {\em Proceedings of the Eleventh Annual ACM Symposium on Theory of
  Computing (STOC 1979)}, pages 209--213, 1979.
\newblock \href {https://doi.org/10.1145/800135.804414}
  {\path{doi:10.1145/800135.804414}}.

\bibitem{zhao}
Yufei Zhao.
\newblock Graph theory and additive combinatorics.
\newblock Lecture Notes, 2019.
\newblock URL: \url{https://yufeizhao.com/gtac/}.

\end{thebibliography}

\end{document}